\documentclass[numsec,webpdf,modern,medium,namedate]{oup-authoring-template}

\onecolumn

\graphicspath{{Fig/}}

\theoremstyle{thmstyleone}%
\newtheorem{theorem}{Theorem}
\newtheorem{proposition}[theorem]{Proposition}%
\theoremstyle{thmstyletwo}%
\newtheorem{example}{Example}%
\newtheorem{remark}{Remark}%
\theoremstyle{thmstylethree}%
\newtheorem{definition}{Definition}
\usepackage[singlespacing]{setspace}
\usepackage{cancel}
\usepackage{comment}

\usepackage{tikz}
\usepackage[fontsize=9pt]{fontsize}

\usepackage{mathtools}
\usepackage{amsmath}
\usepackage{amssymb}
\usepackage{amsthm}

\newcommand{\erk}{\hfill \ensuremath{\Diamond}} 

\newcommand{\ci}{\mbox{\protect $\: \perp \hspace{-2.3ex}
\perp$ }}

\newcommand{\cd}{\,|\,}

\newtheorem{lemma}[theorem]{Lemma}

\newcommand{\ts}[1]{\textcolor{blue}{*** TS: #1 ***}} 

\begin{document}

\journaltitle{Journals of the Royal Statistical Society}
\DOI{DOI HERE}
\copyrightyear{XXXX}
\pubyear{XXXX}
\access{Advance Access Publication Date: Day Month Year}

\firstpage{1}


\title[Causal Equilibrium Systems]{Interpretable Causal Graphical Models for Equilibrium Systems with Confounding}

\author[$\ast$]{Kai Z. Teh}
\author[]{Kayvan Sadeghi}
\author[]{Terry Soo}

\authormark{Teh et al.}

\address[]{\orgdiv{Department of Statistical Science}, \orgname{University College London}, \orgaddress{\street{Gower Street}, \postcode{WC1E 6BT}, \state{London}, \country{UK}}}

\corresp[$\ast$]{Email for correspondence. \href{Email:email-id.com}{kai.teh.21@ucl.ac.uk}}



\abstract{
In applications,  quantities of interest are often modelled in equilibrium or an equilibrium solution 
is
sought.  
The presence
of confounding makes causal inference in this setting challenging. We provide interpretable graphical models for equilibrium systems with confounding using anterial graphs \citep{lkayvan}, a class of graphs containing directed acyclic graphs, ancestral graphs, 
and chain graphs. In this setting, we provide valid graphical representations of both counterfactual variables and observational variables, which we relate to counterfactual graphs \citep{cfgraph} and single-world intervention graphs \citep{swig}. 
As an application of this graphical representation, we provide an element-wise procedure of selecting adjustment sets that flexibly include and exclude given covariates.}
\keywords{Causal Inference, Graphical Models, Gibbs Sampler, Confounding}


\maketitle

\section{Introduction}
Causal inference aims to predict the consequences of interventions in the form of causal effects. 
When interventional data from randomised control trials are unavailable, 
causal assumptions relating observational data to the intervened setting are required to estimate causal effects.

A prominent casual inference framework that captures these assumptions is the graphical model formulation by \cite{pearlbook}, 
based on directed acyclic graphs (DAGs) interpreted as structural causal models (SCMs). Edges in a DAG are all directed and thus are not suitable in capturing symmetric relations between variables such as relationships induced by 
equilibria 
and confounding. 
Ancestral graphs \citep{ancestral} and chain graphs \citep{chain} have been proposed to represent confounding \citep{ancestral, ancestralint} (see also \cite{Sad}) or variables  at 
equilibrium  \citep{chaingraphcausal}.

Anterial graphs \citep{lkayvan} provide a natural generalisation of chain graphs, ancestral graphs, and DAGs,
and also represent the marginalisation and conditioning of these classes of graphs \citep{margincond}. In addition to directed edges, anterial graphs also contain bidirected and undirected edges, which we will use to represent confounding and equilibrium relationships between variables, 
respectively. We will formalise these confounding and equilibrium relationships using a generalisation of SCMs.

Our anterial graphical model thus enables causal inference for equilibrium systems with confounding, and unifies and generalises causal graphical models described by \cite{ancestral, ancestralint} and \cite{chaingraphcausal}, for ancestral graphs and chain graphs, respectively. 
Causal inference using anterial graphs is also interpretable since causal notions such as interventions can be expressed purely graphically as local mechanistic manipulations on the intervened variables. To demonstrate this interpretability, in the same spirit as counterfactual graphs \citep{cfgraph} and single-world intervention graphs \citep{swig}, we will provide graphs that encode counterfactual conditional independence assumptions between observational variables and counterfactual variables in the case of anterial graphs. 

These counterfactual conditional independencies encode many important assumptions for methods in causal inference. One such assumption is the conditional exchangability assumption \citep{exch}, which is important in tasks such as selecting an adjustment set of confounder covariates to control for confounding between treatment and outcome variables. A variety of criteria have been proposed to select such a set of confounder covariates,  such as the backdoor criterion of  \citet[Chapter 3.3.1]{pearlbook} and  the pre-treatment heuristics of  \cite{confheuristics}. Recently, there has also been work addressing other aspects of such selection criteria, such as statistical efficiency \citep{effadj} and  the
limited structural knowledge of the underlying causal graph \citep{itergraphexp}.

Based on marginalisation and conditioning operations developed for anterial graphs \citep{margincond}, we provide an element-wise algorithm to select an adjustment set constrained to include and exclude given covariates, which can arise in practice due to prohibitive costs or regulation requirements in areas such as clinical trials or algorithmic fairness. Our algorithm is provably correct under conditions such as when the counterfactual graphs provided is valid. Compared to standard approaches of selecting adjustment sets where the inclusion and exclusion constraints are enforced post-hoc, after efficiently enumerating all possible adjustment sets \citep{construct}, our approach takes the constraints into account directly in the algorithm.

The structure of the paper will be as follows: Section \ref{bg} covers the relevant background, Section \ref{tnm1} covers the anterial graphical model of equilibrium systems with confounding, Section \ref{tnm2} covers interventions and the extension of counterfactual graphs in this setting using our anterial graphical model, and Section \ref{tnm3} covers the confounder selection algorithm. All proofs will be deferred to the final section. 

\section{Background}
\label{bg}
\subsection{Probabilistic Graphs}
Let \(\mathcal{G}\) denote a graph over a finite  set of nodes \(V\),
 with only one of the three types of edges: \emph{directed}  
(\(\rightarrow\)),   \emph{undirected} 
( --- ), and \emph{bidirected}
(\(\leftrightarrow\)), connecting 
two adjacent nodes. 
Consider a sequence \(\langle i_0,\ldots, i_n \rangle\) of nodes in the graph \(\mathcal{G}\).  If every pair of consecutive nodes is adjacent, then the sequence is a \emph{path};  if \(i_0=i_n\), in addition, then the sequence is a \emph{cycle}. If each edge between consecutive adjacent nodes \(i_m\) and \(i_{m+1}\) is either undirected or directed as \(i_m\xrightarrow{} i_{m+1}\), then the sequence is a \emph{semi-directed path} from \(i_0\) to \(i_n\); if in addition \(i_0=i_n\) and at least one of the edges is directed, then the sequence is a \emph{semi-directed cycle}.

%
In this work, we will consider \emph{anterial graphs} \citep{lkayvan}.
\begin{definition}[Anterial graphs]
\label{def: ant}
    An anterial graph 
    over a set of  nodes 
    is a graph which may contain directed,  undirected, 
    and bidirected edges that satisfies the following.
    
    \begin{enumerate}
        \item 
        There does not exist a semi-directed path between two nodes that are adjacent via a bidirected edge.
  %
        \item 
        The graph does not contain semi-directed cycles.
    \end{enumerate}
\end{definition}

By allowing only directed edges, we recover the definition of a \emph{directed acyclic graph} (DAG). Likewise excluding bidirected edges or node configurations of the form \(i-k\leftrightarrow j\) (which is trivially satisfied by excluding undirected edges), we recover the definition of chain graphs or ancestral graphs, respectively.

Following \cite{chain}, a \emph{chain component} \(\tau\) in an anterial graph \(\mathcal{G}\) is a maximal set of nodes such that every pair of nodes \(x,y\in \tau\) is connected by a sequence of undirected edges;
thus all nodes in \(\mathcal{G}\) can be partitioned into chain components.  Note that in the case of a DAG, each 
node 
is a chain component. 
Given a node \(i\in V\), let \(\tau(i)\) denote the chain component containing \(i\). Given a set of nodes \(C\subseteq V\) in a graph \(\mathcal{G}\), the \emph{neighbors} of \(C\), denoted by \(\text{ne}(C)\),  are the set of nodes \(x\not \in C\) such that \(x-i\) for some node \(i\in C\); the \emph{parents} of \(C\), denoted by \(\text{pa}(C)\),  are the set of nodes \(x\not \in C\) such that \(x\rightarrow i\) for some node \(i\in C\); the \emph{anterior} of \(C\), denoted by \(\text{ant}(C)\), is the set of nodes \(x\not \in C\) such that there exists a semi-directed path from \(x\) to some node \(i\in C\). 

\begin{remark}
When undirected edges are absent, the anterior of \(C\) coincides with the ancestors of \(C\). 
    Note that some authors let the neighbors and anterior of the set \(C\subseteq V\) contain the set \(C\) itself, however, 
    we do not. 
    \erk
\end{remark}

For disjoint subsets \(A,B,C \subseteq V\),  let \(A\perp_{\mathcal{G}} B\cd C\) denote the graphical separation of \(A\) and \(B\) given \(C\) in a graph \(\mathcal{G}\); in the case of anterial graphs, this is understood as the separation criterion in \citet[Section 3.2]{lkayvan}, which is simplified to the classical d-separation \citep[Chapter 1.2.3]{pearlbook} when the graph \(\mathcal{G}\) is a DAG, and m-separation for ancestral graphs; see also \citet[Section 3.3]{lkayvan}.  
For brevity, we will sometimes express the singleton set \(\{i\}\subseteq V\), without the set braces, as \(i\). The graphs $\mathcal{G}_1$ and 
$\mathcal{G}_2$ over the same node set \(V\) are 
\emph{Markov equivalent} if they induce the same graphical separations.


In a DAG, each pair of non-adjacent nodes are separated, say by their ancestors.   However,  an anterial graph need not be \emph{maximal} (see the graph \(\mathcal{G}^*_0\) in Figure \ref{exmax} for example).

\begin{definition}[Maximal graphs]\label{maxim}
    A graph \(\mathcal{G}\) is 
    maximal if for each pair of  nodes \(i\) and \(j\) in \(\mathcal{G}\), we have
    \begin{align*}
        i \text{ not adjacent to } j \text{ in \(\mathcal{G}\)} \Rightarrow i\perp_{\mathcal{G}} j\cd C \text{ for some \(C\subseteq V\backslash\{i,j\}\)}.
    \end{align*}
    \end{definition}

\cite{lkayvan} characterised when anterial graphs are maximal. They also provided a graphical operation \(\text{max}(\cdot)\) which \emph{maximises} any given anterial graph \(\mathcal{G}\)---returning a graph  \(\text{max}(\mathcal{G})\) with the following properties.
\begin{itemize}
        \item The graph \(\text{max}(\mathcal{G})\) is maximal.
        \item The graphs \(\text{max}(\mathcal{G})\) and \(\mathcal{G}\) are Markov equivalent.
        \item The graph \(\mathcal{G}\) is a subgraph of \(\text{max}(\mathcal{G})\).
\end{itemize}

We associate  a set of random variables \(X_V=(X_1,\ldots,X_{|V|})\) with a joint distribution \(P\)  to the set of nodes \(V\).  We will often  write $X \sim P$, if $P$ is the law of $X$. 
Given a set \(A\subseteq V\), let \(X_A =  (X_i)_{i\in A} \), and \(A\ci B\cd C\) denote the conditional independence of \( X_A \) and \( X_B \) given \( X_C \). 
If each graphical separation implies a corresponding conditional independence, then we have the  \emph{Markov property}: 
\begin{definition}
[Markov property]
    A distribution \(P\) is Markovian to \(\mathcal{G}\)  if \(A\perp_{\mathcal{G}} B\cd C \Rightarrow A\ci B\cd C\) for all disjoint \(A,B,C \subseteq V\).
\end{definition}

If we have the reverse implication as well, then we have \emph{faithfulness}. 

\begin{definition}[Faithfulness]
   A distribution   \(P\) is faithful to \(\mathcal{G}\) if   \(A\perp_{\mathcal{G}} B\cd C \iff A\ci B\cd C\) for all disjoint \(A,B,C \subseteq V\).
\end{definition}

If the distribution \(P\) is faithful to some anterial graph \(\mathcal{G}\), then \(P\) is a compositional graphoid \citep[Theorem 17]{kayvanfaith}.
\begin{definition}[Compositional graphoid]
    A distribution \(P\) over \(V\) is a compositional graphoid if we have the following conditional independence properties.  For disjoint \(A,B,C,D\subseteq V\),
    \begin{itemize}
    \item(Intersection)  \(A\ci B \cd C\cup D\) and \(A\ci D\cd C\cup B\) implies \(A\ci B\cup D\cd C\), and
    \item(Composition)  \(A\ci B\cd C\) and \(A\ci D\cd C\) implies \(A\ci B\cup D\cd C\). 
    \end{itemize}
\end{definition}
\begin{remark}
    Examples of compositional graphoids are given by multivariate Gaussian distributions. The assumption that a distribution is a compositional graphoid is often made when attempting to faithfully describe a distribution via a graphical model; see \cite{kayvanfaith}.  \hfill \ensuremath{\Diamond}
\end{remark}

Given a distribution \(P\) over variables \(X_V\), disjoint subsets \(M, C\subseteq V\) and \(x_B\), a fixed value of \(X_B\), let \(P_A(\cdot \cd x_B)\) denote the marginal of the conditional distribution \(P(\cdot \cd x_B)\) over \(X_A\) (by integrating out \(X_{V\backslash (A\cup B)}\)). 
For a distribution \(P_V\) that is faithful to a DAG \(\mathcal{G}\), there does not necessarily exist a DAG to which the marginal distribution \(P_{V\backslash M}\) is faithful. Likewise, there does not necessarily exist a DAG to which the conditional distribution \(P(\cdot \cd x_C)\) is faithful. 

In the case where \(P\) is faithful to an anterial graph \(\mathcal{G}\), anterial graphs to which the marginal or conditional distributions are faithful  always exist. Anterial graphs are thus closed under marginalising and conditioning, unlike DAGs, chain graphs and ancestral graphs. 
Given an anterial graph \(\mathcal{G}\) over the nodes \(V\) and subset \(M\subseteq V\), \cite{margincond} 
provides
the graphical operation \(\alpha_{\textnormal{m}}\) to marginalise \(\mathcal{G}\), which returns the 
anterial graph \(\alpha_{\textnormal{m}}(\mathcal{G};M)\) over nodes \(V\backslash M\) such that for disjoint \(A,B,C\subseteq V\backslash M\), we have
    \begin{align}\label{propmarg}
        A\perp_{\mathcal{G}} B\cd C \iff A\perp_{\alpha_{\textnormal{m}}(\mathcal{G};M)} B\cd C.
    \end{align}
This correspondence captures the relationship between the conditional independencies of a distribution \(P\) and its marginal distribution \(P_{V\backslash M}\).

Similarly with conditioning, given any anterial graph \(\mathcal{G}\) over the nodes \(V\) and a subset \(C\subseteq V\), \cite{margincond} 
provides the
graphical operation \(\alpha_{\textnormal{c}}\) to condition \(\mathcal{G}\), which returns 
the
anterial graph \(\alpha_{\textnormal{c}}(\mathcal{G};C)\) over nodes \(V\backslash C\) such that for disjoint \(A,B,D\subseteq V\backslash C\), we have
    \begin{align}\label{propcond}
        A\perp_{\mathcal{G}} B\cd D\cup C \iff A\perp_{\alpha_{\textnormal{c}}(\mathcal{G};C)} B\cd D.
    \end{align}
This correspondence captures the relationship between the conditional independencies of a distribution \(P\) and its conditional distribution \(P(\cdot \cd x_C)\).

We will be using these properties of marginalisation and conditioning to analyse the confounder selection algorithm proposed in Section \ref{tnm3}; see Theorem \ref{CCSworks} for Algorithm \ref{ccsalg}. 
\subsection{Interventions and Confounder Selection}

A DAG \(\mathcal{G}\) is often associated to a \emph{structural causal model} with mutually independent errors.

\begin{definition}[Structural causal model (SCM)]
    Let  \(\mathcal{G}\) be a DAG where every node \(i\) is associated with a function \(f_i\) and an 
    error random variable \(\epsilon_i\). 
    An 
    SCM
    consists of functional assignments recursively  defining observational random variables $X_V$ of the form
$X_i=f_i(X_{\textnormal{pa}(i)}, \epsilon_i),$
for every node \(i \in V\). 
    
\end{definition}

We let $P$ be the law of $X_V$, so that $P$ is the observational distribution. 
When the errors are mutually independent, the resulting joint distribution \(P\) 
is Markovian to the DAG \(\mathcal{G}\) \citep[Theorem 1.4.1]{pearlbook}; this result facilitates the use of graphical calculus on DAGs to manipulate the SCM for the purposes of causal inference,  
such as in the case of
adjustment set selection. 
We will prove an analogue  of this result  when the distribution \(P\) is induced via SCMs of equilibriums with confounding; see Theorem \ref{correctgenant}.  

Using an SCM, intervening on a treatment set \(C\subseteq V\) amounts to replacing the functional assignments of all \(X_i\) where \(i\in C\) with some fixed values $a_i$, while keeping the functional assignments of all \(X_j\) where \(j\not \in C\) fixed. 
This replacement results in the definition of a different set of intervened random variables \(X^{\text{do}(a_C)}_V=(X^{\text{do}(a_C)}_1, \ldots, X^{\text{do}(a_C)}_{|V|})\), with which we associate an interventional distribution \(P^{\text{do}(a_C)}\). 
The new variables \(X^{\text{do}(a_C)}_V\) will be referred to as  intervened variables in the scope of this work.
Note that \(X^{\text{do}(a_C)}_V\) are also known as potential outcomes 
\citep{poframework, Rubin01032005}.

Let \(O\subseteq V\) denote the outcome set and \(X_C\) the treatment variables.
Expressing common causal estimands such as the \emph{conditional average treatment effect} (CATE) solely in terms of the observational distribution depend on the following \emph{unconfoundedness} assumption: there exists an \emph{adjustment set} \(S \subseteq V\) such that
\begin{align}\label{defconfound}
X^{\text{do}(a_C)}_O\ci X_C\cd X_S 
\end{align} 
holds for all possible intervened values \(a_C\). If there is no subset of \(S\) that satisfies \eqref{defconfound}, then \(S\) is \emph{minimal}. The conditional independence \eqref{defconfound} is also known as the conditional exchangability condition \citep[Page 18]{exch}. 

Graphically, this coupling has been described using approaches such as twin-networks \citep{twin}, counterfactual graphs, \citep{cfgraph} and single-world intervention graphs \citep{swig, RichardsonRobins+2023}. These methods represent observational and intervened variables in a single graph \(\mathcal{G}'\) so that conditional independence relations between observational and intervened variables, such as \eqref{defconfound}, can be expressed as classical d-separation in \(\mathcal{G}'\).

Note that \eqref{defconfound} requires that the \emph{joint} distribution of the intervened outcome variable \(X^{\text{do}(a_C)}_O\) and the observational variable \(X_C\) to be well-defined;  
the required coupling between observational and intervened variables is specified by the shared errors \(\epsilon_V\) between the SCM and the intervened SCM. 

In this work, we will only be concerned with the kind of counterfactual assumptions that relate observational variables \(X_V\) and intervened variables \(X_V^{\textnormal{do}(a_C)}\) where the intervened value \(a_C\) is fixed, such as \eqref{defconfound}; relating intervened variables \(X_V^{\textnormal{do}(a_C)}\) and \(X_V^{\textnormal{do}(a'_C)}\) for different values \(a_C\) and \(a'_C\) is outside of our scope. Thus, we will consider only the set of treatment variables \(C\) and not the actual values \(a_C\) of the intervention, and this will be suppressed in our notation as \(X_V^{\text{do}(C)}\) and \(P^{\textnormal{do}(C)}\). Note that such assumptions are still sufficient for expressing some of the causal estimands involving different intervened values such as the \emph{effect of treatment on the treated} (ETT).

\subsection{Causal Interpretation of Chain Graphs}
Given a chain graph \(\mathcal{G}\), based on the factorisation property of chain graphs \citep{chain}, \cite{chaingraphcausal} has proposed causal interpretations of chain graphs as a data generating process having two nested processes. The outer process expresses the relationship between different chain components as
\begin{align}\label{basecase}
    X_\tau=f_\tau(X_{\textnormal{pa}(\tau)}, \epsilon_\tau)
\end{align}
for some random error variable \(\epsilon_\tau\) associated to each chain component \(\tau\), where the errors are jointly independent. These can be understood as SCM functional assignments by considering the chain component as nodes in a DAG. The inner process then associates each chain component \(\tau\) with a dynamical process that has a conditional equilibrium distribution given \(x_\textnormal{pa}(\tau)\).

\section{Structured Equilibrium Models with Confounding}
\label{tnm1}
We will formally express the relationship between different chain components described in \eqref{basecase} with potential confounding, as a structural equilibrium model.

\begin{definition}[Structural equilibrium model]\label{antscm}
Let \(\tau_1,\ldots, \tau_n\) be the parts of an  ordered partition of \(V\). Each part \(\tau_i\) is associated with a function \(f_{\tau_i}\) with arguments \(X_{\textnormal{pa}(\tau_i)}\), where \(\textnormal{pa}(\tau_i)\subseteq \{\tau_1,\ldots, \tau_{i-1}\}\), and 
an
error random variable \(\epsilon_{\tau_i}\). A structural equilibrium model consists of functional assignments of observational random variables $X_{\tau_i}$ of the following form
\begin{align*}
    X_{\tau_i}=f_{\tau_i}(X_{\text{pa}(\tau_i)}, \epsilon_{\tau_i}).
\end{align*}
\end{definition}

Note that the errors \(\epsilon_{\tau_1}, \ldots \epsilon_{\tau_n}\) need not be jointly independent. If all \(\tau_1,\ldots, \tau_n\) are all singletons with jointly independent errors, then  we recover the definition of an SCM for DAGs---in this case, \citet[Section 6.3] {chaingraphcausal} view Definition \ref{antscm} as a data generating process

The notation \(\textnormal{pa}(\tau_i)\) suggests a corresponding graph \(\mathcal{G}\). Indeed, given a structural equilibrium model, we construct the corresponding graph \(\mathcal{G}\) by considering, for each part \(\tau \in \{\tau_1, \ldots, \tau_n\}\), how \(\tau\) is connected to \(\textnormal{pa}(\tau)\) based on the functional assignment of \(\tau\). This construction is formalised as Algorithm \ref{obsalg}.

\begin{algorithm}[h]
   \caption{The corresponding graph \(\mathcal{G}\) of a structural equilibrium model}\label{obsalg}
   {\bfseries Input:} Structural equilibrium model
   
 {\bfseries Output:} Graph \(\mathcal{G}\).
\begin{algorithmic}[1]
\item[] For every part \(\tau\in \{\tau_1,\ldots, \tau_n\}\), let \(J^\tau(\cdot\cd x_{\textnormal{pa}(\tau)})\) be the law of \(f_{\tau}(x_{\text{pa}(\tau)}, \epsilon_\tau )\).
\item For nodes \(i\) and \(j\) in \(\tau\), if \(i\cancel{\ci} j\cd \tau\backslash \{i,j\}\) holds for \(J^\tau(\cdot\cd x_{\textnormal{pa}(\tau)})\), then add undirected edge \(i-j\).
\item For nodes \(i\in \textnormal{pa}(\tau)\) and \(j\in \tau\), if for all values of \(x_{(\tau\cup \textnormal{pa}(\tau))\backslash \{i,j\}}\), the conditional marginal \(J^{\tau}_i(\cdot\cd x_{\tau\backslash i};x_{\textnormal{pa}(\tau)})\) depends on the value of \(x_i\), then add the directed edge \(i\rightarrow j\).

    \item For parts \(\tau\) and \(\tau'\), if \(\epsilon_{\tau}\cancel{\ci} \epsilon_{\tau'}\), then add bidirected edges \(i\leftrightarrow j\) for every node \(i\in \tau\) and \(j\in \tau'\).
\end{algorithmic}
\end{algorithm}

For examples of structural equilibrium models and their corresponding graphs, see Figures \ref{simsetup1}, \ref{simsetup2}, and \ref{simsetup3} in Section \ref{sec:sim}.

\begin{remark}
Step 1 of Algorithm \ref{obsalg} ensures that \(J^{\tau}(\cdot\cd x_{\textnormal{pa}(\tau)})\) satisfies the pairwise Markov property with respect to (w.r.t.) to the constructed undirected graph over nodes in \(\tau\) 
and that no subgraph of the undirected graph satisfies this property.

The condition in Step 2 of Algorithm \ref{obsalg} can be succinctly expressed as  
    $i\cancel{\ci} j\cd \tau \cup \textnormal{pa}(\tau)\backslash\{i,j\}$
using the formulation in \citet[Theorem 3.5]{step2},  which allows the expression of dependence on fixed parameters \(x_{\textnormal{pa}(\tau)}\) as an extended conditional independence statement. These (extended) conditional independencies over \(\tau\) and \(\textnormal{pa}(\tau)\) are represented graphically using Algorithm \ref{obsalg}. See Figure \ref{proofill}.
\begin{figure}[h]
        \centering
        \includegraphics[width=0.2\linewidth]{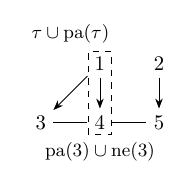}
        \caption{Chain component \(\tau = \{3,4,5\}\) and parents \(\textnormal{pa}(\tau) = \{1,2\}\). For every node \(i\in \tau\), Algorithm  \ref{obsalg} constructs \(\textnormal{ne}(i)\) and \(\textnormal{pa}(i)\) such that the conditional marginal \(J^{\tau}_i(\cdot \cd x_{\tau\backslash i}; x_{\textnormal{pa}(\tau)})\) does not depend on nodes outside of \(\textnormal{pa}(i)\cup \textnormal{ne}(i)\) (formalised as Lemma \ref{localind} in the Proofs Section).}
        \label{proofill}
    \end{figure}\erk


\end{remark}
For every part \(\tau\), Step 1 of Algorithm \ref{obsalg} constructs the chain components of the graph \(\mathcal{G}\) using the distribution \(J^\tau(\cdot\cd x_{\textnormal{pa}(\tau)})\) of \(f_{\tau}(x_{\text{pa}(\tau)}, \epsilon_{\tau})\) from the model. For every node \(i\in \tau\), step 2 of Algorithm \ref{obsalg} then uses the \(i\)-marginal of the conditional distribution \(J^{\tau}_i(\cdot\cd x_{\textnormal{ne} (i)};x_{\textnormal{pa}(\tau)})\) to determine the parents of \(i\) for the constructed \(\mathcal{G}\). Step 3 of the algorithm then includes bidirected edges in the constructed graph \(\mathcal{G}\) to represent dependent noises of the model.

\begin{remark}[Relating parts to chain components of \(\mathcal{G}\)]\label{multicompremakr}
     
     Consider the part \(\tau\) that decomposes in multiple connected components \(\tau'_1,\ldots, \tau'_m\)  after performing Step 1 of Algorithm \ref{obsalg}. These connected components are the chain components of the corresponding graph \(\mathcal{G}\). 
     
     Since Step 1 of Algorithm \ref{obsalg} constructs an undirected graph of chain components such that \(J^\tau(\cdot \cd x_{\textnormal{pa}(\tau)})\) satisfies the pairwise Markov property, if \(J^\tau(\cdot \cd x_{\textnormal{pa}(\tau)})\) satisfies the intersection property, then it can be seen that the chain components are jointly independent using the global Markov property, since the chain components are not connected by paths. 
    
    Thus, the functional assignment \(X_\tau=f_{\tau}(X_{\textnormal{pa}(\tau), \epsilon_\tau})\) for the part \(\tau\) can be
    decomposed into multiple functional assignments:
    \begin{align*}
        X_{\tau'_i}=f_{\tau'_i}(X_{\textnormal{pa}(\tau)}, \epsilon_{\tau'_i}),
    \end{align*}
    where \(\tau'_i\in \{\tau'_1,\ldots, \tau'_m\}\), with \(\epsilon_\tau=(\epsilon_{\tau'_1}, \ldots, \epsilon_{\tau'_n})\), where 
    the constituent errors
    are jointly independent.

    Having multiple chain components from a part does not affect the proof of Theorem \ref{correctgenant}. However, without loss of generality, we will consider structural equilibrium models such that, for each part, \(\tau\) does not decompose 
    w.r.t.\ the distribution \(J^\tau(\cdot \cd x_{\textnormal{pa}(\tau)})\). 
    Thus we will assume that each part \(\tau\) is a chain component in the corresponding graph \(\mathcal{G}\). \erk
\end{remark}

If the corresponding graph 
\(\mathcal{G}\) is anterial, then the distribution \(P\), assumed to be a compositional graphoid, induced from the structural equilibrium model is Markovian to \(\mathcal{G}\).

\begin{theorem}[Markov property for structural equilibrium models]\label{correctgenant}
    Let \(P\) be the joint distribution over the variables \(X_V\) induced from a structural equilibrium model and let the corresponding graph \(\mathcal{G}\) over the nodes \(V\), as constructed in Algorithm \ref{obsalg},  be anterial. If \(P\) is a compositional graphoid, then \(P\) is Markovian to \(\mathcal{G}\).
\end{theorem}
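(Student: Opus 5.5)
The strategy is to show that \(P\) satisfies a \emph{pairwise} Markov property with respect to \(\mathcal{G}\), and then upgrade it to the global Markov property. The upgrade uses the known equivalence of pairwise and global Markov properties for compositional graphoids over anterial (more generally, chain mixed) graphs. This is the result of \citet{kayvanfaith}/\citet{lkayvan}, which requires intersection and composition; that is precisely why we assume \(P\) is a compositional graphoid. Because that equivalence is stated for maximal graphs, we would first pass to \(\text{max}(\mathcal{G})\). It is Markov equivalent to \(\mathcal{G}\) and contains it as a subgraph, so a pairwise statement for non-adjacent pairs of \(\text{max}(\mathcal{G})\) follows from one for non-adjacent pairs of \(\mathcal{G}\). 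The target pairwise property is: for non-adjacent \(i,j\) with \(j\notin \text{ant}(i)\),
\begin{align*}
i \ci j \cd \text{ant}(\{i,j\})
\end{align*}
(in the convention of \citet{lkayvan}, with the anterior not containing the nodes themselves).

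The pairwise statements come from the structure of the model in three steps.

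\textbf{Step 1 (factorisation).} Order the chain components \(\tau_1,\ldots,\tau_n\). Let \(\mathcal{B}\) be the partition of the components into classes connected by bidirected edges. Since bidirected edges encode dependence of errors, errors in different classes are independent. Hence
\begin{align*}
P(x_V)=\prod_{B\in\mathcal{B}} P\bigl(x_{\cup B}\cd x_{\text{pa}(\cup B)}\bigr),
\end{align*}
and each class-conditional law is the pushforward of the joint error law of that class. Within a component \(\tau\), the law given \(x_{\text{pa}(\tau)}\) is \(J^\tau\).

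\textbf{Step 2 (local independences from Algorithm \ref{obsalg}).} Step 1 of the algorithm gives that \(J^\tau(\cdot\cd x_{\text{pa}(\tau)})\) is pairwise Markov to the undirected graph on \(\tau\). Step 2 gives that the full conditional of \(i\in\tau\) depends on parents only through \(\text{pa}(i)\). This is Lemma \ref{localind}: \(J^\tau_i(\cdot\cd x_{\tau\setminus i};x_{\text{pa}(\tau)})\) depends only on \(x_{\text{ne}(i)\cup\text{pa}(i)}\). Anteriority of \(\mathcal{G}\) lets us use the ordering: any \(j\) with \(j\notin\text{ant}(i)\) lies either in a later component, or in the same/earlier component but not semi-directed into \(i\).

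\textbf{Step 3 (case analysis for a non-adjacent pair \(i,j\), \(j\notin \text{ant}(i)\), conditioning set \(\text{ant}(\{i,j\})\)).} There are three cases.
\begin{enumerate}
\item \(i\) and \(j\) are in the same component \(\tau\). Use Lemma \ref{localind} together with intersection (the conditioning set contains \(\text{pa}(\tau)\) and the rest of the component's anterior part) to obtain \(i\ci j\) given the rest.
\item \(j\) is in a later component and not in the bidirected class of \(\tau(i)\). Integrate out the non-anterior variables, which is legitimate because anterior sets are closed under the factorisation. Independence of the errors then gives a product form separating \(i\) from \(j\) given their anterior.
\item \(j\) is connected to \(\tau(i)\) only through paths that must pass through the conditioning set. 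Combine the two factorisations using composition, since the dependence of \(j\) on several non-adjacent anterior nodes has to be assembled from separate statements.
\end{enumerate}

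The main obstacle is the interaction of the bidirected (dependent-error) structure with the undirected components. Error dependence is only specified pairwise between parts, yet we need the factorisation over whole bidirected classes. Conditioning on anterior sets may also open collider-type paths \(i\leftrightarrow k \leftrightarrow j\). To handle this, I would first try restricting to the anterial subgraph on \(\text{ant}(\{i,j\})\cup\{i,j\}\), using that its induced law is again produced by a structural equilibrium model. I would then argue by induction on the number of components, removing a terminal component at each step. Composition is used to glue together the pairwise independences of errors into joint ones, as is done for ancestral graphs.
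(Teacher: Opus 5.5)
\textbf{Your reduction to pairwise statements asks for too much.} You plan to prove \(i\ci j\cd\textnormal{ant}(i,j)\) for \emph{every} non-adjacent pair of \(\mathcal{G}\) and then pass to \(\textnormal{max}(\mathcal{G})\). For non-maximal \(\mathcal{G}\) this statement is false, so no case analysis can deliver it.

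Here is a counterexample. Take singleton parts with \(X_{k_1}=\epsilon_{k_1}\), \(X_{k_2}=\epsilon_{k_2}\), \(X_i=X_{k_2}+\epsilon_i\), \(X_j=X_{k_1}+\epsilon_j\). Let the errors be jointly Gaussian, with nonzero covariances only for the pairs \((\epsilon_i,\epsilon_{k_1})\), \((\epsilon_{k_1},\epsilon_{k_2})\) and \((\epsilon_{k_2},\epsilon_j)\).
\begin{itemize}
\item Algorithm \ref{obsalg} returns the ancestral graph with edges \(k_2\rightarrow i\), \(k_1\rightarrow j\) and \(i\leftrightarrow k_1\leftrightarrow k_2\leftrightarrow j\).
\item \(P\) is Gaussian, hence a compositional graphoid.
\item \(i\) and \(j\) are non-adjacent, with \(\textnormal{ant}(i,j)=\{k_1,k_2\}\).
\item Yet \(X_i\) and \(X_j\) are dependent given \(X_{k_1},X_{k_2}\). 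Given these, \(X_i\) and \(X_j\) are shifts of \(\epsilon_i\) and \(\epsilon_j\), and the \((i,j)\) entry of the inverse error covariance is proportional to the product of the three nonzero covariances.
\end{itemize}
Here \(\langle i,k_1,k_2,j\rangle\) is a primitive inducing path. The collider paths through the conditioning set that your third case and your closing paragraph propose to ``handle'' are exactly of this kind. Along them the independence genuinely fails, so they must be excluded, not handled.

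The correct reduction is Proposition \ref{nonmaxpairtoglob}. Check only pairs with \(i\not\sim_p j\), i.e.\ the non-adjacent pairs of \(\textnormal{max}(\mathcal{G})\). Use that maximisation preserves anteriors (Proposition \ref{intproof2}); you need this fact but never state it.

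With that hypothesis in place, your district idea works in the case where neither node is anterior to the other. By chain-connectedness, any bidirected chain of components from \(\tau(i)\) to \(\tau(j)\) through \(\textnormal{sant}(i,j)\) would give a primitive inducing path. So \(\tau(i)\) and \(\tau(j)\) lie in different districts of the subgraph on \(\textnormal{ant}(i,j)\cup\{i,j\}\), and the factorisation separates them. That is a legitimate alternative to the paper's route, which collapses components into the ancestral graph \(\mathcal{G}_{\textnormal{col}}\), transfers \(i\not\sim_p j\) to it, and invokes Proposition \ref{Sadmarkov}.

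\textbf{The anterior case is missing its key step.} Suppose one node, say \(j\), lies in \(\textnormal{ant}(i)\). This includes your same-component case. Note that nodes of one chain component are mutually anterior, so your standing assumption \(j\notin\textnormal{ant}(i)\) already contradicts your first case.

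What must be shown is \(P_{\tau(i)}(\cdot\cd x_{\textnormal{sant}(i)})=J^{\tau(i)}(\cdot\cd x_{\textnormal{pa}(\tau(i))})\). This is where the anterial condition enters: \(\tau(i)\) has no bidirected edge into its strict anterior, so \(\epsilon_{\tau(i)}\ci X_{\textnormal{sant}(i)}\). Conditioning further on \(x_{\tau(i)\backslash i}\) gives the full conditional of Lemma \ref{localind}, which does not involve \(x_j\).

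Intersection plays no role here, and a product form is not available. If \(j\rightarrow k-i\) with \(k\in\tau(i)\), the kernel \(J^{\tau(i)}\) contains both \(x_i\) and \(x_j\). The separation then comes from Step 2 of Algorithm \ref{obsalg}, not from independent errors.

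Finally, composition cannot glue pairwise error independences into joint ones. It is a hypothesis on the law of \(X_V\), and the errors need not be functions of \(X_V\). The paper reads \(\epsilon_{\tau(i)}\ci X_{\textnormal{sant}(i)}\) directly off the missing bidirected edges.
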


Throughout this work, we will assume structural equilibrium models that correspond to anterial graphs, via Algorithm \ref{obsalg}. In addition to Definition \ref{antscm}, these models have an additional constraint:  for any parts \(\tau'_1, \tau'_m\in \{\tau_1, \ldots, \tau_n\}\), if there is a sequence of parts \(\langle \tau'_1, \tau'_2, \ldots, \tau'_{m-1}, \tau'_{m}\rangle\) such that \( \tau'_i\) intersects \( \textnormal{pa}(\tau'_{i+1})\) for each consecutive parts \(\tau'_i\) and \(\tau'_{i+1}\), then \(\epsilon_{\tau'_1} \ci \epsilon_{\tau'_m}\). Indeed, this condition can be graphically expressed as the lack of semi-directed paths between endpoints of a bidirected edge. See Section \ref{sec:anterial} on why the anterial assumption is important for causal interpretability.

In the case where the corresponding graph is a chain graph,
\citet[Proposition 6] {chaingraphcausal} 
gives 
the Markov property for \(\mathcal{G}\) when viewing the chain graph as a DAG with chain components as nodes.  
Theorem \ref{correctgenant} enables the consideration of conditional independencies involving \emph{every} subset of nodes, addressing the remark from \cite{PDd} regarding the full extraction of conditional independencies from the chain graph.


When the parts of the structural equilibrium model are singletons, Algorithm \ref{obsalg} returns a graph 
of 
only directed and bidirected edges. When the errors in the structural equilibrium model are jointly independent, Algorithm \ref{obsalg} returns a chain graph. 
With the additional assumption that the induced distribution is a compositional graphoid, Theorem \ref{correctgenant} extends the Markov property results in \citet[Theorem 27]{Sad} for ancestral graphs and \cite{chaingraphcausal} for chain graphs.


Step 3 of Algorithm \ref{obsalg} connects either all or none of the nodes between the chain components \(\tau\) and \(\tau'\) with bidirected edges.

\begin{definition}[Chain-connected graph]
\label{defchain}
    A graph \(\mathcal{G}\) is \emph{chain-connected} if for all nodes 
    $i$ and $j$,
    we have
    \begin{align*}
        i\leftrightarrow j \Rightarrow i\leftrightarrow k \quad \text{ for all }k\in \tau(j).
    \end{align*}
\end{definition}

Thus, it can be seen that the corresponding graph \(\mathcal{G}\) is chain-connected. This can be interpreted as all nodes in the chain component \(\tau\) sharing the same error \(\epsilon_\tau\).  Throughout this work, we will describe structural equilibrium models using chain-connected anterial graphs.  Common graph classes such as DAGs, ancestral graphs and chain graphs are still a subclass of chain-connected anterial graphs.

\begin{remark}
    In principle, Step 3 of Algorithm \ref{obsalg} can be modified to add bidirected edges based on conditional independencies of the induced joint distribution \(P\), while still preserving the Markov property required. However, this approach is instead distributional, depending on the actual coupling of the errors, and no longer purely structural. depending only on the independencies of the errors, which allows for causal operations in an interpretable graphical manner. See also a similar argument on the importance of the anterial assumption on causal interpretability in Section \ref{sec:anterial}. \erk
\end{remark}

\subsection{Interpreting the constructed graphical model \(\mathcal{G}\)}

In \cite{chaingraphcausal}, each chain component \(\tau\) of a chain graph model is associated with a dynamical process, such as a Gibbs sampler \citep{gibbs}, having a conditional equilibrium distribution given \(x_{\textnormal{pa}(\tau)}\), which we denote as \(J^\tau(\cdot\cd x_{\textnormal{pa}(\tau)})\). As suggested by the notation, given a structural equilibrium model, for every part \(\tau\), we will similarly associate the law \(J^\tau(\cdot\cd x_{\textnormal{pa}(\tau)})\) of 
\(f_\tau(x_{\textnormal{pa}(\tau)}, \epsilon_\tau)\) 
with the equilibrium distribution 
of the Gibbs sampler as follows. 

Label the nodes in the chain component from \(1\) to \(|\tau|\). Let \(X^0_\tau=x^0_\tau\) be some initial values and \(m_0=1\) be the starting node of the Gibbs sampler, at every timestep \(t\) the values \(X^t_\tau\) and node \(m_t\) are updated as follows:

\begin{align}
\label{dyna1}
        m_t&=m_{t-1}+1 \text{ mod } |\tau|, \nonumber\\
        X_{\tau\backslash m_t}^t&= x_{\tau\backslash m_t}^{t-1}, \text{ and }\nonumber\\
        X_{m_t}^t&\sim J^{\tau}_{m_t}(\cdot \cd x_{\text{ne}(m_t)}^{t-1};  x_{\text{pa}(m_t)}).
    \end{align}

  \begin{remark}
    Given a structural equilibrium model, interpreting \(J^\tau(\cdot\cd x_{\text{pa}(\tau)})\) as the equilibrium distribution of some dynamical process is only required to define sensible notions of interventions  in Section \ref{tnm2}. Observationally, it suffices to consider \(J^\tau(\cdot\cd x_{\text{pa}(\tau)})\) only, since Theorem \ref{correctgenant} is a statement on the observational joint distribution and therefore holds regardless of the exact dynamical process. \erk
\end{remark}

Despite anterial graphs being closed under marginalisation and conditioning, 
we will discuss why interpretations based on marginalisation and conditioning of existing graph classes are not suitable for chain-connected anterial graphs.

Marginalising chain graphs results in anterial graphs \citep{margincond}. Building on  top of existing chain graph interpretations \citep{chaingraphcausal}, one may be tempted to consider anterial graphs as the marginal of a chain graph with latent variables, with bidirected edges \(i\leftrightarrow j\) representing a latent variable between nodes \(i\) and \(j\). However, Example \ref{antmarg} shows the problem with such an interpretation.


\begin{example}[Problem with interpretations based on marginalisation]\label{antmarg}
 \begin{figure}[h]
\centerline{\includegraphics[width=0.7\linewidth]{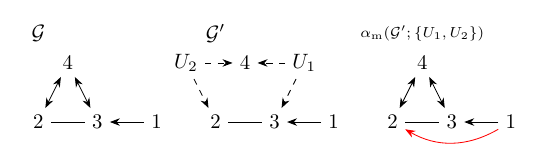}}
\caption{Left: Chain-connected anterial graph \(\mathcal{G}\). Middle: Interpretation of bidirected edges \(3\leftrightarrow 4\) and \(2\leftrightarrow 4\) in \(\mathcal{G}\) as unobserved latent variables \(U_1\) and \(U_2\) respectively, from the chain graph \(\mathcal{G}'\). Right: Graph \(\alpha_{\textnormal{m}}(\mathcal{G}';\{U_1,U_2\})\) obtained by marginalising \(\mathcal{G}'\) over the latent variables \(U_1\) and \(U_2\).}
\label{failedmarg}
\end{figure}
 
Consider Figure \ref{failedmarg} and interpret the bidirected edges in the chain-connected anterial graph \(\mathcal{G}\) as the existence of unobserved latent variables \(U_1\) and \(U_2\), as shown in the chain graph \(\mathcal{G}'\).  Marginalising over $U_1$ and $U_2$ in \(\mathcal{G}'\) results in \(\alpha_{\textnormal{m}}(\mathcal{G}'; \{U_1,U_2\})\) which is not Markov equivalent to \(\mathcal{G}\); indeed, 
\(1\perp_{\mathcal{G}} 2\cd 3\), 
but 
this separation does not hold in the marginalisation \(\alpha_{\textnormal{m}}(\mathcal{G}'; \{U_1, U_2\})\).
\end{example}

Similar issues with marginal chain graph models have been raised 
by \cite{seggraph}
 and were resolved using graphs containing undirected, directed, and bidirected edges. However, the directed edges in such graphs do not seem to have a clear interpretation \citep[Section 2]{seggraph}.  

Similarly,  conditioning ancestral graphs results in anterial graphs \citep{margincond}. Building on top of existing interpretations of ancestral graphs \citep{ancestral, Sad}, it may also be  tempting to interpret anterial graphs as the conditional of ancestral graphs with selection variables, where the undirected edges \(i-j\) is interpreted as conditioning on a selection bias variable between the  nodes \(i\) and \(j\) from an ancestral graph. However, Example \ref{antcond} shows a similar problem with such an interpretation.
  \begin{figure}[h]
\centerline{\includegraphics[width=0.7\linewidth]{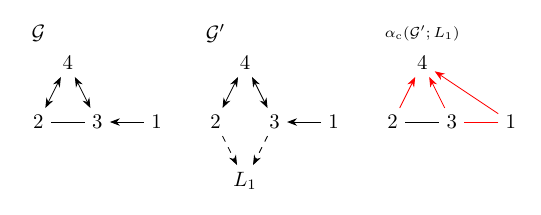}}
\caption{Left: Chain-connected anterial graph \(\mathcal{G}\). Middle: Interpretation of the undirected edge \(2-3\) as the selection bias variable \(L_1\) from the ancestral graph \(\mathcal{G}'\). Right: Graph \(\alpha_{\textnormal{c}}(\mathcal{G}';L_1)\) obtained by conditioning \(\mathcal{G}'\) over the selection bias variable \(L_1\).}
\label{failedcond}
\end{figure}

\begin{example}[Problem with interpretations based on conditioning]\label{antcond}
Consider Figure \ref{failedcond} and interpret the undirected edge \(2-3\) in the chain-connected anterial graph \(\mathcal{G}\) as the existence of selection variables \(L_1\), as shown in the ancestral graph \(\mathcal{G}'\). Conditioning on \(L_1\) in \(\mathcal{G}'\) results in \(\alpha_{\textnormal{c}}(\mathcal{G}'; L_1)\), which is not Markov equivalent to \(\mathcal{G}\); indeed, \(1\perp_\mathcal{G} 4 \) but this separation does not hold in the conditional graph \(\alpha_{\textnormal{c}}(\mathcal{G}'; L_1)\).
\end{example}

\subsection{Examples and simulations}\label{sec:sim}

To illustrate Theorem \ref{correctgenant}, we will simulate data from structural equilibrium models that correspond to the chain-connected anterial graphs in Figures \ref{simsetup1}, \ref{simsetup2}, and \ref{simsetup3} in Python. Since the joint distribution in Theorem \ref{correctgenant} is induced from the distributions over chain components \(J^\tau(\cdot \cd x_{\textnormal{pa}(\tau)})\), which we interpret as equilibrium of Gibbs samplers in \eqref{dyna1}, the simulation also allows us to test the validity of the Markov property of the structural equilibrium model when interpreted as Gibbs samplers in finite-time. 

 \begin{figure}[h]
\centerline{\includegraphics[width=\linewidth]{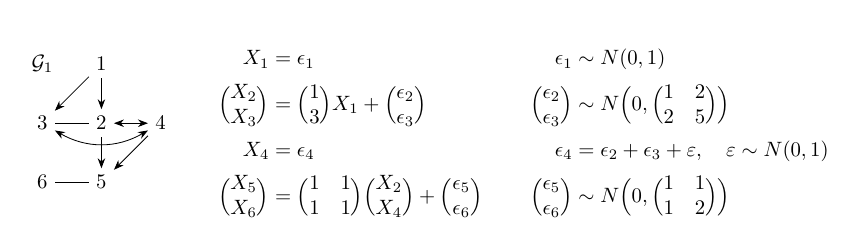}}
\caption{Left: Corresponding chain-connected anterial graph \(\mathcal{G}_1\). Middle: Structural equilibrium model. Conditioning on \(5\), the distribution \(J^{\{5,6\}}_6(\cdot \cd x_{5};x_{\{2,4\}})\) does not depend on \(x_{\{2,4\}}\), thus \(\textnormal{pa}(6)=\emptyset\). Right: Distribution and coupling of the error variables.}
\label{simsetup1}
\end{figure}

We will consider the structural equilibrium model which corresponds to the graph \(\mathcal{G}_1\) in Figure \ref{simsetup1}. We sample 10000 data points directly from the structural equilibrium model (using the equilibrium distributions) and 1000 data points using a Gibbs sampler of the conditional normal distributions \(J^{\{2,3\}}(\cdot \cd x_{1})\) and \(J^{\{5,6\}}(\cdot \cd x_2, x_4)\) with a burn-in of 10000 steps.

Using both sampled datasets, we perform a Fisher-Z test (using the \verb|causal-learn| \citep{python} package) with a null hypothesis of pairwise conditional independence of the form  \(i\ci j \cd \textnormal{ant}(i,j)\) for every pair of nodes $i$ and $j$. The Markov property w.r.t.\ the  graph \(\mathcal{G}_1\) would imply pairwise conditional independencies for some nodes \(i\) and \(j\). The p-values for every pair of nodes are shown in Figure \ref{g1res} in Section \ref{sec:simres}. 

Next, we consider the structural equilibrium model which corresponds to the graph \(\mathcal{G}_2\) in Figure \ref{simsetup2}. Again, the p-values for the pairwise conditional independence tests are shown in Figure \ref{g2res} in Section \ref{sec:simres}.

 \begin{figure}[h]
\centerline{\includegraphics[width=\linewidth]{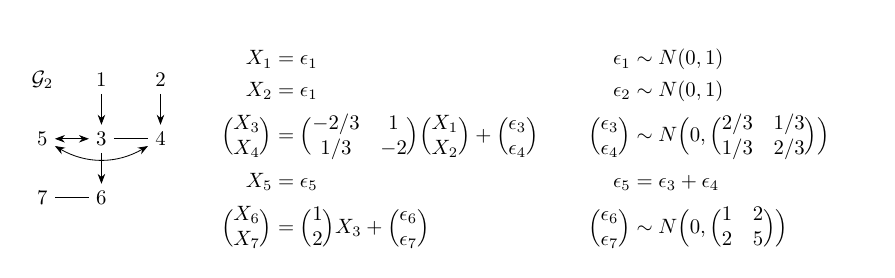}}
\caption{Left: Corresponding chain-connected anterial graph \(\mathcal{G}_2\). Middle: Structural equilibrium model. Conditioning on \(3\), the distribution \(J^{\{3,4\}}_4(\cdot \cd x_{3};x_{\{1,2\}})\) depends on \(x_{2}\) only, thus \(\textnormal{pa}(4)=2\), similarly we have \(\textnormal{pa}(3)=1\). Right: Distribution and coupling of the error variables.}
\label{simsetup2}
\end{figure}

We will use the structural equilibrium model which corresponds to the graph \(\mathcal{G}_3\) in Figure \ref{simsetup3}. 

 \begin{figure}[h]
\centerline{\includegraphics[width=\linewidth]{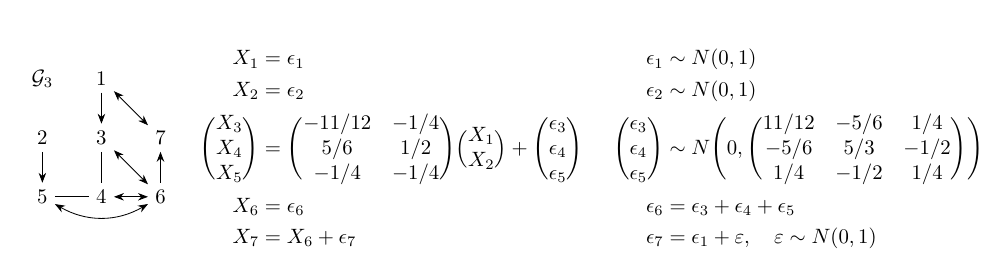}}
\caption{Left: Corresponding Chain-connected anterial graph \(\mathcal{G}_1\). Middle: Structural equilibrium model. \(3\ci 5\) holds for the distribution \(J^{\{3,4,5\}}(\cdot \cd x_{\{2,4\}})\), thus \(3\) is not connected to \(5\). Conditioning on \(\textnormal{ne}(4)\), the distribution \(J^{\{3,4,5\}}_4(\cdot \cd x_{\{3,5\}};x_{\{1,2\}})\) does not depend on \(x_{\{1,2\}}\), thus \(\textnormal{pa}(4)=\emptyset\). Likewise for \(\textnormal{pa}(3)=1\) and \(\textnormal{pa}(5)=2\). Right: Distribution and coupling of the error variables.}
\label{simsetup3}
\end{figure}

The p-values for the pairwise independence tests are shown in Figure \ref{g3res} in the Section \ref{sec:simres}.  

In Figure \ref{g3res}, high p-values indicate insufficient evidence for rejecting the null hypothesis in favour of the alternative hypothesis where \(i\cancel{\ci} j \cd \textnormal{ant}(i,j)\), implying conditional independence. We see that the Fisher Z-test rejects the null hypothesis in a manner consistent with the conditional independencies implied by the Markov property of Theorem \ref{correctgenant}. Note that due to the higher dimensionality when sampling \(\{3,4,5\}\), the p-values of the pairwise conditional independencies \(5\ci 1 \cd \{2,3,4\}\) and \(5\ci 3\cd \{2,1,4\}\) are on the order of about \(10^{-7}\), while all the other returned p-values are close to \(0\).

\section{Graphical Interventions of Structural Equilibrium Models}\label{tnm2}

Given a structural equilibrium model with the corresponding graph \(\mathcal{G}\), using the interpretation of a chain component \(\tau\) of \(\mathcal{G}\) as the Gibbs sampler in \eqref{dyna1}, an intervention on a treatment set \(C\subseteq V\) (to intervened values \(a_C\)) can be expressed through the following intervened Gibbs sampler by replacing the updating of the treatment nodes in \(C\) with fixed assignments. This replacement results in the \emph{intervened} Gibbs sampler as follows, expressed using the notation of the Gibbs sampler in \eqref{dyna1}:

\begin{align}\label{intervenescm}
        m_t&=m_{t-1}+1 \text{ mod } |\tau|, \nonumber\\
        \text{if } m_t\in C, & \nonumber\\
          X_{\tau\backslash C}^t&= x_{\tau\backslash C}^{t-1}, \textnormal{ and }\nonumber\\
        X_{\tau\cap C}^t&=a_{\tau\cap C}, \nonumber\\
       \text{if } m_t\not\in C, &\nonumber\\
       X_{\tau\backslash m_t}^t&= x_{\tau\backslash m_t}^{t-1}, \textnormal{ and }\nonumber\\
         X_{m_t}^t&\sim J^\tau_{m_t}(\cdot \cd x_{\text{ne}(m_t)}^{t-1};  x_{\text{pa}(m_t)}),
    \end{align}
 with the conditional distribution \(J^\tau_{m_t}(\cdot \cd x_{\text{ne}(m_t)}^{t-1};  x_{\text{pa}(m_t)}) \)  being the same in the Gibbs sampler \(\eqref{dyna1}\) of the structural equilibrium model.

The resulting equilibrium distribution \(J^{\tau,\text{do}(C)}(\cdot \cd x_{\text{pa}(\tau)})\) on the chain component \(\tau\) is 
\begin{align}\label{keyeqint}
    J^{\tau,\text{do}(C)}(\cdot \cd x_{\text{pa}(\tau)})=J^\tau_{\tau\backslash C}(\cdot \cd a_{\tau\cap C}; x_{\textnormal{pa}(\tau)})\delta_{\tau\cap C}(a_{\tau\cap C}),
\end{align} 
where \(J^\tau_{\tau\backslash C}(\cdot\cd a_{\tau\cap C};x_{\text{pa}(\tau)})\) is the conditional distribution of the equilibrium distribution \(J^\tau(\cdot \cd x_{\text{pa}(\tau)})\) of the pre-intervened Gibbs sampler \eqref{dyna1} and \(\delta_{\tau\cap C}(a_{\tau\cap C})\) is the delta function that takes the value \(1\) when \(X_{\tau\cap C}\) takes the intervened values \(a_{\tau \cap C}\); this is a local  version of \citet[Section 6.4, Equation 18]{chaingraphcausal}.


From the distributions \(J^{\tau,\text{do}(C)}(\cdot \cd x_{\text{pa}(\tau)})\), we 
define the resulting \emph{intervened structural equilibrium model} by replacing the assignments for all chain components \(\tau\) such that \(\tau\cap C\neq \emptyset\) in the structural equilibrium model with
\begin{align}\label{intscm}
    X_\tau=f^{\text{do}(C)}_\tau(X_{\text{pa}(\tau)}, \epsilon_\tau),
\end{align}
such that
\begin{itemize}
    \item \(f^{\text{do}(C)}_\tau(x_{\text{pa}(\tau)}, \epsilon_\tau)\sim J^{\tau,\text{do}(C)}(\cdot \cd x_{\text{pa}(\tau)})\), and 
    \item the chain component errors \(\epsilon_{\tau_1}, \ldots, \epsilon_{\tau_n}\) have the same joint distribution as the errors in the observational structural equilibrium model (pre-intervention).
\end{itemize}

The intervened structural equilibrium model induces a new set of intervened random variables \(X^{\text{do}(C)}_V\) with which we associate a joint interventional distribution \(P^{\text{do}(C)}\). 

\begin{remark}
    Generally, in structural equilibrium models, the joint interventional distribution is not equivalent to the joint conditional distribution. However, this equivalence holds for the conditional distribution of the variable given its parents. This equivalence is compatible with the relationship in \eqref{keyeqint}. \erk
\end{remark}

\begin{remark}
\cite{chaingraphcausal} also consider Langevin diffusions in lieu of Gibbs samplers.   Given an observational distribution \(P\), the interventional distribution \(P^{\textnormal{do}(c)}\) obtained from an intervened dynamical process using \eqref{intscm}  depends on the dynamical process being considered.

A Gibbs sampler is chosen for this work as the dynamical process, since the resulting intervention can be expressed purely graphically (see Algorithm \ref{intalg}); the same cannot be said about other dynamical processes in \cite{chaingraphcausal} such as Langevin diffusions, for which the notion of intervention defined will depend on diffusion parameters. 

     Note that without assuming the dynamical process to be a Gibbs sampler, an equilibrium distribution may not exist after intervening the dynamical process. \erk
 \end{remark}
 
\subsection{Representing interventions graphically}

Via a graphical operation (Algorithm \ref{intalg}) on the corresponding graph \(\mathcal{G}\) of the \emph{observational} structural equilibrium model, we can describe the interventional distribution \(P^{\textnormal{do}(C)}\) after intervening on a treatment set \(C\).

\begin{algorithm}[h]
   \caption{Intervening chain-connected anterial graphs}\label{intalg}
   {\bfseries Input:} chain-connected anterial graph $\mathcal{G}$, node subset $C\subseteq V$
   
 {\bfseries Output:} chain-connected anterial graph \(\text{do}_C(\mathcal{G})\)
\begin{algorithmic}[1]
\item[] For all \(i\in C\),
\item for nodes \(j\) such that \(i-j\), if \(j\not\in C\), then replace with \(i\rightarrow j\); if \(j\in C\), then remove edge \(i-j\), 
    \State remove all directed edges into \(i\), and
    \item remove all bidirected edges \(i\leftrightarrow j\).
\end{algorithmic}
\end{algorithm}

\begin{figure}[h]
\begin{center}
\centerline{\includegraphics[width=0.5\linewidth]{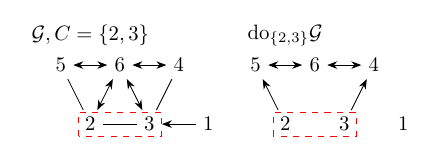}}
\caption{Left: Input chain-connected anterial graph \(\mathcal{\mathcal{G}}\) and treatment set \(C=\{2,3\}\). Right: Output \(\textnormal{do}_C(\mathcal{G})\) after removing bidirected and directed edges into \(C\).}
\label{intalgex}
\end{center}
\end{figure}

Algorithm \ref{intalg} first modifies undirected edges of nodes in the treatment set \(C\) and then removes directed and bidirected edges connected to \(C\). See Figure \ref{intalgex} for an illustration of Algorithm \ref{intalg}.

When \(\mathcal{G}\) is a DAG, only Step 2 of Algorithm \ref{intalg} is performed, and Algorithm \ref{intalg} reduces to a standard graphical intervention for DAGs. Since interventions can be expressed distributionally as a truncated factorisation of the observational distribution, the interventional distribution is Markovian to \(\textnormal{do}_C(\mathcal{G})\) \citep[Theorem 3.6]{spirtesbook}.

Theorem \ref{intmarkov} similarly describes \(P^{\textnormal{do}(C)}\) from structural equilibrium models using the intervened (anterial) graph \(\textnormal{do}_C({\mathcal{G}})\).

\begin{theorem}[Markov property for interventions]\label{intmarkov}
    Let \(P^{\textnormal{do}(C)}\) be the interventional distribution (from intervening on a treatment set \(C\)) induced from a structural equilibrium model corresponding to  the chain-connected anterial graph \(\mathcal{G}\).
    If \(P^{\textnormal{do}(C)}\) is a compositional graphoid, then \(P^{\textnormal{do}(C)}\) is Markovian to \(\textnormal{do}_C(\mathcal{G})\).
\end{theorem}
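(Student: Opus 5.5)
The plan is to reduce Theorem \ref{intmarkov} to Theorem \ref{correctgenant}. I will show that the intervened structural equilibrium model is itself a structural equilibrium model whose corresponding graph, built by Algorithm \ref{obsalg}, is contained in $\textnormal{do}_C(\mathcal{G})$. That graph will be anterial, and the argument finishes by monotonicity of separation.

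I first set up a new ordered partition. Each chain component $\tau$ with $\tau\cap C\neq\emptyset$ is split into the singletons $\{i\}$ for $i\in\tau\cap C$ and the set $\tau\backslash C$. Each singleton $\{i\}$ gets a constant assignment $X_i=a_i$, with empty parent set and a degenerate error. The part $\tau\backslash C$ gets the assignment $f^{\textnormal{do}(C)}_\tau$ restricted to $\tau\backslash C$. The singletons of $C$ are placed before $\tau\backslash C$ in the order, so that $\tau\cap C$ may act as parents of $\tau\backslash C$.

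I then run Algorithm \ref{obsalg} on this model and compare edges with Algorithm \ref{intalg}, using \eqref{keyeqint}.

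\emph{Undirected edges.} Inside $\tau\backslash C$ the law is $J^\tau_{\tau\backslash C}(\cdot\cd a_{\tau\cap C};x_{\textnormal{pa}(\tau)})$. Its pairwise dependencies $i\cancel{\ci} j\cd (\tau\backslash C)\backslash\{i,j\}$ are exactly those of $J^\tau$ conditioned on $\tau\cap C$. Hence every undirected edge they produce is already an edge of $\mathcal{G}$, and Algorithm \ref{intalg} keeps it. No undirected edges arise among nodes of $C$.

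\emph{Directed edges.} Now treat $\tau\cap C$ as parameters of $\tau\backslash C$. The full conditional of $j\in\tau\backslash C$ is $J^\tau_j(\cdot\cd x_{\tau\backslash j};x_{\textnormal{pa}(\tau)})$ with $x_{\tau\cap C}=a_{\tau\cap C}$. By Lemma \ref{localind}, this depends only on $\textnormal{ne}(j)\cup\textnormal{pa}(j)$. So its parents lie in $\textnormal{pa}_\mathcal{G}(j)\cup(\textnormal{ne}_\mathcal{G}(j)\cap C)$. These are exactly the parents of $j$ in $\textnormal{do}_C(\mathcal{G})$, since Step 1 of Algorithm \ref{intalg} turns $i-j$ into $i\rightarrow j$. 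Nodes in $C$ receive no parents, matching Step 2.

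\emph{Bidirected edges.} The singletons of $C$ have degenerate errors, which are independent of everything, so no bidirected edges touch $C$, matching Step 3. For the other parts, the error of $\tau\backslash C$ is still $\epsilon_\tau$, with the same joint law as before the intervention. Therefore the bidirected edges among nodes outside $C$ are exactly those of $\mathcal{G}$.

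From these three comparisons, the corresponding graph $\mathcal{G}^\ast$ is a subgraph of $\textnormal{do}_C(\mathcal{G})$.

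The remaining hypotheses of Theorem \ref{correctgenant} are that $\mathcal{G}^\ast$ is anterial and that $P^{\textnormal{do}(C)}$ is a compositional graphoid. The latter is assumed. For the former, I first show that $\textnormal{do}_C(\mathcal{G})$ is anterial. Any semi-directed cycle or path in $\textnormal{do}_C(\mathcal{G})$ that avoids $C$ already exists in $\mathcal{G}$. A path through $i\in C$ must enter $i$ by a directed or undirected edge, and all such edges into $i$ were removed or reoriented away from $i$. Hence $i$ can only be the start of a semi-directed path. Moreover, $i$ has no bidirected edges, so condition 1 of Definition \ref{def: ant} cannot fail at $i$. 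Both conditions of Definition \ref{def: ant} are inherited under taking subgraphs, so $\mathcal{G}^\ast$ is anterial too. Remark \ref{multicompremakr} covers the possibility that $\tau\backslash C$ splits into several chain components. Theorem \ref{correctgenant} then makes $P^{\textnormal{do}(C)}$ Markovian to $\mathcal{G}^\ast$.

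Finally, adding edges can only destroy separations. Thus every separation in $\textnormal{do}_C(\mathcal{G})$ holds in $\mathcal{G}^\ast$, and the Markov property transfers to $\textnormal{do}_C(\mathcal{G})$.

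I expect the main obstacle to be the directed-edge step. Algorithm \ref{obsalg} defines parents through an "for all values" dependence condition together with Lemma \ref{localind}, and this must be checked carefully once $x_{\tau\cap C}$ is frozen at $a_{\tau\cap C}$ rather than varying. A second delicate point is verifying monotonicity of the anterial separation criterion of \cite{lkayvan} under edge deletion. If a direct argument is awkward, I would appeal to maximality: $\mathcal{G}^\ast$ is a subgraph of $\textnormal{do}_C(\mathcal{G})$, and any path that is connecting in $\mathcal{G}^\ast$ remains connecting in the larger graph, because collider status is unchanged along a fixed path and anterior sets only grow.
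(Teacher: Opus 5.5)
Your proposal is correct and follows the paper's overall reduction. You re-express the intervened model as in \eqref{multicompreex}, with $X_{\tau\cap C}=a_{\tau\cap C}$ split off and $\tau\cap C$ feeding into $\tau\backslash C$. You then compare the output of Algorithm \ref{obsalg} with Algorithm \ref{intalg} via Lemma \ref{localind}, verify anteriality, and invoke Theorem \ref{correctgenant}.

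The one real difference is the comparison step. The paper (Proposition \ref{intrel}) claims that the corresponding graph of the intervened model \emph{equals} $\textnormal{do}_C(\mathcal{G})$. You claim only $\mathcal{G}^\ast\subseteq\textnormal{do}_C(\mathcal{G})$ and then pass the Markov property upward by monotonicity of separation.

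Your version is the more robust one. Equality needs two things to survive once $x_{\tau\cap C}$ is frozen at the particular value $a_{\tau\cap C}$: every dependence $i\cancel{\ci} j\cd \tau\backslash\{i,j\}$ of $J^\tau$, and every dependence of a full conditional on a parent. This holds e.g.\ for Gaussian $J^\tau$. It can fail under context-specific independence, for instance a three-way interaction among $i,j,k$ that vanishes at $x_k=a_k$. Containment uses only the easy direction. For the same reason, the point you flag as the main obstacle is harmless: whether $x_{\tau\cap C}$ is a free argument or frozen at $a_{\tau\cap C}$, Lemma \ref{localind} puts the resulting parents of $j$ inside $\textnormal{pa}(j)\cup(\textnormal{ne}(j)\cap C)$.

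You also never need chain-connectedness of $\textnormal{do}_C(\mathcal{G})$, the second half of Lemma \ref{do is chain}. Theorem \ref{correctgenant} is applied to $\mathcal{G}^\ast$, which is chain-connected by construction, since Step 3 of Algorithm \ref{obsalg} joins parts all-or-none. Anteriality of $\textnormal{do}_C(\mathcal{G})$ is used only to get anteriality of $\mathcal{G}^\ast$, via closure under edge deletion.

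The price is the monotonicity step, which is immediate from the separation criterion of \cite{lkayvan}. A connecting walk in $\mathcal{G}^\ast$ uses the same edges in $\textnormal{do}_C(\mathcal{G})$, so its sections and their collider status are unchanged, and in the path formulation anterior sets only grow. Maximality plays no role, so your fallback appeal to it is unnecessary.
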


Note that the intervened structural equilibrium model is still a structural equilibrium model. Using the relationship in \eqref{keyeqint}, Algorithm \ref{intalg} constructs the corresponding graph \(\textnormal{do}_C(\mathcal{G})\) 
from the observational graph \(\mathcal{G}\). 
Theorem \ref{correctgenant} is  used to show Theorem \ref{intmarkov}.

\begin{remark}[The splitting of chain components when intervening can be dealt with similarly as in Remark \ref{multicompremakr}.]
 In Figure \ref{intalgex}, the chain component \(\{2,3,4,5\}\) in \(\mathcal{G}\) is split into multiple chain components in \(\textnormal{do}_C(\mathcal{G})\).
    In general, given a chain component \(\tau\) in \(\mathcal{G}\), the nodes in \(\tau\backslash C\) can be partitioned into different chain components \(\tau_i^\textnormal{do}\) in  \(\textnormal{do}_C(\mathcal{G})\), such that \(\bigcup_{i=1}^n \tau^{\text{do}}_{i}=\tau\backslash C\).

As argued in Remark \ref{multicompremakr} (by the global Markov property of the constructed \(\tau\)), these components \(\tau^\textnormal{do}_i\) are jointly independent given \(a_{\tau\cap C}\), then we can decompose the functional assignment \(X_\tau=f^{\text{do}(C)}_\tau(X_{\text{pa}(\tau)}, \epsilon_\tau)\) in \eqref{intscm} into
\begin{align*}
    X_{\tau\cap C}&=a_{\tau\cap C}, \textnormal{ and }\nonumber\\
    X_{\tau^{\textnormal{do}}_i}&=f_{\tau^{\textnormal{do}}_i}(X_{\textnormal{pa}(\tau)}, X_{ \tau\cap C}, \epsilon_{\tau^{\textnormal{do}}_i}),
\end{align*}
for each \(\tau^{\textnormal{do}}_i\in \{\tau^{\textnormal{do}}_1,\ldots, \tau^{\textnormal{do}}_n\}\),  such that \(\epsilon_\tau=(\epsilon_{\tau^{\textnormal{do}}_1}, \ldots, \epsilon_{\tau^{\textnormal{do}}_n})\) where each \(\epsilon_{\tau^{\textnormal{do}}_i}\) are jointly independent and \(f_{\tau^{\textnormal{do}}_i}(x_{\textnormal{pa}(\tau)}, x_{\tau\cap C}, \epsilon_{\tau^{\textnormal{do}}_i})\sim J^\tau_{\tau^{\textnormal{do}}_i}(\cdot\cd a_{\tau\cap C};x_{\text{pa}(\tau)})\). \erk
\end{remark}
\subsection{Counterfactual Interpretations}\label{sec:counterfactual}

In line with counterfactual interpretations in \cite[Chapter 7]{pearlbook}; \cite{twin}, after intervening on a treatment set \(C\), using the structural equilibrium model, a coupling between observational variables \(X_V\) and intervened variables \(X^{\text{do}(C)}_{V}\) is induced via common errors. Here, we describe this coupling, and provide a graphical representation of this coupling.

Consider a structural equilibrium model with the corresponding chain-connected anterial graph \(\mathcal{G}\). Given some chain component \(\tau\) in \(\mathcal{G}\), intervention on a treatment set \(C\subseteq V\) by setting \(X_C=a_C\) for some fixed value \(a_C\) induces a coupling between observational variables \(X_{\tau}\) and intervened variables \(X^{\text{do}(C)}_{\tau}\) via 
the
common error \(\epsilon_\tau\) of the functional assignments of the combined structural equilibrium model, shown as follows:

\begin{align}\label{swigSCM}
 X_\tau&=f_\tau(X_{\text{pa}(\tau)}, \epsilon_\tau), \textnormal{ and }\nonumber\\
    X^{\text{do}(C)}_\tau&=f^{\text{do}(C)}_\tau(X^{\text{do}(C)}_{\text{pa}(\tau)}, \epsilon_\tau), \quad  \text{if } \tau\cap C\neq \emptyset\nonumber\\
        X^{\text{do}(C)}_\tau&=f_\tau(X^{\text{do}(C)}_{\text{pa}(\tau)}, \epsilon_\tau),  \quad \text{if } \tau\cap C=\emptyset,
    \end{align}
where \(f^{\text{do}(C)}_\tau\) is the intervened function from \eqref{intscm}. 

\subsubsection{Representing the coupling graphically}

Given an anterial graph \(\mathcal{G}\), and a node subset \(C\subseteq V\), we apply the graphical operation introduced in Algorithm \ref{newswagalg}, and denote the resulting graph as \(\phi(\mathcal{G};C)\). See Figure \ref{newswagfig} for an illustration of Algorithm \ref{newswagalg}. 

\begin{algorithm}[h]
   \caption{Counterfactual Graph, \(\phi\)}
   \label{newswagalg}
   {\bfseries Input:} anterial graph $\mathcal{G}$, node subset $C\subseteq V$
   
 {\bfseries Output:} graph \(\phi(\mathcal{G};C)\)
\begin{algorithmic}[1]
   \State Reproduce \(\mathcal{G}\) and construct \(\textnormal{do}_\textnormal{C}(\mathcal{G})\), relabelling all the nodes \(i\) in \(\textnormal{do}_\textnormal{C}(\mathcal{G})\) as \(i^{\textnormal{do}(C)}\).
   
   \State For nodes \(i\not\in C\) in \(\mathcal{G}\) such that \(C\cap \textnormal{ant(i)}=\emptyset\), remove \(i^{\textnormal{do}(C)}\) and replace edges between \(i^{\textnormal{do}(C)}\) and nodes of the form \(j^{\textnormal{do}(C)}\) with the same edge between \(i\) and \(j^{\textnormal{do}(C)}\).
   
   \State Add bidirected edges based on Table \ref{easyalg}.
\end{algorithmic}
\end{algorithm}
\begin{table}[h]
    \centering
    \begin{tabular}{|c l|l|}
    \hline
     & If in \(\mathcal{G}\)  &  Then \\
         \hline
         1.&\(i\not \in C\) & add \(i^{\text{do}(C)}\leftrightarrow i\) \\
     2. &\(i\leftrightarrow j\) and \(i\not \in C\) & add \(i^{\text{do}(C)}\leftrightarrow j\)\\
     3. &\(j\in \tau(i)\) and \(i\not \in C\) &  add \(i^{\text{do}(C)}\leftrightarrow j\)\\
      
         \hline
    \end{tabular}
    \caption{Bidirected edges between relabelled nodes in \(\textnormal{do}_C(\mathcal{G})\) and nodes in \(\mathcal{G}\), based on conditions on the corresponding nodes in \(\mathcal{G}\). }
    \label{easyalg}
\end{table}
\begin{figure}[h]
\begin{center}
\centerline{\includegraphics[width=\columnwidth]{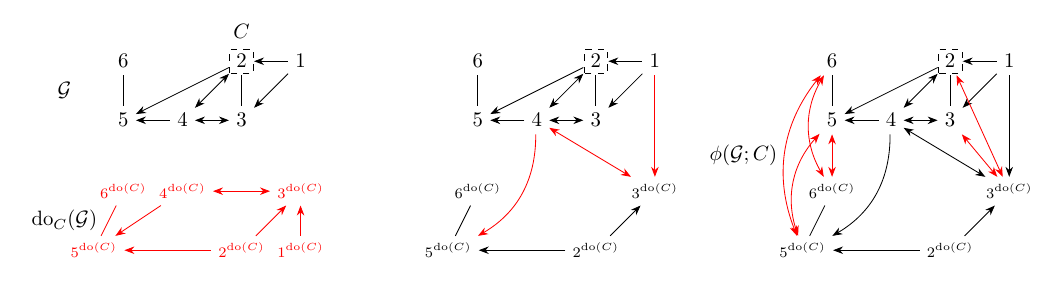}}
\caption{Consider the input anterial graph \(\mathcal{G}\) with \(C=2\). Left: Algorithm \ref{newswagalg} first creates \(\textnormal{do}_C(\mathcal{G})\) using Algorithm \ref{intalg} with the nodes relabelled alongside input graph \(\mathcal{G}\). Middle: Algorithm \ref{newswagalg} merges nodes in \(\mathcal{G}\) and \(\textnormal{do}_C(\mathcal{G})\). Right: Algorithm \ref{newswagalg} then introduces bidirected edges between \(\mathcal{G}\) and \(\textnormal{do}_C(\mathcal{G})\) based on Table \ref{easyalg} to obtain \(\phi(\mathcal{G}; 2)\).}
\label{newswagfig}
\end{center}
\end{figure}

Algorithm \ref{newswagalg} first introduces all the intervened variables \(X^{\textnormal{do}(C)}\) alongside the observational variables \(X_V\). For a node \(i\) in \(\mathcal{G}\) such that there does not exist a semi-directed path in \(\mathcal{G}\) from treatment set 
\(C\) to node \(i\), the observational variable \(X_i\) is  equal to \(X_i^{\textnormal{do}(C)}\). Step 2 of Algorithm \ref{newswagalg} accounts for this equality by merging all such observational variables \(X_i\) and intervened variables \(X^{\textnormal{do}(C)}_i\), represented by the nodes \(i\) and 
$i^{\textnormal{do}(C)}$, respectively, into the node \(i\).
The edges between \(i^{\textnormal{do}(C)}\) and the other nodes \(j^{\textnormal{do}(C)}\) are preserved as the same edge between \(i\) and \(j^{\textnormal{do}(C)}\).

Table \ref{easyalg} 
introduces bidirected edges between nodes \(i^{\textnormal{do}(C)}\) and nodes \(j\) in \(\mathcal{G}\) to capture the dependence of the chain component errors \(\epsilon_{\tau(i)}\) and \(\epsilon_{\tau(j)}\) that are shared between the corresponding variables \(X_{\tau(i)}^{\textnormal{do}(C)}\) and \(X_j\) (since errors are shared in \eqref{swigSCM}), when the corresponding node \(i\) in \(\mathcal{G}\) is not in treatment set \(C\).

In the case of DAGs, Algorithm \ref{newswagalg} reduces to the construction of counterfactual graphs from \cite{cfgraph}, where the common errors are marginalised over.  


Using the structural equilibrium model (corresponding to graph \(\mathcal{G}\)), the coupling between observational and intervened variables is described via \(\phi(\mathcal{G};C)\). Let node \(i^{\text{do}(C)}\) in graph \(\phi(\mathcal{G};C)\) represent \(X^{\text{do}(C)}_i\).

\begin{theorem}[Markov property of the counterfactual interpretation]\label{correctnewswag}

Let the observational random variables \(X_V\) be induced from a structural equilibrium model corresponding to the chain-connected anterial graph \(\mathcal{G}\). Let the combined structural equilibrium model in \eqref{swigSCM} induce the joint distribution \(P^*\) be the joint distribution over \(X_V\) and \(X_V^{\textnormal{do}(C)}\). If \(P^*\) is a compositional graphoid,  then \(P^*\) is Markovian to \(\phi(\mathcal{G};C)\).
\end{theorem}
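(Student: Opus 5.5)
The plan is to reduce Theorem \ref{correctnewswag} to Theorem \ref{correctgenant}. We exhibit the combined system \eqref{swigSCM} as a single structural equilibrium model over the enlarged node set \(V\cup V^{\textnormal{do}(C)}\), after the merging of Step 2 of Algorithm \ref{newswagalg}. We then check that Algorithm \ref{obsalg}, applied to this combined model, returns a graph whose separations include those of \(\phi(\mathcal{G};C)\). Concretely, the parts of the combined model are as follows.
\begin{itemize}
\item Each chain component \(\tau\) of \(\mathcal{G}\) is a part, carrying the error \(\epsilon_\tau\).
\item For each chain component \(\tau\) of \(\mathcal{G}\) with \(\tau\cap\textnormal{ant}(C)\) nonempty or \(\tau\cap C\neq\emptyset\), the intervened copy \(\tau^{\textnormal{do}(C)}\) is a part.
\item Every intervened copy that is never affected by the treatment is identified with its observational counterpart. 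Such a merge is legitimate because, when no semi-directed path runs from \(C\) into \(i\), \(X^{\textnormal{do}(C)}_i=X_i\) almost surely. This follows by induction along the ordering of parts, since \(f_\tau\) is applied to identical parent values and an identical error.
\end{itemize}
The copies \(\tau^{\textnormal{do}}_k\) created by splitting under intervention are handled as in the remark following Theorem \ref{intmarkov}. Intervened nodes in \(C\) become constants, so they may be dropped or kept as isolated nodes.

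Next, Algorithm \ref{obsalg} is run on this combined model. Steps 1 and 2 only involve the part-wise laws \(J^\tau\) and \(J^{\tau,\textnormal{do}(C)}\). By \eqref{keyeqint} these reproduce, inside the observational layer, the undirected and directed edges of \(\mathcal{G}\). Inside the intervened layer they reproduce the edges of \(\textnormal{do}_C(\mathcal{G})\), since the analysis behind Theorem \ref{intmarkov} shows that Algorithm \ref{intalg} is exactly the output of Algorithm \ref{obsalg} on the intervened model. Parents of an intervened part that were merged into observational nodes give precisely the edges \(i\to j^{\textnormal{do}(C)}\) (or undirected or bidirected analogues) created in Step 2 of Algorithm \ref{newswagalg}.

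Step 3 adds bidirected edges between parts whose errors are dependent. The pair \(\tau^{\textnormal{do}}\) and \(\tau\) share \(\epsilon_\tau\), giving row 1 of Table \ref{easyalg}. Copies of the same component share an error, giving row 3. A copy \(\tau(i)^{\textnormal{do}}\) and a component \(\tau(j)\) with \(i\leftrightarrow j\) in \(\mathcal{G}\) have dependent errors, giving row 2. Conversely, errors of unrelated components remain independent, so no other edges arise. The resulting graph is therefore \(\phi(\mathcal{G};C)\), possibly up to a subgraph relation, and adding edges only removes separations. It is also chain-connected, because each part carries a single error.

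The main obstacle is showing that this combined graph is anterial, as Theorem \ref{correctgenant} requires. There are two conditions to check.
\begin{itemize}
\item \textbf{No semi-directed cycles.} The ordering puts all observational parts first and then the intervened copies in the induced order. Edges between the layers run only from observational (merged) nodes to intervened nodes, so no semi-directed cycle can form.
\item \textbf{No semi-directed path between the endpoints of a bidirected edge.} Take a new bidirected edge \(i^{\textnormal{do}(C)}\leftrightarrow j\). A semi-directed path from \(j\) to \(i^{\textnormal{do}(C)}\) would have to leave the observational layer through a merged node \(k\), which lies outside \(\textnormal{ant}(C)\cup C\). Projecting the path back to \(\mathcal{G}\) then gives a semi-directed path from \(j\) to \(i\) in \(\mathcal{G}\), or into \(\tau(i)\).
\end{itemize}
In the second case, row 3 and the case \(j\in\tau(i)\) need care. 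The node \(i^{\textnormal{do}}\) is unmerged precisely because \(\textnormal{ant}(i)\) meets \(C\) or \(i\in C\). The intervention removes incoming edges at \(C\) and orients the undirected edges at \(C\) outward, and I would argue that this cuts any semi-directed route from \(\tau(i)\) back into \(i^{\textnormal{do}}\). Rows 1 and 2 then reduce to condition 1 of Definition \ref{def: ant} for \(\mathcal{G}\). A semi-directed path in the other direction, from \(i^{\textnormal{do}(C)}\) to \(j\), is impossible because no edges point from the intervened layer into the observational layer.

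I expect this case check, especially paths through merged nodes inside split chain components, to take the most work. With anteriality established and \(P^*\) assumed to be a compositional graphoid, Theorem \ref{correctgenant} gives the Markov property with respect to \(\phi(\mathcal{G};C)\).
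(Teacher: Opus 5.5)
Your route is the paper's: show that Algorithm \ref{obsalg} applied to \eqref{swigSCM} returns \(\phi(\mathcal{G};C)\) (Proposition \ref{swigrel}), show that \(\phi(\mathcal{G};C)\) is chain-connected anterial (Proposition \ref{easyantoutput}), then invoke Theorem \ref{correctgenant}. Your layer-ordering argument against semi-directed cycles and your subgraph hedge are sound. As written, however, two things need repair.

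First, you reverse the direction in two places. Intervened copies are needed for the chain components meeting \(\textnormal{po}(C)\cup C\), not \(\textnormal{ant}(C)\). The merged nodes are exactly \(V\backslash(\textnormal{po}(C)\cup C)\), and they often \emph{do} lie in \(\textnormal{ant}(C)\). For example, in \(1\rightarrow 2\rightarrow 3\) with \(C=\{2\}\), node \(1\in\textnormal{ant}(C)\) is merged, while \(3\notin\textnormal{ant}(C)\) needs a copy. So your first bullet, read literally, gives \(3\) no intervened copy. Likewise, your claim that a merged node ``lies outside \(\textnormal{ant}(C)\cup C\)'' is false.

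Second, the case you leave open (rows 1 and 3 of Table \ref{easyalg}) does not follow from the intervention deleting or reorienting edges at \(C\). With the correct set it is immediate, for three reasons.
\begin{itemize}
\item \(\textnormal{po}(C)\cup C\) is closed under semi-directed reachability in \(\mathcal{G}\).
\item The edges among observational nodes of \(\phi(\mathcal{G};C)\) are those of \(\mathcal{G}\).
\item The only non-bidirected edges from the observational layer into the intervened one are \(k\rightarrow m^{\textnormal{do}(C)}\) with \(k\) merged.
\end{itemize}
If \(i\notin C\) and \(i^{\textnormal{do}(C)}\) is unmerged, then \(\tau(i)\subseteq\textnormal{po}(C)\cup C\). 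Hence no semi-directed path from any \(j\in\tau(i)\) reaches a merged node, and so none reaches \(i^{\textnormal{do}(C)}\). Equivalently, in your projection picture the projected path contains the directed edge \(k\rightarrow m\). Closing it with an undirected path inside \(\tau(i)\) (trivial if \(j=i\)) gives a semi-directed cycle in \(\mathcal{G}\). So rows 1 and 3 rest on acyclicity, not on condition 1 of Definition \ref{def: ant}.

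Projection is genuinely needed only for row 2 with \(j\) merged. There it gives a semi-directed path from \(j\) to \(i\) in \(\mathcal{G}\), contradicting \(i\leftrightarrow j\); this is essentially the paper's argument.

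Finally, you check only the new bidirected edges. Those inherited from \(\mathcal{G}\) and \(\textnormal{do}_C(\mathcal{G})\) also need a line. Since no semi-directed path can return from the intervened layer, they reduce to anteriality of \(\mathcal{G}\) and of \(\textnormal{do}_C(\mathcal{G})\) (Lemma \ref{do is chain}).
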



Theorem \ref{correctnewswag} is shown by observing that the combined structural equilibrium model \eqref{swigSCM} is another structural equilibrium model, with the corresponding graph \(\phi(\mathcal{G}; C)\). Theorem \ref{correctgenant} is  used to show Theorem \ref{correctnewswag}. 

Note that the nodes merged in Step 2 of Algorithm \ref{newswagalg} correspond to counterfactual variables that are equivalent to observational variables, that is, \(X^{\textnormal{do}(C)}_i=X_i\); thus Theorem \ref{correctnewswag} describes the full joint distribution of observational and intervened variables. This merging step also avoids certain deterministic relationships between nodes in the graph \(\phi(\mathcal{G}; C)\), which can lead to violations of the faithfulness condition, leading to erroneous results as highlighted in \citet[Figure 15]{swig}. However, errors are still shared between observational and intervened varibles, as indicated by the bidirected edge \(3\leftrightarrow 3^{\textnormal{do}(C)}\) in \(\phi(\mathcal{G}; C)\) of Figure \ref{newswagfig}, which can potentially lead to violations of faithfulness. This motivates an alternative graphical representation inspired by single-world intervention graphs (SWIGs) from \cite{swig}. 

\subsubsection{Single-world interpretation}

Using Algorithm \ref{newswagalg}, the constructed \(\phi(\mathcal{G}; C)\) describes the relationship of both observational and interventional versions of the same variable entailed by the combined structural equilibrium model \eqref{swigSCM}, represented by having both \(i^{\textnormal{do}(C)}\) and \(i\) as nodes in the same graph.

However, outside of settings where one has a full description of the underlying data generating process (in our case, the structural equilibrium model), causal inference has the fundamental problem where only \emph{one} version of the same variable is observed \citep{fundamentalprob}. To accommodate this issue, we will also graphically represent the single-world counterfactual distribution consisting of only one version (observational or intervened) of each variable that is not in the treatment set \(C\), in line with \cite{swig}.

Let the \emph{posterior} of \(C\) in the observational graph \(\mathcal{G}\) be given by
     $$\textnormal{po}(C) =\{j\not\in C :  \textnormal{ there is a semi-directed path from } i\in C \textnormal{ to }  j \textnormal{ in } \mathcal{G}\}.$$ 

By marginalising the counterfactual graph \(\phi(\mathcal{G}; C)\) using \(\alpha_{\textnormal{m}}\) with \(M=\textnormal{po}(C)\), we obtain \(\alpha_{\textnormal{m}}(\phi(\mathcal{G}; C); \textnormal{po}(C))\), which we call \(\phi'(\mathcal{G}; C)\); see Figure \ref{subgex} for an example.

\begin{figure}[h]
\begin{center}
\centerline{\includegraphics[width=0.8\columnwidth]{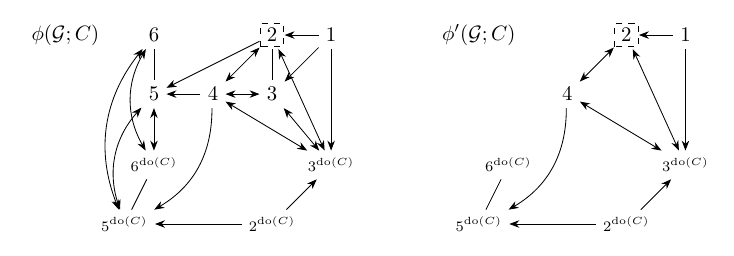}}
\caption{Left: \(C=2\) and \(\textnormal{po}(C) = \{3, 5, 6\}\).  Right: The single-world anterial interventional graph (SWAIG) \(\phi'(\mathcal{G}; C)\) obtained by marginalising \(\textnormal{po}(C)\) from \(\phi(\mathcal{G}; C)\).}
\label{subgex}
\end{center}
\end{figure}


Based on Theorem \ref{correctnewswag}, we have the following.

\begin{proposition}[Markov property of the single-world interpretation]\label{propswigcg}
Let the combined structural equilibrium model in \eqref{swigSCM} induce \(P^*\), the joint distribution over \(X_V\) and \(X_V^{\textnormal{do}(C)}\), and let \(P^*\) be a compositional graphoid. Let \(P^*_{\textnormal{sub}}\) be the distribution of variables in \(\phi'(\mathcal{G};C)\). The distribution \(P^*_{\textnormal{sub}}\) is Markovian to \(\phi'(\mathcal{G}; C)\).
 \end{proposition}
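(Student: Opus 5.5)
The proof combines Theorem \ref{correctnewswag} with the marginalisation property \eqref{propmarg} of \(\alpha_{\textnormal{m}}\). Let \(V^*\) be the node set of \(\phi(\mathcal{G};C)\), and let \(V^*\backslash \textnormal{po}(C)\) be the node set of \(\phi'(\mathcal{G};C)=\alpha_{\textnormal{m}}(\phi(\mathcal{G};C);\textnormal{po}(C))\). Here \(\textnormal{po}(C)\subseteq V\) is read as the set of observational nodes \(i\) in \(\phi(\mathcal{G};C)\). These are not merged in Step 2 of Algorithm \ref{newswagalg}, since every \(i\in\textnormal{po}(C)\) has \(C\cap\textnormal{ant}(i)\neq\emptyset\). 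By definition, \(P^*_{\textnormal{sub}}\) is then the marginal of \(P^*\) over \(X_{V^*\backslash\textnormal{po}(C)}\).

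First, since \(P^*\) is a compositional graphoid, Theorem \ref{correctnewswag} gives that \(P^*\) is Markovian to \(\phi(\mathcal{G};C)\). Before using \eqref{propmarg}, I will check that \(\phi(\mathcal{G};C)\) is anterial, because the marginalisation operation of \cite{margincond} is stated for anterial input. This follows from the fact, noted after Theorem \ref{correctnewswag}, that \(\phi(\mathcal{G};C)\) is the corresponding graph of the combined structural equilibrium model \eqref{swigSCM}. That is exactly the setting in which Theorem \ref{correctnewswag} was derived from Theorem \ref{correctgenant}, which assumes the corresponding graph is anterial. If this is not already established in the proof of Theorem \ref{correctnewswag}, I will verify it directly. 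The bidirected edges added by Table \ref{easyalg} join an intervened node \(i^{\textnormal{do}(C)}\) to an observational node \(j\). There is no semi-directed path between these two, because the edges of \(\phi(\mathcal{G};C)\) never point from the intervened copy back into the observational copy, and the observational part inherits anteriality from \(\mathcal{G}\).

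Next, fix disjoint \(A,B,D\subseteq V^*\backslash\textnormal{po}(C)\) with \(A\perp_{\phi'(\mathcal{G};C)} B\cd D\). By \eqref{propmarg} applied with \(M=\textnormal{po}(C)\), this gives \(A\perp_{\phi(\mathcal{G};C)} B\cd D\). The Markov property from the first step then yields \(A\ci B\cd D\) under \(P^*\). This conditional independence involves only variables in \(V^*\backslash\textnormal{po}(C)\), so it holds equally under the marginal \(P^*_{\textnormal{sub}}\). Hence \(P^*_{\textnormal{sub}}\) is Markovian to \(\phi'(\mathcal{G};C)\).

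I expect the main obstacle to be the anteriality check and the bookkeeping of which nodes are marginalised. In particular, I must confirm that the node set of \(\phi'(\mathcal{G};C)\) is exactly the single-world variable set, namely the observational nodes outside \(\textnormal{po}(C)\) together with all intervened nodes. The implication step itself is immediate. No compositional graphoid assumption on \(P^*_{\textnormal{sub}}\) is needed, because Markovianity passes to marginals whenever the graphical separation is preserved.
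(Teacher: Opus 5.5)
Your proposal is correct and follows the paper's own argument. The paper's one-line proof likewise combines Theorem~\ref{correctnewswag} with the separation-preserving property \eqref{propmarg} of \(\alpha_{\textnormal{m}}\) for \(M=\textnormal{po}(C)\); you simply spell out the steps. The anteriality of \(\phi(\mathcal{G};C)\) needed to apply \(\alpha_{\textnormal{m}}\) is established in the paper as Proposition~\ref{easyantoutput}, so cite it directly. Your informal fallback check, as written, does not handle semi-directed paths that start at a merged observational node and pass into the intervened copy.
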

 


When \(\mathcal{G}\) is a DAG, marginalising \(\textnormal{po}(C)\) from \(\phi(\mathcal{G}; C)\) is equivalent to taking the subgraph over the remaining nodes not in \(\textnormal{po}(C)\), since there are no nodes in \(j\in \textnormal{po}(C)\) and \(k\not\in \textnormal{po}(C)\) such that \(j\rightarrow k\). Observe that SWIGs \citep{swig} relabel the descendants \(i\) of \(C\) as \(i^{\textnormal{do}(C)}\), and nodes \(j\in C\) get split into \(j^{\textnormal{do}(C\backslash j)}\) and \(j^{\textnormal{do}(C)}\). Thus, \(\phi'(\mathcal{G}; C)\) contains all variables in the SWIG \emph{except} the intervened variables of the form \(X_i^{\textnormal{do}(B)}\) where \(B\subseteq C\). 
%
%
%

In the case when the treatment set \(C=i\) is a singleton, it can be seen that the resulting graph \(\phi'(\mathcal{G}; C)\) contains \emph{all} the variables in the SWIG. Thus, we will refer to \(\phi'(\mathcal{G}; C)\), in the case when \(\mathcal{G}\)  is a general anterial graph, as a \emph{single-world anterial interventional graph} (SWAIG). Proposition \ref{propswigcg} then implies the Markov property of SWIGs with the additional condition that the joint distribution over observational and intervened variables is a compositional graphoid.

In Figure \ref{subgex}, note that there are no more bidirected edges between observational nodes \(i\) and \(j^{\textnormal{do}(C)}\) in \(\phi'(\mathcal{G}; C)\). This is because we consider only one version (observational or intervened) of each variable, hence common shared errors, and thus faithfulness violations, are avoided. However, note that, by including fewer variables, \(\phi'(\mathcal{G}; C)\) implies less conditional independencies than \(\phi(\mathcal{G}; C)\).

\begin{remark}[Taking into account of all nodes in the SWIG]\label{includingswig}
    When \(\mathcal{G}\) is a DAG, nodes of the form \(j^{\textnormal{do}(C\backslash j)}\) can be included by extending the construction of Algorithm \ref{newswagalg} based on parallel-world graphs \citep{multinetwork}; see section \ref{subappsec} to see how SWIGs are subgraphs of counterfactual graphs. \erk
\end{remark}

\section{Confounder Selection with Constraints}\label{tnm3}

The expression of common causal estimands such as the CATE and ETT in terms of the observational distribution depends on the \emph{consistency} condition \citep[Page 4]{exch} and the unconfoundedness assumption \eqref{defconfound}, with the consistency condition being satisfied when intervening structural equilibrium models. Since structural equilibrium models induce a joint distribution over observational and intervened variables via common errors based on \eqref{swigSCM}, the unconfoundedness assumption, and thus the notion of adjustment sets are well-defined. 

Thus, we will consider the problem of selecting a minimal adjustment set \(S\) for the confounding of treatment variable \(X_C\) on outcome variable \(X_O\)  subject to the following constraints $L$ and $U$:
\begin{align}\label{ccscons}
    L\subseteq S\subseteq U,
\end{align}
where \(L\) and \(U\) are fixed set constraints. 

These constraints may arise from some covariates that have to be excluded from \(S\) due to the economic and ethical costs of data collection, or covariates \(L\) that should be included in \(S\) such as prognostic variables following regulatory recommendations in adjusting for covariates from agencies like the FDA or EMA \citep[Section III]{fdaadjust}.


We assume we are given an anterial graph \(\mathcal{G}^*\) over the  nodes \(V^*\) such that the joint distribution of \(X^{\text{do(C)}}_O, X_C, X_U\), (and potentially other nuisance variables) is Markovian to \(\mathcal{G}^*\). We also assume that the treatment set \(C\) and outcome set \(O\) are singletons, which we emphasise using the notation \(c\) and \(o\). In the case of a structural equilibrium model corresponding to a chain-connected anterial graph \(\mathcal{G}\) that induces the distribution \(P^*\) over \(X_V\) and \(X_V^{\textnormal{do}(C)}\), Theorem \ref{correctnewswag} and Proposition \ref{propswigcg} imply that the counterfactual graph \(\phi(\mathcal{G};c)\) and the SWAIG \(\phi'(\mathcal{G}; c)\) are examples of the anterial graph \(\mathcal{G}^*\).

In practice, the graph \(\phi(\mathcal{G}; c)\), and thus \(\phi'(\mathcal{G}; c)\) after applying \(\alpha_\textnormal{m}\) to \(\phi(\mathcal{G}; c)\), can be obtained by applying Algorithm \ref{newswagalg} to the output of causal discovery algorithms that return an anterial graph \(\mathcal{G}\) from an input observational distribution \(P\)---these algorithms have been introduced by \cite{sad-meek}.

The assumption \eqref{defconfound} that the set \(S\) has to satisfy can then be expressed as a graphical separation of the graph \(\mathcal{G}^*\). In the case when \(\mathcal{G}^*\) is a maximal ancestral graph, standard approaches such as from \cite{construct} can then be applied to obtain \(S\).

 When \(\mathcal{G}^*\) is a general anterial graph, we present the element-wise procedure in Algorithm \ref{ccsalg}, taking the graph \(\mathcal{G}^*\) and the nodes \(o^{\textnormal{do}(c)}\) and \(c\) representing  \(X^{\text{do(c)}}_o\) and \(X_c\), respectively, as inputs to return such a set \(S\).

\begin{algorithm}[h]
   \caption{Constrained Confounder Selection}
   \label{ccsalg}
   {\bfseries Input:} anterial graph $\mathcal{G}^*$, nodes \(o^{\textnormal{do}(c)}, c\in V^*\), node subsets \(L, U\subseteq V^*\) (not containing \(c\) and \(o^{\textnormal{do}(c)}\))
   
 {\bfseries Output:} set of nodes \(S\)
\begin{algorithmic}[1]
   \State Obtain \(\alpha_{\textnormal{m}}(\alpha_{\textnormal{c}}(\mathcal{G}^*;L);V^*\backslash (U\cup \{c, o^{\textnormal{do}(c)}\}))\), call this \(\mathcal{G}^*_0\).
   \State Maximise \(\mathcal{G}^*_0\), by obtaining \(\text{max}(\mathcal{G}^*_0)\).
   \If{\(c\) is adjacent to \(o^{\text{do}(c)}\) in \(\text{max}(\mathcal{G}^*_0)\)}
   \State {\bfseries 
 return} no feasible set.
   \Else
   \State Let \(n=0\) and \(S=U\).
    \State \textbf{repeat until no such node}
   \State Select a node \(i\in S\backslash L\) such that \(c\) and \(o^{\textnormal{do}(c)}\) are not adjacent in \(\text{max}(\mathcal{G}^*_{n+1})\), where \(\mathcal{G}^*_{n+1}=\alpha_{\textnormal{m}}(\mathcal{G}^*_n; i)\).
   \State \(S\leftarrow S\backslash i\).
   
   \State \(n\leftarrow n+1\).
   \State \textbf{return} \(S\).
   \EndIf
\end{algorithmic}
\end{algorithm}
    \begin{figure*}[ht]
\begin{center}
\centerline{\includegraphics[]{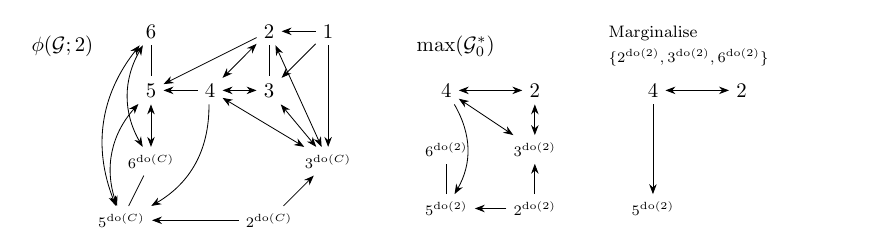}}
 \caption{Left: Graph \(\phi(\mathcal{G};2)\) from Figure \ref{newswagfig} is used as the input \(\mathcal{G}^*\) for Algorithm \ref{ccsalg}, with \(L=1\) and \(U=V^*\backslash \{2, 3,5,6, 5^{\textnormal{do}(2)}\}\), \(c\) and \(o^{\textnormal{do}(c)}\) being nodes \(2\) and \(5^{\textnormal{do}(2)}\) respectively. Middle: Graph \(\textnormal{max}(\mathcal{G}^*_0)\) obtained after conditioning on \(L\) marginalising \(V^*\backslash (U\cup \{c, o^{\textnormal{do}(c)}\})\) and maximising. The feasibility check passes since \(2\) is not adjacent \(5^{\text{do}(2)}\) in \(\textnormal{max}(\mathcal{G}^*_0)\). Right: Graph after Algorithm \ref{ccsalg} selects \(\{2^{\text{do}(2)},3^{\text{do}(2)}, 6^{\text{do}(2)}\}\) to marginalise and maximise. Since further marginalising \(4\) would introduce edge \(2\leftrightarrow 5^{\text{do}(2)}\), the algorithm terminates and returns \(\{1,4\}\) as \(S\).}
\label{algexfig}
\end{center}
\end{figure*}

Figure \ref{algexfig} illustrates how Algorithm \ref{ccsalg} works. After marginalising the nodes not in \(U\cup \{c, o^{\textnormal{do}(c)}\}\) and conditioning on the nodes in \(L\) (using operations \(\alpha_{\textnormal{m}}\) and \(\alpha_{\textnormal{c}}\) from Section \ref{bg}) to obtain \(\mathcal{G}^*_{0}\), Algorithm \ref{ccsalg} checks \(\textnormal{max}(\mathcal{G}^*_0)\) to determine if there is an adjustment set \(S\) that satisfies \eqref{ccscons}. Algorithm \ref{ccsalg} then removes nodes \(i\) from the upper set constraint \(U\) in a way that ensures, after removal, the existence of an adjustment set \(S\) where \(L \subseteq S\subseteq U\backslash i\). This removal is iteratively applied to the upper bound until no nodes can be removed and the remaining upper bound is returned as \(S\).


The general set-up of Algorithm \ref{ccsalg}, depending on the constraints \(L\) and \(U\), allows for the inclusion of intervened variables in the adjustment set \(S\). When \(S\) contains intervened variables, the obtained identification formula for causal estimands then contains terms from the interventional distribution and is not identifiable observationally. By excluding all the intervened variables from the set constraint \(U\) in Algorithm \ref{ccsalg}, only observational variables are allowed in the adjustment set \(S\). We will not enforce this constraint, since there are cases such as sequential exchangeability in \citet[Section 19.5]{exch} where \(S\) contains intervened variables.

Note that after obtaining \(\textnormal{max}(\mathcal{G}^*_0)\), if \(c\) is not adjacent to \(o^{\textnormal{do}(c)}\), Algorithm \ref{ccsalg} can in principle return the set \(L \cup \text{ant}(\{c, o^{\textnormal{do}(c)}\})\) in \(\textnormal{max}(\mathcal{G}^*_0)\) as \(S\) and terminate. Since nodes in \(\textnormal{max}(\mathcal{G}^*_0)\) (that are not \(c\) and \(o^{\textnormal{do}(c)}\)) are contained in \(U\), the returned \(S\) would also satisfy \(S\subseteq U\). However, in general, this does not lead to a \emph{minimal} adjustment set. Consider \(\textnormal{max}(\mathcal{G}^*_0)\) in Figure \ref{algexfig}, since \(L \cup 4\) is an adjustment set such that \(L \cup 4\subset L\cup \text{ant}(\{2, 5^{\textnormal{do}(2)}\})\), it can be seen that \(L\cup \text{ant}(\{2, 5^{\textnormal{do}(2)}\})\) is not minimal. Similarly, \(L\cup \text{pa}(\{c, o^{\textnormal{do}(c)}\})\) in \(\textnormal{max}(\mathcal{G}^*_0)\) is not minimal.

\begin{theorem}[Correctness of Algorithm \ref{ccsalg}]\label{CCSworks}Let \(\mathcal{G}^*\) and the single nodes \(c\) and \(o^{\textnormal{do}(c)}\) be the inputs of Algorithm \ref{ccsalg} and let the joint distribution of \(X^{\textnormal{do}(c)}_o, X_c,  \
X_U\), (and potentially other nuisance variables) be Markovian to the anterial graph \(\mathcal{G}^*\). When Algorithm \ref{ccsalg} returns a set \(S\), the set \(S\) 
    is an adjustment set 
    that satisfies \(L\subseteq S\subseteq U\).

    Furthermore, let the joint distribution of \(X_o^{\textnormal{do}(c)}, X_c, X_U, \) (and potentially other nuisance variables) be faithful to \(\mathcal{G}^*\). It then holds that \(S\) is minimal and if Algorithm \ref{ccsalg} 
    returns no feasible set, then there does not exist an adjustment set \(S\) 
    that satisfies \(L\subseteq S\subseteq U\).
\end{theorem}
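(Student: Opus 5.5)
The argument turns on one equivalence. Non-adjacency of \(c\) and \(o^{\textnormal{do}(c)}\) in a maximised graph over \(\{c,o^{\textnormal{do}(c)}\}\cup W\) (with \(L\) conditioned on) means \(c\perp o^{\textnormal{do}(c)}\cd L\cup S'\) for some \(S'\subseteq W\). I will establish this equivalence first and then run an invariant through the iterations of Algorithm \ref{ccsalg}.

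\textbf{Key lemma.} For \(W\subseteq U\), let \(\mathcal{H}_W=\alpha_{\textnormal{m}}(\alpha_{\textnormal{c}}(\mathcal{G}^*;L);V^*\backslash(W\cup L\cup\{c,o^{\textnormal{do}(c)}\}))\). Combining \eqref{propcond} and \eqref{propmarg}, for every \(S'\subseteq W\backslash L\) we have
\begin{align*}
c\perp_{\mathcal{G}^*} o^{\textnormal{do}(c)}\cd L\cup S' \iff c\perp_{\mathcal{H}_W} o^{\textnormal{do}(c)}\cd S'.
\end{align*}
Since \(\textnormal{max}(\mathcal{H}_W)\) is Markov equivalent to \(\mathcal{H}_W\) and is maximal, non-adjacency of \(c,o^{\textnormal{do}(c)}\) in \(\textnormal{max}(\mathcal{H}_W)\) is, by Definition \ref{maxim}, equivalent to the existence of such a separating \(S'\). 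The converse direction needs the fact that separated nodes are non-adjacent, which holds because adjacent nodes are never separated in anterial graphs.

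I also need that the graphs \(\mathcal{G}^*_n\) in the algorithm coincide with \(\mathcal{H}_{W_n}\), where \(W_n\) is the current \(S\). This requires composing successive marginalisations, \(\alpha_{\textnormal{m}}(\alpha_{\textnormal{m}}(\mathcal{H};M_1);M_2)\), and commuting them with \(\alpha_{\textnormal{c}}\). At the level of separations this follows directly from \eqref{propmarg} and \eqref{propcond}, and separations are all the lemma uses, so Markov equivalence suffices here.

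\textbf{Soundness.} The loop maintains the invariant that \(c,o^{\textnormal{do}(c)}\) are non-adjacent in \(\textnormal{max}(\mathcal{G}^*_n)\). It holds initially because the check passed, and it is preserved by the selection rule. At termination \(S\supseteq L\) and \(S\subseteq U\), since only elements of \(S\backslash L\) are removed.

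The remaining point, and the main obstacle, is to show that the returned set itself, not merely some subset of it, separates. Suppose no further node can be removed while some separating set \(L\cup S'\) exists with \(S'\subsetneq S\backslash L\). I would argue that in the final graph over \(S\cup\{c,o^{\textnormal{do}(c)}\}\), no-removability forces a path in \(\mathcal{G}^*_n\) between \(c\) and \(o^{\textnormal{do}(c)}\) through every \(i\in S\backslash L\). My first attempt is to use the anterial separation criterion together with \citet{lkayvan}'s characterisation of maximality: non-adjacency in the maximised graph should give separation by \(\textnormal{ant}(\{c,o^{\textnormal{do}(c)}\})\), and the algorithm's stopping rule should force \(S\backslash L\subseteq\textnormal{ant}(\{c,o^{\textnormal{do}(c)}\})\) in the final graph. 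Then \(c\perp o^{\textnormal{do}(c)}\cd L\cup S\) holds in \(\mathcal{G}^*\). The Markov property then yields \(X^{\textnormal{do}(c)}_o\ci X_c\cd X_S\), which is \eqref{defconfound}.

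\textbf{Infeasibility and minimality under faithfulness.} If the algorithm reports infeasibility, \(c\) and \(o^{\textnormal{do}(c)}\) are adjacent in \(\textnormal{max}(\mathcal{G}^*_0)\). By the lemma there is then no \(S'\subseteq U\backslash L\) with \(c\perp o^{\textnormal{do}(c)}\cd L\cup S'\), and faithfulness converts this into the absence of any conditional independence, so no feasible adjustment set exists.

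For minimality, suppose some \(S''\subsetneq S\) with \(L\subseteq S''\) were an adjustment set. By faithfulness it gives a graphical separation, so marginalising any \(i\in S\backslash S''\) from the final graph would leave \(c,o^{\textnormal{do}(c)}\) non-adjacent after maximisation, by the lemma applied with \(W=S\backslash i\). That contradicts termination. Minimality is therefore understood relative to the constraint \(L\subseteq S\).
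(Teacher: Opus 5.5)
Your key lemma and your infeasibility and minimality arguments are correct and match the paper's proof. Your remark that only the separation models of the graphs \(\mathcal{G}^*_n\) matter is also sound: iterating \eqref{propmarg} can replace Proposition \ref{compose}, which is a harmless simplification. The gap is in the step you flag as the main obstacle, namely showing that the returned \(S\) itself separates \(c\) from \(o^{\textnormal{do}(c)}\), rather than only some subset of it. Your route there rests on two claims you leave unproved. The first is that non-adjacency in a maximal anterial graph gives separation by \(\textnormal{ant}(\{c,o^{\textnormal{do}(c)}\})\); this is neither among the stated results nor derived from them. The second is that the stopping rule forces \(S\backslash L\subseteq\textnormal{ant}(\{c,o^{\textnormal{do}(c)}\})\). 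As written, the soundness claim, which is the first assertion of the theorem, is not established.

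What you missed is that the situation you suppose already contradicts your own key lemma. Suppose \(L\cup S'\) separates \(c\) and \(o^{\textnormal{do}(c)}\) in \(\mathcal{G}^*\) with \(S'\subsetneq S\backslash L\), and pick \(i\in(S\backslash L)\backslash S'\). Then \(L\cup S'\subseteq S\backslash i\). The lemma with \(W=S\backslash i\) therefore shows that \(c\) and \(o^{\textnormal{do}(c)}\) are non-adjacent in \(\textnormal{max}(\alpha_{\textnormal{m}}(\mathcal{G}^*_n;i))\). So \(i\) is still removable, contradicting termination.

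Hence no \(S''\) with \(L\subseteq S''\subsetneq S\) separates. Your invariant guarantees that some \(S''\) with \(L\subseteq S''\subseteq S\) does, so that set must be \(S\) itself, and the Markov property then gives \eqref{defconfound}. This is exactly the paper's argument. It is also the same removability step you already use for minimality, applied to graphical separation so that faithfulness is not needed. Even your anterior route would need this step to justify its second claim, so the detour gains nothing.
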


Theorem \ref{CCSworks} states that if we have access to an anterial graph \(\mathcal{G}^*\) (e.g.\ \(\phi(\mathcal{G}; c)\)) to which the distribution over a set of variables that contain the observational and intervened variable of interest is Markovian to, when Algorithm \ref{ccsalg} returns a set \(S\), 
this set is an adjustment set \(S\) that satisfies the constraint \(L\subseteq S \subseteq U\). 
Furthermore, if this distribution is faithful to \(\mathcal{G}^*\),  then we can determine the existence of such a set \(S\) via the feasibility check of Algorithm \ref{ccsalg}, and that the returned adjustment set \(S\) is minimal.

\begin{remark}[Faithfulness and confounder selection]
    Note that the unconfoundedness assumption \eqref{defconfound}, and thus the notion of an adjustment set, does not refer to a graph and is purely probabilistic. Algorithm \ref{ccsalg} determines the existence of a graphical separation set and finds a  minimal graphical separation set in \(\mathcal{G}^*\). The graphical separation sets correspond to the adjustment sets only when \(\mathcal{G}^*\) is a faithful graph. \erk
\end{remark}

Note that in general, the maximising operation, \(\textnormal{max}\), is required for Algorithm \ref{ccsalg}, even if we assume maximality of the input graph \(\mathcal{G}^*\). As shown in Figure \ref{exmax}, a DAG which is always maximal can result in a non-maximal ancestral graph when marginalised (using \(\alpha_{\textnormal{m}}\)) causing the feasibility check using \(\mathcal{G}^*_0\) to be inaccurate.

\begin{figure}[h]
\begin{center}
\centerline{\includegraphics[]{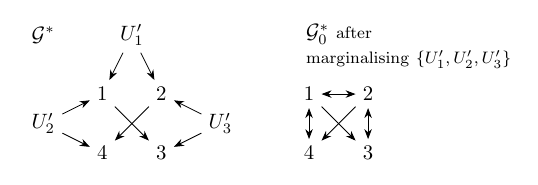}}
\caption{Left: Input anterial graph \(\mathcal{G}^*\) to algorithm \ref{ccsalg}, with \(L=\emptyset\), \(U=\{1,2,3,4\}\), and nodes \(C\) and \(O^{\textnormal{do}(C)}\) being \(3\) and \(4\). Note that \(\mathcal{G}^*\) is a DAG and is thus maximal. Right: Graph \(\mathcal{G}^*_0\) after marginalising over \(\{U'_1, U'_2, U'_3\}\) in Algorithm \(\ref{ccsalg}\) without maximising. We see that despite the absence of edges between nodes \(3\) and \(4\), there doesn't exist a set \(S\subseteq \{1,2,3,4\}\) such that \(3\perp_{\mathcal{G}^*} 4\cd S\).}
\label{exmax}
\end{center}
\end{figure}


To select an adjustment set \(S\) such that \(L\subseteq S\subseteq U\) for arbitrary set constraints \(L\) and \(U\), conventional approaches typically enumerate the adjustment sets efficiently \citep{construct} and implement the constraints post-hoc. By representing \(X_o^{\textnormal{do}(c)}\) and \(X_c\) in a Markovian graph \(\mathcal{G}^*\) (such as \(\phi(\mathcal{G};c)\) or \(\phi'(\mathcal{G}; c)\), obtained from the observational graph \(\mathcal{G}\)), when \(\mathcal{G}^*\) is a maximal ancestral graph, efficient approaches for identifying sets satisfying graphical separations such as from \cite{construct} can be used to select \(S\). When \(\mathcal{G}^*\) is a general anterial graph, Algorithm \ref{ccsalg} avoids constructing unfeasible adjustment sets by incorporating the constraints directly into the algorithm, and proceeds in an element-wise manner to select nodes to construct the minimal adjustment set.

An implementation of Algorithm \ref{ccsalg}, using the R package \verb|ggm| \citep{k-code} (developed for ancestral graphs) is hosted online: \cite{code}.


\begin{remark}[When \(C\) and \(O^{\textnormal{do}(C)}\) are node subsets]\label{nodesnotset} 
Note that the formulation of Algorithm \ref{ccsalg} requires the input \(C\) and \(O^{\textnormal{do}(C)}\) to be nodes. Consider the SWAIG \(\phi'(\mathcal{G}; C)\) in Figure \ref{fig:exset}; here $C$ is a two-element set.  It is not difficult to see that there does not exist a node subset \(S\) in \(\phi'(\mathcal{G}; C)\) such that \(C\perp_{\phi'(\mathcal{G}; C)} O^{\textnormal{do}(C)}\cd S\). Thus, in the case when the distribution over the nodes in \(\phi'(\mathcal{G}; C)\) is faithful to \(\phi'(\mathcal{G}; C)\), the absence of edges between \(C\) and \(O^\textnormal{do}(C)\) is not sufficient for the existence of an adjustment set, which is required for Algorithm \ref{ccsalg}. 

However, note that the converse is true, that is, presence of edges between subsets \(C\) and \(O^{\textnormal{do}(C)}\) implies the absence of an adjustment set. Thus, Algorithm \ref{ccsalg} is still sound, but not complete, in detecting when there are no feasible adjustment sets. \erk

\begin{figure}
    \centering
    \includegraphics[width=0.5\linewidth]{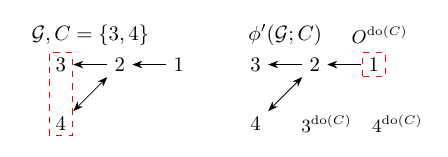}
    \caption{Left: Anterial graph \(\mathcal{G}\). Right: \(\phi'(\mathcal{G}; C)\) with \(O^{\textnormal{do}(C)}=1^{\textnormal{do}(C)}=1\) and \(C=\{3,4\}\).}
    \label{fig:exset}
\end{figure}
    
\end{remark}

\section{Conclusion and Future Work} 
We provide graphical models for structural equilibrium models (Definition \ref{antscm}), a generalisation of structural causal models, which also accounts for both equilibrium relationships and confounding. Given a structural equilibrium model, our graphical model is obtained using Algorithm \ref{obsalg}, and is based on anterial graphs. This generalises existing causal graphical models based on chain graphs \citep{chaingraphcausal}, ancestral graphs \citep{ancestral, Sad}, and thus DAGs.  



Following \cite{chaingraphcausal}, we use Gibbs samplers as the underlying dynamical processes to describe interventions on a treatment set \(C\) for structural equilibrium models. Using our anterial graphical model \(\mathcal{G}\) of the structural equilibrium model, these interventions can be represented graphically as \(\textnormal{do}_C(\mathcal{G})\) (using Algorithm \ref{intalg}). Our anterial graphical model \(\mathcal{G}\) also allows for the joint representation of observational and intervened variables in a single graph, as either \(\phi(\mathcal{G}; C)\) (via Algorithm \ref{newswagalg}) or \(\phi'(\mathcal{G}; C)\) (by marginalising \(\phi(\mathcal{G}; C)\)). The counterfactual graph \(\phi(\mathcal{G}; C)\) implies more counterfactual conditional independencies at the cost of potentially violating faithfulness, leading to errors similar to Figure 15 of \cite{swig}; the single-world interventional anterial graph \(\phi'(\mathcal{G}; C)\) avoids deterministic relationships \emph{and} common shared errors that lead to faithfulness violations, at the cost of encoding fewer conditional independencies.  

Our anterial graphical model \(\mathcal{G}\) is causally interpretable since interventions on the graphical model can be performed by only modifying and deleting edges directly adjacent to nodes in the treatment set \(C\). Due to this interpretability, algorithms involving interventions (Algorithms \ref{intalg} and \ref{newswagalg}) depend only on the corresponding observational graph \(\mathcal{G}\) and do not depend on the exact coupling of the error variables in the structural equilibrium model. 

We also provided an element-wise procedure to select an adjustment set \(S\) subject to \(L\subseteq S\subseteq U\) with prescribed set constraints \(L\) and \(U\) (Algorithm \ref{ccsalg}). This provably returns the minimal adjustment set under conditions such as when the representation \(\phi'(\mathcal{G}; C)\) introduced is faithful. Unlike conventional approaches, our method is valid for anterial graphs and avoids constructing unfeasible sets during the algorithm.


Theorems \ref{correctgenant}, \ref{intmarkov}, and \ref{correctnewswag} depend on the corresponding distributions being compositional graphoids, which is an assumption that is automatically made when one considers faithfulness \citep{kayvanfaith}. Analogous to DAGs, chain graphs and ancestral graphs, the compositional graphoid condition can potentially be relaxed by defining a new notion of a local Markov property for the class of anterial graphs, which implies the Markov property without any strong further conditions.

We also note that in Algorithm \ref{ccsalg}, after checking for feasibility, only the marginalisation operation \(\alpha_{\textnormal{m}}\) is used iteratively. However, allowing the conditioning operation \(\alpha_{\textnormal{c}}\) to be used alongside \(\alpha_{\textnormal{m}}\) may improve the algorithm. Consider Figure \ref{algexfig}, the minimal adjustment set \(L\cup 4\) can be selected in just 2 steps by applying the marginalising operation \(\alpha_{\textnormal{m}}\) and the conditioning operation \(\alpha_{\textnormal{c}}\) to the middle graph of Figure \ref{algexfig} and maximising to obtain the graph in Figure \ref{fig:altlgo}. However, in general, conditioning may introduce unnecessary nodes into \(S\), resulting in a \(S\) that is no longer minimal. Thus, speeding up Algorithm \ref{ccsalg} by allowing for both \(\alpha_{\textnormal{m}}\) and \(\alpha_{\textnormal{c}}\) operations, while ensuring the minimality of the selected \(S\) is a subject of future interest.

In Section \ref{sec:counterfactual}, we focused on counterfactual conditional independencies that relate observational and interventional variables where the intervened value is fixed, such as the unconfoundedness assumption \eqref{defconfound}. Extending this framework to capture more general counterfactual assumptions involving multiple intervention values is an important direction for future work. A potential approach is to use the ``templates,"  proposed by \cite{swig}, to systematically handle different intervened values.

\begin{figure}[h]
    \centering
    \centerline{\includegraphics[scale=0.9]{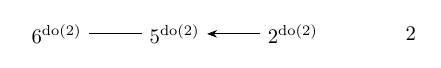}}
    \caption{Graph obtained after marginalising \(\{3^{\text{do}(2)}\}\), conditioning on \(4\) and maximising the middle graph \(\textnormal{max}(\mathcal{G}^*_0)\) from Figure \ref{algexfig}. Since there is no path from \(2\) to \(5^{\text{do}(2)}\), it is clear that  \(2\perp 5^{\text{do}(2)}\cd  L\cup 4=\{1,4\}\), and \(L \cup 4\) can be returned as a minimal adjustment set. }
    \label{fig:altlgo}
\end{figure}

As can be seen in Remark \ref{nodesnotset}, Algorithm \ref{ccsalg} is only sound, but not complete, in detecting the absence of feasible adjustment sets when the inputs \(C\) and \(O^{\textnormal{do}(C)}\) are subset of nodes. Thus, extending Algorithm \ref{ccsalg} to the multivariate setting where \(C\) and \(O^{\textnormal{do}(C)}\) are subsets of nodes remains to be investigated.

\section{Acknowledgments}
We thank Christopher Meek and Thomas Richardson for their insightful remarks and advice.


\begin{appendices}
\section{Appendix and Proofs}
\subsection{Proof of Theorem \ref{correctgenant}}

We will use the following from \cite{lkayvan} and \cite{Sad} to show Theorem \ref{correctgenant}. 

\begin{definition}[Chain mixed graph \citep{lkayvan}]\label{def cmg}
    A chain mixed graph \(\mathcal{G}\) over a set of  nodes \(V\), is a graph which may contain directed,  undirected, 
    and bidirected edges, such that there does not exist semi-directed cycles in \(\mathcal{G}\).
\end{definition}
Anterial graphs are thus a subclass of chain mixed graphs where Item 1 of Definition \ref{def: ant} does not have to hold. 
Analogous to inducing paths in an ancestral graph, the notion of a \emph{primitive inducing path} will be used to characterise the  maximality of chain mixed graphs. 
\begin{definition}[Primitive inducing path]
     A primitive inducing path connecting  the nodes \(i\) and \(j\) is 
     a path \(\pi=\langle i, q_1,\ldots
    , q_n, j \rangle\) such that:
    \begin{itemize}
        \item \(q_1,\ldots, q_n\in \text{ant}(i,j)\), 
        \item the edges between subsequent nodes \(q_1,
        \ldots, q_n\) in \(\pi\) are either bidirected or undirected, and
        \item the edge between \(i\) and \(q_1\) is either bidirected or \(i\rightarrow q_1\), and the edge between \(j\) and \(q_n\) is either bidirected or \(j\rightarrow q_n\).
    \end{itemize}
\end{definition}

We will use the notation \(i\not\sim_p j\) to denote that the nodes \(i\) and \(j\) are not connected by a primitive inducing path.
%
We will use the following version of \cite{lkayvan} for our proof.

\begin{proposition}[\citep{lkayvan}] \label{intproof2}
    A chain mixed graph \(\mathcal{G}\) is maximised by adding edges between non-adjacent nodes that are connected by a primitive inducing path. Call the resulting maximal graph \(\textnormal{max}(\mathcal{G})\).
    Furthermore, the anterior of \(\mathcal{G}\)  is preserved; that is, \textnormal{ant}\((i)\) in \(\mathcal{G}\) is the same as \textnormal{ant}\((i)\) in \(\textnormal{max}(\mathcal{G})\), for every node \(i\).
\end{proposition}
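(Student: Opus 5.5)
The plan is to prove the three claims separately: (a) the graph obtained by adding the edges is maximal; (b) it is Markov equivalent to \(\mathcal{G}\) and is still a chain mixed graph; (c) anteriors are unchanged. Throughout I would use the walk-based separation criterion of \cite{lkayvan}: a walk is connecting given \(C\) if every collider section meets \(C\cup\textnormal{ant}(C)\) and every non-collider lies outside \(C\).

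First I would fix the type of each added edge so that (c) holds by construction. Suppose \(i\) and \(j\) are non-adjacent and \(i\sim_p j\).
\begin{itemize}
\item If \(i\in\textnormal{ant}(j)\), add \(i\rightarrow j\); symmetrically, if \(j\in\textnormal{ant}(i)\), add \(j\rightarrow i\).
\item Otherwise add \(i\leftrightarrow j\).
\end{itemize}
Both \(i\in\textnormal{ant}(j)\) and \(j\in\textnormal{ant}(i)\) cannot hold with a directed edge on the path, by absence of semi-directed cycles. If they hold through a purely undirected path, \(i\) and \(j\) lie in one chain component, and I would add \(i-j\).

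With these choices, a new edge never creates a semi-directed path between nodes that were not already anterially related. Each directed or undirected edge added joins a pair already linked by a semi-directed path, so it only shortcuts an existing one, and bidirected edges do not enter semi-directed paths at all. Hence \(\textnormal{ant}(\cdot)\) is preserved, no semi-directed cycle appears, and the result is again a chain mixed graph.

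Next, for the primitive inducing path \(\pi=\langle i,q_1,\ldots,q_n,j\rangle\), I would show that \(i\not\perp j\cd C\) for every \(C\subseteq V\backslash\{i,j\}\). Every inner node of \(\pi\) lies in a collider section, because the edges between the \(q\)'s are bidirected or undirected and the end edges point into \(q_1\) and \(q_n\) (or are bidirected). Each \(q_k\) also lies in \(\textnormal{ant}(i,j)\).
\begin{itemize}
\item If each collider section meets \(C\cup\textnormal{ant}(C)\), then \(\pi\) is already connecting.
\item Otherwise, take the first bad section, with a node \(q\in\textnormal{ant}(i)\) say, and follow the semi-directed path from \(q\) to \(i\). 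I would pick the section, and the nodes within it, minimally so that the walk spliced from that path and the remainder of \(\pi\) has no blocked non-collider. This is the standard argument for ancestral graphs, adapted to sections.
\end{itemize}
This shows that each added edge is forced. The same splicing shows (b): any walk using a new edge \(i\ast\!\!-\!\!\ast j\) can be replaced by \(\pi\) without creating a new blocking, so the new edges change no separation.

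For maximality, I would show that if \(i,j\) are non-adjacent in the new graph and \(i\not\sim_p j\) there, then \(i\perp j\cd\textnormal{ant}(i,j)\). Take a connecting walk given \(\textnormal{ant}(i,j)\) and shorten it. Its non-colliders must lie outside \(\textnormal{ant}(i,j)\), yet in a chain mixed graph every node on a connecting walk given an anterior set can be forced into that set. This makes all inner nodes collider-section nodes lying in \(\textnormal{ant}(i,j)\), which yields a primitive inducing path, a contradiction.

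The expected main obstacle is the iteration issue underlying maximality: new edges might create new primitive inducing paths. I would handle this by using the fact that anteriors are unchanged. A primitive inducing path in the new graph that uses a new edge can then be expanded, by substituting the original inducing path for that edge, into a walk in \(\mathcal{G}\) whose inner nodes are still in \(\textnormal{ant}(i,j)\) and whose edges have the required types. Pruning repeated nodes would give a primitive inducing path already in \(\mathcal{G}\), so a single pass of edge additions suffices.
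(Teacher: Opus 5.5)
Your argument for (c), that anteriors are preserved and no semi-directed cycle appears, is fine, and so is the outline that each added edge is forced. Two other steps fail as stated.

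\textbf{Part (b): substituting $\pi$ for a new edge.} You cannot simply replace a new edge by $\pi$. The end marks of $\pi$ can differ from those of the added edge. Also, on a longer walk $i$ or $j$ may itself lie in $C$, which your forcing argument (with $C\subseteq V\backslash\{i,j\}$) does not cover. For example, take the chain mixed graph with $a\rightarrow i\rightarrow s\rightarrow j$, $q\rightarrow r\rightarrow j$ and $i\leftrightarrow q\leftrightarrow j$. Your rule adds $i\rightarrow j$, and $a\rightarrow i\rightarrow j$ is connecting given $C=\emptyset$ in the new graph. But $a\rightarrow i\leftrightarrow q\leftrightarrow j$ is blocked at the collider $i$, and no splicing inside $\pi$ repairs this; one must use $i\rightarrow s\rightarrow j$ instead. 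Such mismatches cannot occur in anterial graphs, where every added edge is bidirected and $\pi$ is bidirected at both ends. For general CMGs, though, Markov equivalence needs a genuine case analysis on whether the endpoints lie in $C$, in $\textnormal{ant}(C)$, or neither.

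\textbf{Single pass and pruning.} Your claim that a single pass suffices is false, and the pruning behind it fails. Take the anterial graph with edges $i\leftrightarrow x$, $x\rightarrow j$, $x\leftrightarrow z$, $z - i$, $i - w$, $w\leftrightarrow j$. The only non-adjacent pair joined by a primitive inducing path is $\{z,j\}$, via $\langle z,x,i,w,j\rangle$, so your rule adds $z\leftrightarrow j$. Now $\langle i,x,z,j\rangle$ is a primitive inducing path between the still non-adjacent $i$ and $j$, and no $C$ separates them:
\begin{itemize}
\item use $i\leftrightarrow x\rightarrow j$ if $x\notin C$;
\item use $i\leftrightarrow x\leftrightarrow z\leftrightarrow j$ if $x,z\in C$;
\item use $i - z\leftrightarrow j$ if $x\in C$ and $z\notin C$.
\end{itemize}
So the one-pass graph is not maximal. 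Your expansion produces $i\leftrightarrow x\leftrightarrow z\leftrightarrow x\leftrightarrow i - w\leftrightarrow j$. Cutting at the last visit of $i$ leaves $i - w\leftrightarrow j$, which begins with a line, and the original graph has no primitive inducing path between $i$ and $j$.

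The culprit is a walk that re-enters an endpoint's chain component inside a collider section, and the same phenomenon breaks either your walk-shortening step or part (b):
\begin{itemize}
\item If walks may revisit nodes, $i\leftrightarrow x\leftrightarrow z - i - w\leftrightarrow j$ is connecting given $\textnormal{ant}(i,j)=\{x,z,w\}$ in the original graph, although no primitive inducing path joins $i$ and $j$. So a connecting walk given $\textnormal{ant}(i,j)$ need not shorten to a primitive inducing path.
\item If walks may not pass through an endpoint again, then $\{x,z,w\}$ separates $i$ and $j$ before adding $z\leftrightarrow j$ but not after, so (b) fails.
\end{itemize}

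Either way, the one-pass reading cannot be established along your lines. The paper gives no proof of this statement, only a citation, yet it relies on exactly this one-pass form in Proposition~\ref{nonmaxpairtoglob}. To repair the argument you would need to iterate the additions until no primitive inducing path joins non-adjacent nodes, and then prove Markov equivalence for the accumulated edges. Alternatively, restrict the class of graphs; note that the example is anterial but not chain-connected.
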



\begin{definition}[Pairwise Markov property \citep{lkayvan}]
    The distribution \(P\) satisfies the pairwise Markov property w.r.t.\ the  chain mixed graph \(\mathcal{G}\) if for nodes \(i\) and \(j\), we have
    \begin{align}\label{pairdef}
        i \text{ not adjacent to } j \text{ in }\mathcal{G}\Rightarrow i\ci j\cd \text{ant}(i,j).
    \end{align} 
\end{definition}

\begin{proposition}[Theorem 4 \citep{lkayvan}]\label{pairtoglobal}
    Let \(\mathcal{G}\) be a maximal chain mixed graph, and the distribution \(P\) be a compositional graphoid. The following equivalence holds.
    \begin{align*}
        P \text{ satisfies the pairwise Markov property w.r.t. }\mathcal{G} \iff P \text{ is Markovian to } \mathcal{G}.
    \end{align*}
\end{proposition}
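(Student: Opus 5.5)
The plan is to prove the two implications separately. Both rest on one graphical fact, which comes first. In a maximal chain mixed graph \(\mathcal{G}\), if \(i\) and \(j\) are not adjacent, then \(i\perp_{\mathcal{G}} j\cd \text{ant}(i,j)\). I would argue this by contradiction. Suppose some path connects \(i\) and \(j\) given \(\text{ant}(i,j)\) in the sense of the separation criterion of \citet[Section 3.2]{lkayvan}. Choose such a path of minimal length.

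Every inner node of this path is either a collider section, which then lies in \(\text{ant}(i,j)\), or a non-collider, which must lie outside \(\text{ant}(i,j)\). A non-collider outside \(\text{ant}(i,j)\) sits at a semi-directed edge pointing away from \(\text{ant}(i,j)\). Absence of semi-directed cycles then forces some later node of the path to leave the anterior as well. From this I would derive that the minimal path has only colliding inner nodes, all in \(\text{ant}(i,j)\). Such a path is, up to its end edges, a primitive inducing path. Under Proposition \ref{intproof2}, maximality forbids a primitive inducing path between non-adjacent nodes, which is the contradiction.

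With this fact, \textbf{global \(\Rightarrow\) pairwise} is immediate: a separation in \(\mathcal{G}\) yields the conditional independence \(i\ci j\cd\text{ant}(i,j)\).

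For \textbf{pairwise \(\Rightarrow\) global}, I would follow the Pearl--Paz / Matúš style argument used for ancestral and chain graphs, in three steps.

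\emph{First}, reduce to singletons. For these graphs, \(A\perp_{\mathcal{G}}B\cd C\) holds iff \(i\perp_{\mathcal{G}} j\cd C\) for all \(i\in A, j\in B\), since connecting paths are between individual nodes. On the probabilistic side, composition and weak union let me assemble \(A\ci B\cd C\) from the singleton statements. So it suffices to show \(i\perp_{\mathcal{G}} j\cd C\Rightarrow i\ci j\cd C\).

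\emph{Second}, reduce to an anterial set. The anterial set \(W=\{i,j\}\cup C\cup\text{ant}(\{i,j\}\cup C)\) satisfies two properties. Separation of \(i,j\) given \(C\) depends only on the induced subgraph \(\mathcal{G}_W\), which is again maximal. The pairwise statements of \(\mathcal{G}_W\) are pairwise statements of \(\mathcal{G}\), because anteriors inside \(W\) coincide with those in \(\mathcal{G}\). Hence I may assume \(V=W\).

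\emph{Third}, run a descending induction on \(|C|\) with \(\{i,j\}\cup C\subseteq V\) fixed. The base case is \(C=V\setminus\{i,j\}\), or more precisely \(C\supseteq\text{ant}(i,j)\). There I use the pairwise statement \(i\ci j\cd \text{ant}(i,j)\), plus separations of \(i\) from the extra nodes, combined through intersection and weak union. In the inductive step, pick \(k\notin C\cup\{i,j\}\) maximal with respect to the anterior order, so that \(k\notin \text{ant}(\{i,j\}\cup C\setminus k)\). I then show that \(i\perp_{\mathcal{G}} j\cd C\) implies either \(i\perp_{\mathcal{G}} k\cd C\) or \(j\perp_{\mathcal{G}} k\cd C\). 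Suppose instead that both \(i\)–\(k\) and \(k\)–\(j\) are connected given \(C\). Then \(k\), being outside the anterior of the conditioning set, becomes a collider whose section is not in \(\text{ant}(C)\). Concatenating the two paths would then give an \(i\)–\(j\) connection only if \(k\) were anterior to \(C\), so the connection must be rerouted. Once the disjunction holds, the induction hypothesis applied to \(C\cup k\) gives \(i\ci j\cd C\cup k\) and, say, \(i\ci k\cd C\). Contraction yields \(i\ci j\cup k\cd C\), and decomposition gives \(i\ci j\cd C\). Intersection enters in the base case, and composition enters where several such nodes are combined.

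The main obstacle is this last graphical lemma: adding a maximal node \(k\) preserves separation of \(i\) from \(j\) or of \(i\) from \(k\). Undirected and bidirected sections interact in the anterial separation criterion. I would try to prove it by a path-surgery argument on collider sections, exploiting the absence of semi-directed cycles to control where anterior-ness fails. I expect most of the technical work to lie there.
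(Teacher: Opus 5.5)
The paper does not prove this proposition; it imports it as Theorem 4 of \citet{lkayvan}. Your global $\Rightarrow$ pairwise half is sound. On a connecting path given $\textnormal{ant}(i,j)$, follow the tail of any non-collider section along the path: you reach a collider section or an endpoint, which puts that section in $\textnormal{ant}(i,j)$ and contradicts that non-collider sections avoid the conditioning set. So all inner sections are colliders, the path is a primitive inducing path, and maximality excludes it.

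The gap is in pairwise $\Rightarrow$ global, and it goes beyond the lemma you flag.

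First, your choice of $k$ is incompatible with your Step 2. Once $V=\{i,j\}\cup C\cup\textnormal{ant}(\{i,j\}\cup C)$, every node outside $\{i,j\}\cup C$ is anterior to $\{i,j\}\cup C$. For a $k$ outside that anterior the disjunction is simply false: take $i\rightarrow k\leftarrow j$ with $C=\emptyset$. The version you can prove is for $k\in\textnormal{ant}(\{i,j\}\cup C)\setminus(\{i,j\}\cup C)$. If $k\in\textnormal{ant}(C)$, the two connecting walks concatenate at $k$, detouring to $C$ if $k$ becomes a collider. If $k\in\textnormal{ant}(i)\setminus\textnormal{ant}(C)$, a semi-directed path from $k$ to $i$ avoids $C$ and turns any $k$--$j$ connection into an $i$--$j$ connection, so $j\perp_{\mathcal{G}}k\cd C$.

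Second, your contraction step needs $i\ci k\cd C$, which has the same conditioning set $C$, so a descending induction on $|C|$ does not provide it. Use instead $i\perp_{\mathcal{G}}j\cd C\cup k$ and $i\perp_{\mathcal{G}}k\cd C\cup j$, which follow from composition and weak union for separation. Then apply the induction hypothesis to both and combine by intersection.

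The more serious problem is that your base case is not a base case. The pairwise property supplies $i\ci j\cd\textnormal{ant}(i,j)$, but the top of your induction is $C=V\setminus\{i,j\}$, generally strictly larger. Every statement at that level must be derived from statements at lower levels, which a descending induction has not yet produced.

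Already for $a\rightarrow i$, $a\rightarrow j$, $a\rightarrow d$ and $C=\{a,d\}$, you reach $i\ci j\cd\{a,d\}$ only by composing $i\ci j\cd a$ with $i\ci d\cd a$. The alternative route goes through $i\ci d\cd\{a,j\}$, which is again a top-level statement of the same kind, so it is circular. Your phrase ``plus separations of $i$ from the extra nodes, combined through intersection and weak union'' hides exactly this step.

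The Pearl--Paz template works for undirected graphs only because there the pairwise statements sit at $|C|=|V|-2$. For anterial pairwise statements you need an induction that moves the conditioning set both upwards and downwards from $\textnormal{ant}(i,j)$, with a measure that decreases for every auxiliary triple the argument generates. That is the substance of the cited theorem; as written, your proposal does not establish the hard direction.
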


When \(\mathcal{G}\) is not maximal, we have the following version of Proposition \ref{pairtoglobal}.

\begin{proposition}[Proposition \ref{pairtoglobal} for non-maximal \(\mathcal{G}\)]\label{nonmaxpairtoglob}
    Let \(\mathcal{G}\) be a chain mixed graph, and the distribution \(P\) be a compositional graphoid. The following are equivalent.
    \begin{enumerate}
        \item \(P\) is Markovian to \(\mathcal{G}\).
        \item For non-adjacent nodes \(i\) and \(j\) in \(\mathcal{G}\) such that \(i\not\sim_p j\), it holds that \(i\ci j\cd \textnormal{ant}(i,j)\).
    \end{enumerate}
\end{proposition}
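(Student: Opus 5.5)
The plan is to reduce the statement to Proposition \ref{pairtoglobal}, applied to the maximised graph $\textnormal{max}(\mathcal{G})$, and to show that the two conditions restrict the distribution in exactly the same way.

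By Proposition \ref{intproof2}, $\textnormal{max}(\mathcal{G})$ is maximal and is again a chain mixed graph, since adding edges does not create semi-directed cycles. It is Markov equivalent to $\mathcal{G}$. Its anterior sets coincide with those of $\mathcal{G}$. It is obtained by adding edges exactly between the non-adjacent pairs of $\mathcal{G}$ that are joined by a primitive inducing path.

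\textbf{Step 1.} Markov equivalence gives that $P$ is Markovian to $\mathcal{G}$ if and only if $P$ is Markovian to $\textnormal{max}(\mathcal{G})$. This holds because the two graphs induce identical separation statements.

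\textbf{Step 2.} Since $P$ is a compositional graphoid and $\textnormal{max}(\mathcal{G})$ is maximal, Proposition \ref{pairtoglobal} shows that $P$ is Markovian to $\textnormal{max}(\mathcal{G})$ if and only if $P$ satisfies the pairwise Markov property with respect to $\textnormal{max}(\mathcal{G})$. That property reads: $i\ci j\cd \textnormal{ant}_{\textnormal{max}(\mathcal{G})}(i,j)$ for every pair $i,j$ that is non-adjacent in $\textnormal{max}(\mathcal{G})$.

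\textbf{Step 3.} We identify this pairwise property with condition 2 in two parts.
\begin{enumerate}
\item The pairs non-adjacent in $\textnormal{max}(\mathcal{G})$ are exactly the pairs non-adjacent in $\mathcal{G}$ with $i\not\sim_p j$. This follows from the description of the maximisation as adding edges between non-adjacent pairs joined by a primitive inducing path, and nothing else.
\item $\textnormal{ant}(i,j)$ is the same set in both graphs, by the anterior-preservation clause.
\end{enumerate}
Hence condition 2 is literally the pairwise Markov property for $\textnormal{max}(\mathcal{G})$, and chaining Steps 1–3 gives the equivalence.

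The main obstacle is the first part of Step 3. We must make sure that the primitive inducing paths are evaluated in $\mathcal{G}$, whereas the maximisation in \cite{lkayvan} might add edges iteratively. In that case a pair could become joined by a primitive inducing path only after other edges have been added. The first attempt is to argue that a single pass suffices: any primitive inducing path in the augmented graph that uses a new edge can be expanded back into a primitive inducing path in $\mathcal{G}$. This works by replacing each added edge with its witnessing path and checking that the anterior condition $q_k\in\textnormal{ant}(i,j)$ is preserved, using that anteriors are unchanged. The step on which the expansion depends, as I read it, is that Proposition \ref{intproof2} is stated as a one-shot addition of edges, so its wording should be relied on to settle this. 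Alongside this, Markov equivalence must come from Proposition \ref{intproof2} rather than from any maximality assumption on $\mathcal{G}$ itself.
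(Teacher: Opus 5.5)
Your proposal is correct and is essentially the paper's proof. It uses the same three ingredients: Markov equivalence of $\mathcal{G}$ and $\textnormal{max}(\mathcal{G})$; Proposition \ref{pairtoglobal} applied to $\textnormal{max}(\mathcal{G})$; and the two correspondences read off from Proposition \ref{intproof2} (non-adjacency in $\textnormal{max}(\mathcal{G})$ iff non-adjacency with $i\not\sim_p j$ in $\mathcal{G}$, and equal anteriors), which the paper, like you, treats as a one-shot edge addition. The only loose point is your reason for $\textnormal{max}(\mathcal{G})$ being chain mixed: ``adding edges does not create semi-directed cycles'' is false for arbitrary added edges, so that property should be taken from the cited maximisation result, as the paper implicitly does.
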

\begin{proof} 
By Proposition \ref{intproof2}, we have
     \begin{align}\label{working}
i \textnormal{ not adjacent to } j \textnormal{ in max}(\mathcal{G})&\iff  i \textnormal{ not adjacent to } j \textnormal{ and }i \not\sim_p j \textnormal{ in } \mathcal{G}, \text{ and} \nonumber\\
i \ci j \cd \textnormal{ant}(i,j) \textnormal{ (anterior of max}(\mathcal{G})) &\iff i \ci j \cd \textnormal{ant}(i,j) \textnormal{ (anterior of }\mathcal{G}).
\end{align}
Thus, by replacing both sides of the implication in \eqref{pairdef} with the correspondence in \eqref{working}, \(P\) satisfying the pairwise Markov property w.r.t. max\((\mathcal{G})\) is equivalent to Item 2.

By Proposition \ref{pairtoglobal} and since \(\mathcal{G}\) and max\((\mathcal{G})\) are Markov equivalent, we have
\begin{align*}
    P \textnormal{ satisfies the pairwise Markov property w.r.t. max}(\mathcal{G}) &\iff P \textnormal{ is Markovian to max}(\mathcal{G})\\
    &\iff P \textnormal{ is Markovian to } \mathcal{G} \textnormal{ (Item 1)}. \qedhere
\end{align*}
\end{proof}

We will also use \citet[Theorem 27]{Sad}.

\begin{proposition}[\citep{Sad}]\label{Sadmarkov}
    Let \(\mathcal{G}\) be an ancestral graph over the nodes \(V\) and \(P\) be a joint distribution over \(V\) induced from an SCM that corresponds to \(\mathcal{G}\). The distribution \(P\) is Markovian to \(\mathcal{G}\).
\end{proposition}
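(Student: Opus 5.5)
The plan is to reduce to a mixed graph in which all the dependence among errors sits on root nodes, and then marginalise. Given the ancestral graph $\mathcal{G}$ over $V$ and the SCM $X_i=f_i(X_{\textnormal{pa}(i)},\epsilon_i)$, I would build an augmented graph $\mathcal{G}^\epsilon$ over $V\cup\{\epsilon_i: i\in V\}$. It keeps every directed edge of $\mathcal{G}$, adds $\epsilon_i\rightarrow i$ for each $i$, and replaces each bidirected edge $i\leftrightarrow j$ of $\mathcal{G}$ by $\epsilon_i\leftrightarrow\epsilon_j$. Because $\mathcal{G}$ is ancestral, the directed part is acyclic, so $\mathcal{G}^\epsilon$ is an acyclic directed mixed graph whose bidirected edges all join source nodes. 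The joint law of $(X_V,\epsilon_V)$ factorises as
\begin{align*}
P(x_V,\epsilon_V)=P(\epsilon_V)\prod_{i\in V}\delta\bigl(x_i=f_i(x_{\textnormal{pa}(i)},\epsilon_i)\bigr).
\end{align*}

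First I would show that $(X_V,\epsilon_V)$ is Markovian to $\mathcal{G}^\epsilon$ under m-separation. This needs the error law to satisfy the global Markov property of the bidirected graph on $\epsilon_V$: disconnected sets of errors are independent. I would take this as the meaning of the SCM ``corresponding to'' $\mathcal{G}$, in the same sense as Step 3 of Algorithm \ref{obsalg}, upgraded from pairwise to setwise independence. I would then use the standard ancestral-set argument. Given an m-separation $A\perp B\cd C$ in $\mathcal{G}^\epsilon$, restrict to the ancestral closure $\textnormal{an}(A\cup B\cup C)$, which is closed under the deterministic assignments. Form the augmented moral graph by joining nodes in a common district together with their parents. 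Separation there gives a factorisation of the marginal over that ancestral set into a product of district terms: the error part factorises over bidirected-connected components, and the structural deltas attach each $x_i$ to its parents and its own error. The conditional independence then follows from the usual factorisation-implies-separation step.

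Second, I would show that for disjoint $A,B,C\subseteq V$, $A\perp_{\mathcal{G}}B\cd C$ implies $A\perp_{\mathcal{G}^\epsilon}B\cd C$. Take an m-connecting path in $\mathcal{G}^\epsilon$. Each segment $i\leftarrow\epsilon_i\leftrightarrow\cdots\leftrightarrow\epsilon_j\rightarrow j$ passes only through unconditioned error nodes that are colliders or non-colliders in the right pattern. Contracting every maximal such segment gives a walk in $\mathcal{G}$ that uses bidirected edges. Collider status at the $V$-nodes is preserved: an arrowhead into $i$ from $\epsilon_i$ corresponds to an arrowhead at $i$ on the bidirected edge. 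Conditioning sets and ancestor relations among $V$-nodes are unchanged, because error nodes only point into their own $V$-node.

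The delicate part is an intermediate chain $\epsilon_i\leftrightarrow\epsilon_k\leftrightarrow\epsilon_j$ that skips $k$. Here $i\leftrightarrow k\leftrightarrow j$ exists in $\mathcal{G}$ with $k$ as a collider, and $k$ may fail to be anterior to $C$. I expect this to be the main obstacle. My plan is not to contract the chain but to rule it out: an m-connecting path in $\mathcal{G}^\epsilon$ contains no error-to-error collider chain, since $\epsilon_k\notin\textnormal{an}(C)$. So each segment has length at most one bidirected edge, and the contraction is edge-for-edge.

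With both steps in place, m-separation in $\mathcal{G}$ transfers to $\mathcal{G}^\epsilon$ and hence gives $A\ci B\cd C$ for $X_V$. Marginalising out $\epsilon_V$ then yields that $P$ is Markovian to $\mathcal{G}$.
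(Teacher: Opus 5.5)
Your architecture (explicit error nodes, the Markov property for the augmented ADMG, then transfer of m-separation from \(\mathcal{G}\) to \(\mathcal{G}^\epsilon\)) is a legitimate alternative to the paper, which gives no argument of its own and simply cites \citet[Theorem 27]{Sad}, resting on Richardson's ADMG results. However, the step you single out as the main obstacle is settled by a false claim. You rule out error-to-error collider chains on m-connecting paths in \(\mathcal{G}^\epsilon\) on the grounds that \(\epsilon_k\notin\textnormal{an}(C)\). But \(\epsilon_k\rightarrow k\), so \(\epsilon_k\) is an ancestor of \(C\) whenever \(k\in C\cup\textnormal{an}(C)\).

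Concretely, let \(\mathcal{G}\) be \(a\leftrightarrow k\leftrightarrow b\) and \(C=\{k\}\). In \(\mathcal{G}^\epsilon\) the path \(a\leftarrow\epsilon_a\leftrightarrow\epsilon_k\leftrightarrow\epsilon_b\rightarrow b\) is m-connecting. Here \(\epsilon_a\) and \(\epsilon_b\) are non-colliders outside \(C\), and the collider \(\epsilon_k\) is a parent of \(k\in C\). So segments with more than one bidirected edge do occur, and your claim that every segment has a single bidirected edge is false. Your edge-for-edge contraction leaves exactly these paths unhandled. Step 2 as written therefore does not establish \(A\perp_{\mathcal{G}}B\cd C\Rightarrow A\perp_{\mathcal{G}^\epsilon}B\cd C\).

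The repair shows the obstacle is not real. The only child of \(\epsilon_k\) is \(k\), and no \(V\)-node points into an error. Hence \(\epsilon_k\in\textnormal{an}_{\mathcal{G}^\epsilon}(C)\) if and only if \(k\in C\cup\textnormal{an}_{\mathcal{G}}(C)\), which is precisely the condition for \(k\) to be an open collider on \(i\leftrightarrow k\leftrightarrow j\) in \(\mathcal{G}\). So contract each maximal segment \(i\leftarrow\epsilon_i\leftrightarrow\epsilon_{k_1}\leftrightarrow\cdots\leftrightarrow\epsilon_{k_r}\leftrightarrow\epsilon_j\rightarrow j\) to \(i\leftrightarrow k_1\leftrightarrow\cdots\leftrightarrow k_r\leftrightarrow j\). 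The inserted \(k_l\) may already lie on the path, so you obtain an m-connecting \emph{walk} (colliders in \(C\cup\textnormal{an}(C)\), non-colliders outside \(C\)). You then need the standard lemma that converts such a walk into an m-connecting path. Alternatively, observe that \(\mathcal{G}\) is exactly the latent projection of \(\mathcal{G}^\epsilon\) onto \(V\), and use the fact that latent projection preserves m-separation among subsets of \(V\).

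Two smaller remarks. First, your strengthening of the error hypothesis, to mutual independence of the bidirected components of every subset of errors, is necessary and not merely convenient. With \(X_i=\epsilon_i\), take \(\epsilon_1,\epsilon_2\) independent fair bits and \(\epsilon_3=\epsilon_1\oplus\epsilon_2\). The errors are pairwise independent, so Step 3 of Algorithm \ref{obsalg} yields the empty graph, yet \(1\perp_{\mathcal{G}}\{2,3\}\) has no matching independence. Second, in Step 1 the delta factors rule out a joint density with respect to a product measure. The ``factorisation implies separation'' step must therefore be carried out with kernels, or replaced by Richardson's ordered local Markov property, which \citet{richadmg} shows is equivalent to the global property for arbitrary distributions.
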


\begin{remark}\label{sadmarkovremark}
    The proof of Proposition \ref{Sadmarkov} uses Markov property results for acyclic directed mixed graphs from \cite{richadmg} which do not depend on the maximality of graph \(\mathcal{G}\). \erk
\end{remark}

We first show that \(J^\tau(\cdot\cd x_{\textnormal{pa}(\tau)})\), the law of \( f_{\tau}(x_{\text{pa}(\tau)}, \epsilon_{\tau})\) of the structural equilibrium model, satisfies Lemma \ref{localind}.

\begin{lemma}\label{localind}
    Let \(\mathcal{G}\) be the corresponding graph (constructed using Algorithm \ref{obsalg}) of a structural equilibrium model. Consider the chain component \(\tau\) and associated distribution \(J^\tau(\cdot\cd x_{\textnormal{pa}(\tau)})\). For every node \(i\), the conditional marginal \(J^{\tau}_{i}(\cdot \cd x_{\tau\backslash i}; x_{\textnormal{pa}(\tau)})\) does not depend on \(x_j\)  for  \(j\in \big(\tau \cup \textnormal{pa}(\tau)\big)\backslash \big(\textnormal{pa}(i)\cup \textnormal{ne}(i)\cup i\big)\).
\end{lemma}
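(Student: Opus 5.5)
The plan is to split the index set \(\big(\tau\cup\textnormal{pa}(\tau)\big)\backslash\big(\textnormal{pa}(i)\cup\textnormal{ne}(i)\cup i\big)\) into two pieces. The first is \(\tau\backslash(\textnormal{ne}(i)\cup i)\), the non-neighbouring nodes of the chain component. The second is \(\textnormal{pa}(\tau)\backslash\textnormal{pa}(i)\), the parents of the component that are not parents of \(i\). I will handle each piece with the step of Algorithm \ref{obsalg} that governs it, and then combine the two conclusions.

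For the first piece, take \(j\in\tau\) with \(j\) not adjacent to \(i\). By Step 1 of Algorithm \ref{obsalg}, the absence of the edge \(i-j\) means that \(i\ci j\cd \tau\backslash\{i,j\}\) under \(J^\tau(\cdot\cd x_{\textnormal{pa}(\tau)})\). This holds for each fixed value of \(x_{\textnormal{pa}(\tau)}\). Conditional independence in this form states exactly that \(J^\tau_i(\cdot\cd x_{\tau\backslash i};x_{\textnormal{pa}(\tau)})\) is a function of \(x_{\tau\backslash\{i,j\}}\) only. It therefore does not depend on \(x_j\), for any values of the remaining arguments.

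For the second piece, take \(j\in\textnormal{pa}(\tau)\) with \(j\notin\textnormal{pa}(i)\). By Step 2, the absence of \(j\rightarrow i\) means that the defining condition of Step 2 fails. I will read that condition in the sense of the extended conditional independence \(i\ci j\cd(\tau\cup\textnormal{pa}(\tau))\backslash\{i,j\}\) stated in the remark after Algorithm \ref{obsalg}, following \citet[Theorem 3.5]{step2}. Read this way, its failure means that \(J^\tau_i(\cdot\cd x_{\tau\backslash i};x_{\textnormal{pa}(\tau)})\) does not vary with \(x_j\) once all other coordinates of \(\tau\cup\textnormal{pa}(\tau)\) are held fixed.

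The combination step is where I expect the main obstacle. Each excluded coordinate has been shown to be irrelevant only when all other coordinates are held fixed. I now need the stronger statement that the conditional marginal depends jointly on \(x_{\textnormal{pa}(i)\cup\textnormal{ne}(i)}\) alone. Write \(g(x_{\tau\backslash i},x_{\textnormal{pa}(\tau)}) = J^\tau_i(\cdot\cd x_{\tau\backslash i};x_{\textnormal{pa}(\tau)})\). The function \(g\) is invariant under changing any single excluded coordinate while the others are fixed. Changing the excluded coordinates one at a time, a telescoping argument then shows that \(g\) is invariant under changing all of them simultaneously. This requires that at every intermediate point of the telescoping the conditional is well defined, which holds on a product (positive-support) domain. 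I would therefore state that the conditionals are taken as versions defined for all values, as is implicit in the "for all values" phrasing of Step 2.

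Once this step is in place, the conditional marginal does not depend on \(x_j\) for any \(j\) in the excluded set, which is the claim of Lemma \ref{localind}. If a positivity requirement is unavoidable, I would note that it is the same condition under which Steps 1 and 2 are meaningful. It is also in line with the compositional graphoid (intersection) assumption used later in Theorem \ref{correctgenant}.
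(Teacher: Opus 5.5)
Your proposal is correct and follows the paper's proof. Like the paper, you split the excluded set into \(\tau\backslash(\textnormal{ne}(i)\cup i)\), handled by Step 1 of Algorithm \ref{obsalg}, and \(\textnormal{pa}(\tau)\backslash\textnormal{pa}(i)\), handled by Step 2. Your telescoping combination step is not needed for the lemma as stated, since it only asserts non-dependence on each \(x_j\) separately and the paper stops there; your explicit extended-conditional-independence reading of Step 2 does, however, usefully pin down what the paper's one-line appeal to Step 2 relies on.
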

\begin{proof}
Consider 
a 
node \(j\in \tau\backslash (\textnormal{ne}(i)\cup i)\). By construction, in Step 1 of Algorithm \ref{obsalg}, \(J^{\tau}_i(\cdot \cd x_{\textnormal{pa}(\tau)})\) satisfies  \(i\ci j\cd \tau\backslash\{i,j\}\), which implies that \(J^{\tau}_{i}(\cdot \cd x_{\tau\backslash i}; x_{\textnormal{pa}(\tau)})\) does not depend on \(x_j\).

Consider 
a
node \(j\in \textnormal{pa}(\tau)\backslash \textnormal{pa}(i)\). 
By Step 2 of  Algorithm \ref{obsalg}, the distribution $J^{\tau}_{i}(\cdot \cd x_{\tau\backslash i}; x_{\textnormal{pa}(\tau)})$ does not depend on $x_j$.
%
%
\end{proof}

For each node $i$, the conditional marginal \(J^\tau_i(\cdot\cd x_{\tau\backslash i};x_{\textnormal{pa}(\tau)})\)  depends only on the adjacent nodes in  \(\textnormal{pa}(i)\cup \textnormal{ne}(i)\).   

Let sant\((i)\) = ant\((i)\backslash \tau(i)\) be the strict anterior of the  node \(i\) and
let sant\((i,j)\)=  \(\textnormal{ant}(i,j)\backslash \big(\tau(i)\cup \tau(j)\big)\).

\begin{proof}[Proof of Theorem \ref{correctgenant}]
Let \(\mathcal{G}\) be a chain-connected anterial graph, and \(i\) and \(j\) be non-adjacent nodes such that \(i\not\sim_p j\). By Proposition \ref{nonmaxpairtoglob}, it suffices to show that the distribution \(P\) induced from the structural equilibrium model satisfies \(i\ci j\cd \text{ant}(i,j)\). 

We will consider the following cases.

\begin{enumerate}
    \item 
    \label{prev-arg}
    Suppose that \(j\in \textnormal{ant}(i)\). We first consider \(P_{\tau(i)}(\cdot \cd x_{\textnormal{sant}(i)})\). Since \(\text{ant}(i)\backslash \textnormal{sant}(i)= \tau(i)\backslash i\), we then further condition \(P_{\tau(i)}(\cdot \cd x_{\textnormal{sant}(i)})\) given \(x_{\tau(i)\backslash i}\) to obtain \(P_i(\cdot \cd x_{\text{ant}(i,j)\cup j})=P_i(\cdot \cd x_{\text{ant}(i)})\), since \(j\in \textnormal{ant}(i)\), and show that \(P_i(\cdot \cd x_{\text{ant}(i)})\) does not depend on \(x_j\).
    
    Since there do not exist bidirected edges between nodes in sant\((i)\) and nodes in \(\tau(i)\) by the anterial assumption, we have \(\epsilon_{\tau(i)}\ci X_{\text{sant}(i)}\).
    By definition, we have \(X_{\tau(i)}=f_{\tau(i)}(X_{\text{pa}(\tau(i))}, \epsilon_{\tau(i)})\), and since the distribution of \(\epsilon_{\tau(i)}\) remains the same after conditioning given \(x_{\text{sant}(i)}\),  we have 
    \begin{align}\label{argue}
        f_{\tau(i)}(x_{\text{pa}(\tau(i))}, \epsilon_{\tau(i)})&\sim P_{\tau(i)}(\cdot \cd x_{\textnormal{sant}(i)})= J^{\tau(i)}(\cdot\cd x_{\text{pa}(\tau(i))}).
    \end{align}
    Further conditioning on  \(x_{\tau(i)\backslash i}\), we obtain
      \begin{align}\label{proof1}
          P_i(\cdot \cd x_{\textnormal{ant}(i)})=J^{\tau(i)}_i(\cdot\cd x_{\text{pa}(\tau(i))\cup \tau(i) \large \backslash i}).
      \end{align}

      Since \(i\) and \(j\) are not adjacent, \(j\not \in \text{ne}(i)\cup \text{pa}(i)\). By \eqref{proof1} and Lemma  \ref{localind}, 
      \(P_i(\cdot \cd x_{\textnormal{ant}(i)})\) does not depend on \(x_j\).
      
    \item  Suppose that \(j\not \in \textnormal{ant}(i)\). If \(i\in \text{ant}(j)\), 
    then our previous argument \eqref{prev-arg} applies. Thus we also assume that \(i\not \in \text{ant}(j)\). 
    
    For the chain-connected anterial graph \(\mathcal{G}\), consider the collapsed graph \(\mathcal{G}_{\textnormal{col}}\) with nodes representing chain components;  if \(i\rightarrow j\) or \(i\leftrightarrow j \) in \(\mathcal{G}\), then we set \(\tau(i)\rightarrow \tau(j)\) or \(\tau(i)\leftrightarrow \tau(j) \) in \(\mathcal{G}_{\textnormal{col}}\),
    respectively. See Figure \ref{proofillust} for an example.
    
    \begin{figure}[h]
        \centering
        \includegraphics[width=0.5\linewidth]{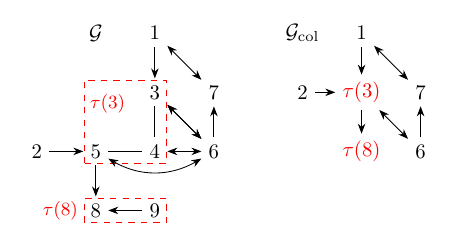}
        \caption{Chain-connected anterial graph \(\mathcal{G}\) and collapsed graph \(\mathcal{G}_{\textnormal{col}}\) representing chain components of \(\mathcal{G}\).}
        \label{proofillust}
    \end{figure}
    We make the following observations about \(\mathcal{G}_{\textnormal{col}}\).
    \begin{enumerate}
        \item
        The graph
        \(\mathcal{G}_{\textnormal{col}}\) does not have undirected edges and is an ancestral graph. This is because semi-directed paths from \(k\) and \(\ell\) and bidirected edges \(k\leftrightarrow \ell\) in \(\mathcal{G}\) correspond to directed paths from \(\tau(k)\) to \(\tau(\ell)\) and bidirected edges \(\tau(k)\leftrightarrow \tau(\ell)\) (since \(\mathcal{G}\) is chain-connected) in \(\mathcal{G}_\textnormal{col}\) respectively. 
        \item
        \label{transfer-prim}
        For primitive inducing paths, we also have
        \(i\not\sim_p j\) in \(\mathcal{G}\) implies that \(\tau(i)\not\sim_p \tau(j)\) in \(\mathcal{G}_{\textnormal{col}}\); to see this, we will show the contrapositive. 
        
        Let \(\langle \tau(i), \tau(q_1),\ldots
    , \tau(q_n), \tau(j) \rangle\) be a primitive inducing path in \(\mathcal{G}_{\textnormal{col}}\). 
    %
    We must have \(\tau(i)\leftrightarrow \tau(q_1)\) and \(\tau(j)\leftrightarrow \tau(q_n)\),  
    since if \(\tau(i)\rightarrow \tau(q_1)\) in \(\mathcal{G}_{\textnormal{col}}\), then in order for \(\mathcal{G}_{\textnormal{col}}\) to not have directed cycles, \(\tau(q_1)\) must be in \(\textnormal{ant}(\tau(j))\) in \(\mathcal{G}_{\textnormal{col}}\), which implies \(i\in \textnormal{ant}(j)\) in \(\mathcal{G}\), a contradiction. The argument for \(\tau(j)\leftrightarrow \tau(q_n)\) follows similarly.
    Since \(\mathcal{G}\) is chain-connected, 
    the path \(\langle i, q_1,\ldots
    , q_n, j \rangle\) connected by bidirected edges with \(q_1,\ldots, q_n\in \textnormal{ant}(i,j)\) in \(\mathcal{G}\), so that it is a primitive inducing path.
    \end{enumerate}

    Let \(\textnormal{pa}_{\textnormal{col}}(\tau)\) denote the parents of node \(\tau\) in \(\mathcal{G}_{\textnormal{col}}\). By definition of the structural equilibrium model of \(\mathcal{G}\), we can re-express the induced random variables \(X_V\) equivalently as 
    \begin{align*}
        X_{\tau}=f_{\tau}(X_{\textnormal{pa}_{\textnormal{col}}(\tau)}, \epsilon_\tau)
    \end{align*}
    for every node \(\tau\) in \(\mathcal{G}_{\textnormal{col}}\). We see that this is an SCM that corresponds to
    the
    ancestral graph \(\mathcal{G}_{\textnormal{col}}\). Applying Proposition \ref{Sadmarkov}, we have that the distribution of the induced random variables \(X_{\tau_1}, \ldots, X_{\tau_n}\) (where \(\tau_1, \ldots, \tau_n\) are seen as nodes in \(\mathcal{G}_{\textnormal{col}}\)) is Markovian to \(\mathcal{G}_{\textnormal{col}}\).   

    By \eqref{transfer-prim},  we have \(\tau(i)\not\sim_p \tau(j)\) in \(\mathcal{G}_{\textnormal{col}}\). Using the pairwise Markov property of \(\mathcal{G}_{\textnormal{col}}\), we obtain \(\tau(i)\ci \tau(j)\cd \textnormal{sant}(i,j)\), from which we have \(i\ci j\cd \textnormal{ant}(i,j)\), via weak union \citet[Section 4]{dawidrss-ci}. \qedhere
    \end{enumerate}
\end{proof}

\begin{remark}
    Note that proof techniques for the case of DAGs and chain graphs use local Markov properties \citep[Section 3.2.3]{lauritzenbook}, which to the best of our knowledge, are unavailable for anterial graphs. \erk 
\end{remark}

\subsection{Proof of Theorem  \ref{intmarkov}}

Consider an observational structural equilibrium model with the corresponding graph \(\mathcal{G}\). After intervening on a treatment set \(C\), we will show that after applying Algorithm \ref{obsalg} to the intervened structural equilibrium model \eqref{intscm}, the returned corresponding graph \(\mathcal{G_\textnormal{do}}\)  is, in fact, the graph \(\textnormal{do}_C(\mathcal{G})\) obtained from Algorithm \ref{intalg} (Proposition \ref{intrel}). See Figure \ref{goodantfig}.
\begin{figure}[h]
    \centering
    \begin{tikzpicture}[]

\node (e1) at (0,-1.5) {\(\mathcal{G}\)};
\node (e5) at (4,-1.5) {\(\textnormal{do}_C(\mathcal{G})\)};
\node (e2) at (0,0) {Observational model};
\node (e3) at (7,0) {Intervened model};
\node (e4) at (7,-1.5) {\(\mathcal{G}_\textnormal{do}\) };
\draw[->] (e1) -- (e5) node[midway, above] {Algorithm \ref{intalg}};

\draw[<-] (e1) -- (e2) node[midway, left] {Corresponding graph};
\draw[->] (e2) -- (e3) node[midway, above] { intervene };
\draw[->] (e3) -- (e4) node[midway, right] { Corresponding graph };
\draw[-, draw=none] (e5) -- (e4) node[midway] { equals to} node[midway, above=0.3cm] { \small Proposition \ref{intrel}};
\node (e1) at (0,0) {};
    \end{tikzpicture}
    \caption{Proof schematics of Theorem \ref{intmarkov}.}
    \label{goodantfig}
\end{figure}
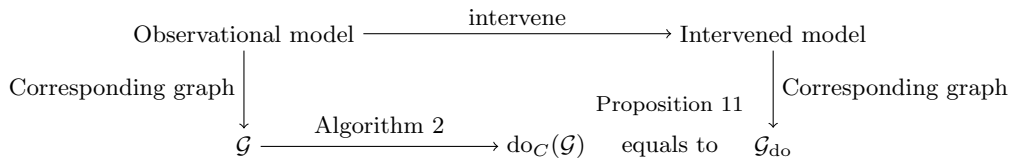

For a treatment set \(C\) and a chain component \(\tau\) of \(\mathcal{G}\) where \(\tau\cap C \neq \emptyset\), recall the functional assignment \(X_\tau=f^{\text{do}(C)}_\tau(X_{\text{pa}(\tau)}, \epsilon_\tau)\) of the intervened structural equilibrium model \eqref{intscm}. From \eqref{keyeqint}, we can see that \(J^{\tau, \textnormal{do}(C)}(\cdot\cd x_{\text{pa}(\tau)})\), the law of \(f^{\text{do}(C)}_\tau(x_{\text{pa}(\tau)}, \epsilon_\tau)\), is 
    \(J^\tau_{\tau\backslash C}(\cdot\cd a_{ \tau\cap C};x_{\text{pa}(\tau)})\delta_{ \tau\cap C}(a_{\tau\cap C})\).  Recall that $a_C$ are the intervened values.

We can decompose the functional assignment \(X_\tau=f^{\text{do}(C)}_\tau(X_{\text{pa}(\tau)}, \epsilon_\tau)\) in \eqref{intscm} separately into the parts \(\tau\cap C\) and \(\tau\backslash C\) as
\begin{align}\label{multicompreex}
    X_{\tau\cap C}&=a_{\tau\cap C} \text{ and } \nonumber\\
    X_{\tau\backslash C}&=f^{\textnormal{do}(C)}_{\tau\backslash C}(X_{\textnormal{pa}(\tau)}, X_{ \tau\cap C}, \epsilon_{\tau}),
\end{align}
where \(f^{\textnormal{do}(C)}_{\tau\backslash C}(x_{\textnormal{pa}(\tau)}, x_{\tau\cap C}, \epsilon_{\tau})\sim J^\tau_{\tau\backslash C}(\cdot\cd a_{ \tau\cap C};x_{\text{pa}(\tau)})\) with the same error \(\epsilon_\tau\). We will use the expression \eqref{multicompreex} to construct the corresponding graph \(\mathcal{G}_\textnormal{do}\) and show Proposition \ref{intrel}.

Given a node \(i\), let \(\textnormal{pa}_\textnormal{do}(i)\) and \(\textnormal{ne}_\textnormal{do}(i)\) denote the parents or neighbours of \(i\) in \(\mathcal{G}_\textnormal{do}\). 

\begin{proposition}
The corresponding graph \(\mathcal{G}_\textnormal{do}\) of the intervened structural equilibrium model \eqref{intscm} is \(\textnormal{do}_C(\mathcal{G})\).\label{intrel}
    
\end{proposition}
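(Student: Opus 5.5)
We compare the two graphs edge type by edge type, running Algorithm \ref{obsalg} on the intervened model written in the decomposed form \eqref{multicompreex}. Chain components disjoint from \(C\) keep their assignments \(f_\tau\), their laws \(J^\tau\) and their errors. Steps 1 and 2 of Algorithm \ref{obsalg} therefore return the same undirected and directed edges inside and into these components as in \(\mathcal{G}\). The only exception is a directed edge \(k\rightarrow \ell\) with \(k\in C\); such an edge is left untouched by Algorithm \ref{intalg} as well, since only edges \emph{into} \(C\) are removed. The real work concerns components \(\tau\) with \(\tau\cap C\neq\emptyset\). Each such component is now split into the parts \(\tau\cap C\), which are deterministic, and \(\tau\backslash C\), with law \(J^\tau_{\tau\backslash C}(\cdot\cd a_{\tau\cap C};x_{\textnormal{pa}(\tau)})\) and parents \(\textnormal{pa}(\tau)\cup(\tau\cap C)\).

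For nodes \(i\in C\) the argument is immediate. Their assignment \(X_i=a_i\) has no parents and no randomness, so Algorithm \ref{obsalg} gives \(i\) no incoming directed edges and no undirected edges. Their error is degenerate, hence independent of every other error, so \(i\) receives no bidirected edges. This matches Steps 2 and 3 of Algorithm \ref{intalg} and the removal of \(i-j\) for \(j\in C\).

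For \(i,j\in\tau\backslash C\), the condition in Step 1 for the law \(J^\tau_{\tau\backslash C}(\cdot\cd a_{\tau\cap C};x_{\textnormal{pa}(\tau)})\) asks whether \(i\cancel{\ci} j\cd \tau\backslash\{i,j\}\) holds under \(J^\tau\), since conditioning on \(a_{\tau\cap C}\) merely fixes the coordinates in \(C\). This is exactly the original Step 1 condition, so undirected edges inside \(\tau\backslash C\) are preserved. For directed edges into \(j\in\tau\backslash C\), the Step 2 quantity is the conditional marginal \(J^\tau_j(\cdot\cd x_{\tau\backslash j};x_{\textnormal{pa}(\tau)})\), where the coordinates in \(\tau\cap C\) now play the role of parents. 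Parents in \(\textnormal{pa}(\tau)\) are therefore unchanged. For \(i\in\tau\cap C\), the dependence of this conditional on \(x_i\) is, via Lemma \ref{localind} and the pairwise Markov property of Step 1, equivalent to \(i\in\textnormal{ne}(j)\) in \(\mathcal{G}\). This gives \(i\rightarrow j\) exactly when \(i-j\) held in \(\mathcal{G}\), which is Step 1 of Algorithm \ref{intalg}.

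I expect this last equivalence to be the main obstacle. One direction follows directly from Lemma \ref{localind}. For the converse, we must argue that \(i\cancel{\ci} j\cd\tau\backslash\{i,j\}\) forces the conditional of \(j\) to depend on \(x_i\) for all values of the remaining variables, which is what Step 2 formally requires. If needed, I would adopt the same reading of ``depends on'' as an extended conditional dependence (as in the remark citing \citet[Theorem 3.5]{step2}), so that both conditions coincide.

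Finally, we consider bidirected edges. The errors \(\epsilon_\tau\) are unchanged, so for nodes outside \(C\) Step 3 produces exactly the bidirected edges of \(\mathcal{G}\). As noted above, nodes in \(C\) lose theirs. Any splitting of \(\tau\backslash C\) into several chain components is handled as in Remark \ref{multicompremakr}, and it leaves the edge sets unaffected. Collecting these cases shows that \(\mathcal{G}_\textnormal{do}=\textnormal{do}_C(\mathcal{G})\).
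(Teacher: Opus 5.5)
Your proposal is correct and follows essentially the same route as the paper: run Algorithm \ref{obsalg} on the decomposed intervened model \eqref{multicompreex}. For \(i\notin C\) this gives \(\textnormal{ne}_{\textnormal{do}}(i)=\textnormal{ne}(i)\backslash C\) and \(\textnormal{pa}_{\textnormal{do}}(i)=\textnormal{pa}(i)\cup(\textnormal{ne}(i)\cap C)\), both sets are empty for \(i\in C\), and the unchanged errors reproduce the bidirected edges of \(\mathcal{G}\) minus those at the deterministic nodes in \(C\). The converse direction you flag (that \(i-j\) in \(\mathcal{G}\) forces the Step 2 dependence on \(x_i\) for \(i\in\tau\cap C\)) is passed over in the paper, which cites only Lemma \ref{localind}, so reading Step 2 as an extended conditional dependence, as you propose, is what makes that step rigorous.
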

\begin{proof}
Since the graphs \(\mathcal{G}_\textnormal{do}\) and \(\textnormal{do}_C(\mathcal{G})\) have the same nodes, it suffices to show that both graphs have the same set of undirected, directed, and bidirected edges. 

To show that the sets of undirected and directed edges coincide, given  a node \(i\in \tau\) for some chain component \(\tau\) in \(\mathcal{G}\), we show how \(\textnormal{pa}_\textnormal{do}(i)\) and \(\textnormal{ne}_\textnormal{do}(i)\) is related to \(\textnormal{pa}(i)\) and \(\textnormal{ne}(i)\), the parents and neighbours of \(i\) in \(\mathcal{G}\), as follows.
\begin{itemize}
        \item If \(i\not\in C\), then \(\textnormal{ne}_{\textnormal{do}}(i)=\textnormal{ne}(i)\backslash C\). If \(i\in C\), then \(\textnormal{ne}_{\textnormal{do}}(i)=\emptyset\).

         For nodes \(i\in \tau\backslash C\), Step 1 of Algorithm \ref{obsalg} uses \(J^\tau_{\tau\backslash C}(\cdot\cd a_{ \tau\cap C};x_{\text{pa}(\tau)})\) to construct \(\textnormal{ne}_{\textnormal{do}}(i)\) as the set of nodes \(j\in \tau\backslash C\) satisfying
\begin{align*}
    i\ci j\cd (\tau\backslash C)\backslash\{i,j\}  \textnormal{ for the conditional distribution } J^\tau_{\tau\backslash C}(\cdot\cd a_{ \tau\cap C};x_{\text{pa}(\tau)}).
\end{align*}
This is equivalent to \(j\) satisfying \(i\ci j\cd \tau\backslash\{i,j\}\) for \(J^\tau(\cdot\cd x_{\text{pa}(\tau)})\), which is how Algorithm \ref{obsalg} determines whether \(j\in \tau\backslash C\) should be in \(\textnormal{ne}(i)\) when using \(J^\tau(\cdot\cd x_{\text{pa}(\tau)})\). Thus \(\textnormal{ne}_{\textnormal{do}}(i)=\textnormal{ne}(i)\cap (\tau\backslash C)=\textnormal{ne}(i)\backslash C\).

For \(i\in C\), applying Step 1 of  Algorithm \ref{obsalg} to the law \(\delta_i(a_i)\) of the functional assignment \(X_{\tau\cap C}=a_{\tau\cap C}\) results in \(\textnormal{ne}_\textnormal{do}(i)=\emptyset\). 
        \item If \(i\not\in C\), then \(\textnormal{pa}_{\textnormal{do}}(i)=\textnormal{pa}(i)\cup (\textnormal{ne}(i)\cap C)\). If \(i\in C\), then \(\textnormal{pa}_\textnormal{do}(i)=\emptyset\). 

        For nodes \(i\in \tau\backslash C\), Step 2 of Algorithm \ref{obsalg} uses \(J^\tau_{\tau\backslash C}(\cdot\cd a_{ \tau\cap C};x_{\text{pa}(\tau)})\) to 
        select
        \(\textnormal{pa}^\textnormal{do}(i)\) as the set of nodes \(j\in \textnormal{pa}(\tau)\cup (\tau \cap C)\) such that for every value \(x_{(\tau\cup \textnormal{pa}(\tau))\backslash \{i,j\}}\), we have
     \begin{align*}
         J^{\tau}_i(\cdot \cd x_{(\tau\backslash C)\backslash i}, a_{\tau\cap C}; x_{\textnormal{pa}(\tau)})=J^{\tau}_i(\cdot \cd x_{\tau\backslash i}; x_{\textnormal{pa}(\tau)})
     \end{align*}
     depends on the value of \(x_j\). This is how Algorithm \ref{obsalg} determines whether \(j\in \textnormal{pa}(\tau)\) should be in \(\textnormal{pa}(i)\). For nodes \(j\in \tau\cap C\), from Lemma \ref{localind}, it can be seen that \(J^{\tau}_i(\cdot \cd x_{\tau\backslash i}; x_{\textnormal{pa}(\tau)})\) depends only on nodes \(j\in  \textnormal{ne}(i)\). Thus \(\textnormal{pa}_{\textnormal{do}}(i)=\textnormal{pa}(i)\cup (\textnormal{ne}(i)\cap C)\). 

     For \(i\in C\), applying Step 2 of Algorithm \ref{obsalg} to the law \(\delta_i(a_i)\) of the functional assignment \(X_{\tau\cap C}=a_{\tau\cap C}\) results in \(\textnormal{pa}_\textnormal{do}(i)=\emptyset\). 
    \end{itemize}
     Observe that this relationship is also exactly described by Algorithm \ref{intalg}. 
     For example, 
     from Figure \ref{intalgex}, it can be seen that \(\textnormal{ne}_{\textnormal{do}}(5)=\{3\}\backslash 3=\textnormal{ne}(5)\backslash C\) and \(\textnormal{pa}_\textnormal{do}(5)=\emptyset \cup 2=\textnormal{pa}(5)\cup \big(\textnormal{ne}(5)\cap C\big)\).
      
     We next show that the bidirected edges coincide. For each part \(\tau\) of \(\mathcal{G}\), the errors \(\epsilon_\tau\) for the functional assignment of \(X_{\tau\backslash C}\) in \eqref{multicompreex} have the same joint distribution, and thus dependencies, as the errors in the observational structural equilibrium model. Thus applying Step 3 of Algorithm \ref{obsalg} to the errors \(\epsilon_\tau\) of \eqref{multicompreex} returns the same bidirected edges as \(\mathcal{G}\) only with the bidirected edges connected to the nodes in \(\tau\cap C\) removed (since the functional assignment of \(X_{\tau\cap C}\) in \eqref{multicompreex} is deterministic). These remaining edges coincide with the bidirected edges of \(\textnormal{do}_C(\mathcal{G})\).
%
%
\end{proof}



\begin{lemma}\label{do is chain}
    Let \(\mathcal{G}\) be a chain-connected anterial graph. Then \(\text{do}_C(\mathcal{G})\) is  a chain-connected anterial graph.
\end{lemma}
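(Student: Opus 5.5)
The plan is to check the three defining conditions directly: no semi-directed cycles, no semi-directed path between the endpoints of a bidirected edge, and chain-connectedness. Each is checked by comparing \(\textnormal{do}_C(\mathcal{G})\) with \(\mathcal{G}\) through Algorithm \ref{intalg}. The key observation is that Algorithm \ref{intalg} only deletes edges or turns an undirected edge \(i-j\) with \(i\in C\), \(j\notin C\) into \(i\rightarrow j\). It never creates an edge between previously non-adjacent nodes. Moreover, any semi-directed path in \(\textnormal{do}_C(\mathcal{G})\) is also a semi-directed path in \(\mathcal{G}\), because an edge \(i\rightarrow j\) in \(\textnormal{do}_C(\mathcal{G})\) came either from \(i\rightarrow j\) or from \(i-j\) in \(\mathcal{G}\), and both may be traversed from \(i\) to \(j\) semi-directedly.

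\textbf{Item 2 of Definition \ref{def: ant}.} A semi-directed cycle in \(\textnormal{do}_C(\mathcal{G})\) needs at least one directed edge. By the observation it maps to a closed semi-directed walk in \(\mathcal{G}\), where that directed edge may have become undirected. If the cycle contains an original directed edge, we get a semi-directed cycle in \(\mathcal{G}\), a contradiction. The remaining case is a cycle whose only directed edges are new edges \(i\rightarrow j\) with \(i\in C\). Such a cycle must enter \(i\) by some edge. Edges into \(C\) are removed, and undirected edges at \(i\in C\) are all either removed or oriented away from \(i\). So no edge of \(\textnormal{do}_C(\mathcal{G})\) can be traversed semi-directedly into \(i\), and this case cannot occur. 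In fact the same argument shows more generally that no semi-directed path in \(\textnormal{do}_C(\mathcal{G})\) passes through or ends at a node of \(C\) except as its first node.

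\textbf{Item 1 of Definition \ref{def: ant}.} Take a bidirected edge \(k\leftrightarrow \ell\) in \(\textnormal{do}_C(\mathcal{G})\). Then \(k,\ell\notin C\), and the edge was already present in \(\mathcal{G}\). A semi-directed path from \(k\) to \(\ell\) in \(\textnormal{do}_C(\mathcal{G})\) cannot pass through \(C\) by the remark above. Its edges are therefore edges of \(\mathcal{G}\) among nodes outside \(C\), with the same type, so it is a semi-directed path in \(\mathcal{G}\). This contradicts \(\mathcal{G}\) being anterial.

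\textbf{Chain-connectedness.} This is the step I expect to need the most care, because chain components split under intervention. Let \(k\leftrightarrow \ell\) in \(\textnormal{do}_C(\mathcal{G})\) and \(m\in\tau_{\textnormal{do}}(\ell)\), the chain component of \(\ell\) in \(\textnormal{do}_C(\mathcal{G})\). Undirected edges in \(\textnormal{do}_C(\mathcal{G})\) are exactly those of \(\mathcal{G}\) between nodes outside \(C\). Hence \(\tau_{\textnormal{do}}(\ell)\subseteq \tau(\ell)\setminus C\), and in particular \(m\notin C\). By chain-connectedness of \(\mathcal{G}\) we have \(k\leftrightarrow m\) in \(\mathcal{G}\). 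Since \(k,m\notin C\), this edge is not removed by Algorithm \ref{intalg}, so \(k\leftrightarrow m\) in \(\textnormal{do}_C(\mathcal{G})\). This gives all three properties, and hence the lemma.
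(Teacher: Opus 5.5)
Your proof is correct and follows essentially the same route as the paper. Both arguments rest on three facts: edges among nodes outside \(C\) are unchanged, nodes of \(C\) carry only outgoing directed edges (so no semi-directed path or cycle can pass through them), and any offending cycle, path, or bidirected edge in \(\textnormal{do}_C(\mathcal{G})\) can therefore be transferred back to \(\mathcal{G}\). Your version is slightly more careful in two places: you note that a semi-directed path may start in \(C\), and you verify chain-connectedness for the whole component via \(\tau_{\textnormal{do}}(\ell)\subseteq\tau(\ell)\setminus C\) rather than only the local configuration \(i\leftrightarrow j - k\).
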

\begin{proof}
We make the following observations.
\begin{enumerate}
    \item 
    \label{Ob-1}
    Edges with endpoint nodes not in \(C\) are the same in both \(\mathcal{G}\) and \(\text{do}_C(\mathcal{G})\).
    \item
    \label{Ob-2}
    Semi-directed paths and cycles (if they exist) in \(\text{do}_C(\mathcal{G})\) cannot contain nodes in \(C\).
    \item 
    \label{Ob-3}
    Nodes \(i\) in \(C\) are only connected to directed edges pointing away from \(i\).
\end{enumerate}

Recall Definition \ref{def: ant}; we first verify that the graph is anterial.
 Observations \ref{Ob-1} and \ref{Ob-2} imply that \(\text{do}_C(\mathcal{G})\) does not contain any semi-directed cycle in \(\mathcal{G}\).
    Towards a contradiction, suppose that there exists a bidirected edge \(j\leftrightarrow k\) with semi-directed path between \(j\) and \(k\) in \(\text{do}_C(\mathcal{G})\), then Observation \ref{Ob-3} implies that \(j,k\not \in C\), from  which, using Observation \ref{Ob-1}, we have that bidirected edge \(j\leftrightarrow k\) is in \(\mathcal{G}\). Observations \ref{Ob-1} and \ref{Ob-2} imply the same semi-directed path between endpoints of \(j\leftrightarrow k\) in \(\mathcal{G}\), which is absurd, since \(\mathcal{G}\) is anterial.

Recall Definition \ref{defchain};  we finally verify that the graph is chain-connected.  Towards a contradiction,  suppose that there exists node configuration \(i\leftrightarrow j-k\) in \(\text{do}_C(\mathcal{G})\).  From observation 3, \(i,j\) and \(k\) cannot be in \(C\), thus observation 1 implies that the node configuration \(i\leftrightarrow j-k\) exists in \(\mathcal{G}\), which implies that \(i\leftrightarrow k\) in \(\mathcal{G}\) and thus by observation 1 in \(\textnormal{do}_C(\mathcal{G})\) as well.
\end{proof}

\begin{proof}[Proof of Theorem \ref{intmarkov}]

By Proposition \ref{intrel}, we have that \(\textnormal{do}_C(\mathcal{G})\) is the corresponding graph of the intervened structural equilibrium model \eqref{intscm}. By Lemma \ref{do is chain}, we have that \(\textnormal{do}_C(\mathcal{G})\) is a chain-connected anterial graph. Thus an application of  Theorem \ref{correctgenant} completes the proof. 
\end{proof}
\subsection{Proof of   Theorem \ref{correctnewswag}}
Consider an observational structural equilibrium model with the corresponding graph \(\mathcal{G}\). Similar to the proof of Theorem \ref{intmarkov}, we first show that after applying Algorithm \ref{obsalg} to the combined structural equilibrium model \eqref{swigSCM}, the returned corresponding graph \(\mathcal{G}_\phi\) is, in fact, the graph \(\phi(\mathcal{G}; C)\) obtained from Algorithm \ref{intalg}. 
%
%

\begin{proposition}\label{swigrel}
    The corresponding graph \(\mathcal{G}_\phi\) of the combined structural equilibrium model \eqref{swigSCM} is \(\phi(\mathcal{G}; C)\).
\end{proposition}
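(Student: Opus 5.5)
The plan is to mirror the proof of Proposition \ref{intrel}. We view the combined model \eqref{swigSCM} as a single structural equilibrium model whose parts are of two kinds. The first kind are the chain components \(\tau\) of \(\mathcal{G}\), carrying the observational assignments. The second kind are the parts of the intervened model \eqref{intscm}, i.e.\ the chain components of \(\textnormal{do}_C(\mathcal{G})\), carrying the intervened assignments. Intervened variables \(X^{\textnormal{do}(C)}_i\) with \(C\cap\textnormal{ant}(i)=\emptyset\) are identified with \(X_i\). This is legitimate because an induction along the ordered partition shows \(X^{\textnormal{do}(C)}_{\tau}=X_{\tau}\) whenever no node of \(\tau\cup\textnormal{pa}(\tau)\) is a posterior of \(C\), since then \(f_\tau\) receives the same arguments and the same error.

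We order the parts with all observational parts first, in the order of \(\mathcal{G}\), followed by the remaining intervened parts in the order of \(\textnormal{do}_C(\mathcal{G})\). This is a valid ordered partition, since intervened parts may have observational (merged) parents but not conversely. We then apply Algorithm \ref{obsalg} and compare edge sets with \(\phi(\mathcal{G};C)\), one edge type at a time.

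For Steps 1 and 2 of Algorithm \ref{obsalg} on observational parts, the laws \(J^\tau\) are unchanged, so we recover exactly the undirected and directed edges of \(\mathcal{G}\). For the intervened parts, the laws are those of \eqref{intscm}, with parents \(X^{\textnormal{do}(C)}_{\textnormal{pa}(\tau)}\). By the argument of Proposition \ref{intrel}, these give the undirected and directed edges of \(\textnormal{do}_C(\mathcal{G})\), relabelled. Where a parent \(j\) has been merged, the edge becomes an edge from \(j\) to \(i^{\textnormal{do}(C)}\), which is precisely Step 2 of Algorithm \ref{newswagalg}.

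Step 3 of Algorithm \ref{obsalg} is the main step. We must determine when the errors of two parts are dependent. Two observational parts retain the bidirected edges of \(\mathcal{G}\), and two intervened parts retain those of \(\textnormal{do}_C(\mathcal{G})\), since the joint error law is unchanged. For an intervened part containing \(i^{\textnormal{do}(C)}\) with \(i\notin C\) and an observational part \(\tau(j)\), the relevant errors are \(\epsilon_{\tau(i)}\) and \(\epsilon_{\tau(j)}\).
\begin{itemize}
\item If \(\tau(j)=\tau(i)\), the errors are identical and hence dependent; this covers rows 1 and 3 of Table \ref{easyalg}.
\item If \(\epsilon_{\tau(i)}\cancel{\ci}\epsilon_{\tau(j)}\), then \(i\leftrightarrow j\) in \(\mathcal{G}\); this gives row 2.
\end{itemize}
Nodes in \(C\) carry deterministic assignments, so they receive no bidirected edges.

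The delicate points arise from the interaction between splitting and merging. First, an intervened part \(\tau\backslash C\) may split into several chain components, each carrying a sub-error of \(\epsilon_\tau\). We handle this through the decomposition in the remark following Theorem \ref{intmarkov} and in \eqref{multicompreex}, checking that every sub-piece remains dependent with \(\epsilon_\tau\); this is needed for row 3, for instance through the shared error. Second, a merged node must receive bidirected edges only through the observational copy, and we must verify that this causes no double counting. We expect this bookkeeping, and confirming that the edge rules of Table \ref{easyalg} coincide exactly with the error dependencies, to be the main obstacle.
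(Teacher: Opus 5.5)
Your plan is correct and follows the paper's proof essentially step by step. You merge the intervened copies of the nodes in the set $A$ of Step~2 of Algorithm~\ref{newswagalg} into their observational versions, reuse the argument of Proposition~\ref{intrel} for the undirected and directed edges (with merged parents relabelled), and read off the bidirected edges from the shared error $\epsilon_\tau$ and the error dependencies already encoded in $\mathcal{G}$ and $\textnormal{do}_C(\mathcal{G})$, which is exactly Table~\ref{easyalg}; the splitting issue you flag disappears if, as the paper does via \eqref{multicompreex}, you keep $\tau\backslash C$ as a single intervened part carrying all of $\epsilon_\tau$ rather than treating the chain components of $\textnormal{do}_C(\mathcal{G})$ as parts (Step~3 of Algorithm~\ref{obsalg} only joins distinct parts). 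One slip to fix: your merge condition, in both places you state it, must also exclude nodes in $C$, i.e.\ require $\tau\cap(C\cup\textnormal{po}(C))=\emptyset$, since for instance with $C=\{c\}$ one always has $C\cap\textnormal{ant}(c)=\emptyset$, yet $X^{\textnormal{do}(C)}_c=a_c$ is deterministic and $c^{\textnormal{do}(C)}$ must remain a separate node.
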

\begin{proof}
     Given an observational structural equilibrium model with the corresponding graph \(\mathcal{G}\), consider the combined structural equilibrium model in \eqref{swigSCM}. 
     
     We first show that the nodes coincide. Consider the set of nodes \(i\not\in C\) such that \(C\cap \textnormal{ant}(i)=\emptyset\) in \(\mathcal{G}\), call this set \(A\), which is the set in Step 2 of Algorithm \ref{newswagalg}. For \(i\in A\), it holds that  \(X_{\tau(i)}=X^{\textnormal{do}(C)}_{\tau(i)}\)  and the arguments (in \(A\)) of the functional assignments of \(X^{\text{do}(C)}_\tau\) in \eqref{swigSCM} can be substituted as
     \begin{align}\label{reex3}
    X^{\text{do}(C)}_\tau&=f^{\text{do}(C)}_\tau(X^{\text{do}(C)}_{\text{pa}(\tau)\backslash A}, X_{\text{pa}(\tau)\cap A}, \epsilon_\tau), \quad  \text{if } \tau\cap C\neq \emptyset\nonumber \text{ and }\\
        X^{\text{do}(C)}_\tau&=f_\tau(X^{\text{do}(C)}_{\text{pa}(\tau)\backslash A}, X_{\text{pa}(\tau)\cap A}, \epsilon_\tau),  \quad \text{if } \tau\cap C=\emptyset.
    \end{align}

 Thus, \eqref{swigSCM} has exactly the same variables as the nodes in \(\phi(\mathcal{G}; C)\). 

 We show that the undirected and directed edges coincide. 
 From \eqref{reex3}, since the law of the functional assignments remains the same as the intervened structural equilibrium model \eqref{intscm}, applying 
 Steps 1 and 2 of Algorithm \ref{obsalg} to the combined structural equilibrium model results in, for nodes \(i^{\textnormal{do}(C)}\), the same \(\textnormal{ne}_{\textnormal{do}}(i^{\textnormal{do}(C)})\) and \(\textnormal{pa}_{\textnormal{do}}(i^{\textnormal{do}(C)})\) as the relabelled \(\textnormal{do}_C(\mathcal{G})\) (by similar argument of Proposition \ref{intrel}), 
 except with nodes \(j^{\textnormal{do}(C)}\in \textnormal{pa}_{\textnormal{do}}(i^{\textnormal{do}(C)})\), where \(j\in A\), being replaced with \(j\). Thus, applying Steps 1 and 2 of Algorithm \ref{obsalg} to \eqref{swigSCM} results in the same undirected and directed edges as after applying Steps 1 and 2 of Algorithm \ref{newswagalg}.

We next show that the bidirected edges coincide. The dependence of the errors between two observational variables and two intervened variables is the same as the observational structural equilibrium model and the intervened structural equilibrium model \eqref{intscm} respectively, thus when applying Algorithm \ref{obsalg}, the added bidirected edges coincide with those of \(\mathcal{G}\) and \(\textnormal{do}_C(\mathcal{G})\) respectively. The error \(\epsilon_\tau\) for \(X_\tau\) and \(X^{\textnormal{do}(C)}_{\tau}\) is shared in \eqref{swigSCM}. The resulting dependence of errors between observational and intervened variables, when used to create bidirected edges using Algorithm \ref{obsalg}, is accounted for in Table \ref{easyalg} of Step 3 of Algorithm \ref{newswagalg}.
\end{proof}

\begin{proposition}[The single world graph \(\phi(\mathcal{G};C)\) is a chain-connected anterial graph]\label{easyantoutput}
    If \(\mathcal{G}\) is a chain-connected anterial graph, then for any node subset \(C\subseteq V\), we have that \(\phi(\mathcal{G};C)\) is a chain-connected anterial graph.
\end{proposition}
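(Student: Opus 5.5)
We verify the three requirements directly: no semi-directed cycles, no semi-directed path between the endpoints of a bidirected edge, and chain-connectedness. First we fix the structure of \(\phi(\mathcal{G};C)\). Let \(A=\{i\not\in C: C\cap \textnormal{ant}(i)=\emptyset\}\) be the merged set from Step 2 of Algorithm \ref{newswagalg}, and let \(D=V\backslash A\), so that the relabelled nodes are \(\{i^{\textnormal{do}(C)}: i\in D\}\).

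We record three observations.
\begin{enumerate}
\item The subgraph of \(\phi(\mathcal{G};C)\) on the observational nodes \(V\) is exactly \(\mathcal{G}\).
\item The remaining undirected and directed edges are those of \(\textnormal{do}_C(\mathcal{G})\), with \(j^{\textnormal{do}(C)}\) replaced by \(j\) for \(j\in A\).
\item Every edge between a node of \(V\) and a node \(i^{\textnormal{do}(C)}\) with \(i\in D\) is either bidirected (Table \ref{easyalg}) or directed \(j\rightarrow i^{\textnormal{do}(C)}\) with \(j\in A\). An edge \(i^{\textnormal{do}(C)}\rightarrow j\) or \(i^{\textnormal{do}(C)}-j\) with \(j\in A\) cannot exist. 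If it did, then \(i\) would be adjacent to \(j\) in \(\textnormal{do}_C(\mathcal{G})\) via an edge into \(j\) or undirected, hence \(i\in\textnormal{ant}(j)\) or \(i\in\tau(j)\) in \(\mathcal{G}\). Combined with \(C\cap\textnormal{ant}(i)\neq\emptyset\) or \(i\in C\), this gives \(C\cap \textnormal{ant}(j)\neq\emptyset\) or \(j\in C\), contradicting \(j\in A\).
\end{enumerate}
Consequently, every semi-directed path in \(\phi(\mathcal{G};C)\) moves from \(V\) into the relabelled layer at most once and never returns.

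\textbf{No semi-directed cycles.} A semi-directed cycle either lies entirely in \(V\), where it would be a semi-directed cycle of \(\mathcal{G}\), or lies entirely among relabelled nodes together with nodes of \(A\) used only as sources. By the one-way property, a cycle cannot return to \(V\), so it lies entirely in the relabelled layer. There it maps to a semi-directed cycle of \(\textnormal{do}_C(\mathcal{G})\), which is impossible by Lemma \ref{do is chain}.

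\textbf{Bidirected edges.} Take a bidirected edge and a semi-directed path between its endpoints. If both endpoints lie in \(V\), or both in the relabelled layer, we project onto \(\mathcal{G}\) or \(\textnormal{do}_C(\mathcal{G})\) respectively and obtain a contradiction, using Lemma \ref{do is chain} in the second case. The new edges from Table \ref{easyalg} have the form \(i^{\textnormal{do}(C)}\leftrightarrow j\) with \(i\in D\backslash C\) and \(j\in V\), where \(j=i\), \(j\leftrightarrow i\) in \(\mathcal{G}\), or \(j\in\tau(i)\).
\begin{itemize}
\item A semi-directed path from \(i^{\textnormal{do}(C)}\) to \(j\) is impossible by observation 3.
\item So suppose there is a semi-directed path from \(j\) to \(i^{\textnormal{do}(C)}\). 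It leaves \(V\) at some edge \(k\rightarrow \ell^{\textnormal{do}(C)}\) with \(k\in A\). The part before that edge is a semi-directed path in \(\mathcal{G}\) from \(j\) to \(k\). The part after it is a semi-directed path in \(\textnormal{do}_C(\mathcal{G})\) from \(\ell\) to \(i\). Its nodes avoid \(C\) by observation 2 in the proof of Lemma \ref{do is chain}, so by observation 1 there the same path exists in \(\mathcal{G}\). Together with the edge \(k\rightarrow\ell\), which is an edge of \(\mathcal{G}\) since \(k\notin C\), this yields a semi-directed path from \(j\) to \(i\) in \(\mathcal{G}\) containing a directed edge.
\item This gives a contradiction in each case. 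If \(j=i\) or \(j\in\tau(i)\), it produces a semi-directed cycle in \(\mathcal{G}\). If \(j\leftrightarrow i\) in \(\mathcal{G}\), it violates the anterial property of \(\mathcal{G}\).
\end{itemize}

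\textbf{Chain-connectedness.} Chain components of \(\phi(\mathcal{G};C)\) are either chain components of \(\mathcal{G}\) or chain components of \(\textnormal{do}_C(\mathcal{G})\) restricted to \(D\). A merged node \(j\in A\) has the same undirected neighbours in both graphs, since \(\tau(j)\subseteq A\). For bidirected edges inside one layer, chain-connectedness follows from \(\mathcal{G}\) and from Lemma \ref{do is chain}. For cross edges \(i^{\textnormal{do}(C)}\leftrightarrow j\), we check from Table \ref{easyalg} two closure properties:
\begin{itemize}
\item The set of observational \(j\) joined to \(i^{\textnormal{do}(C)}\) is \(\tau(i)\cup\{j: j\leftrightarrow i\}\). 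This set is a union of chain components, because \(\mathcal{G}\) is chain-connected.
\item The relabelled neighbours \(k\) of \(i\) in \(\textnormal{do}_C(\mathcal{G})\) lie in \(\tau(i)\backslash C\), and hence receive the same cross edges.
\end{itemize}

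The main obstacle is the mixed-path case of the bidirected-edge condition. The key step there is observation 3, that edges never point from the relabelled layer back into \(A\), which we would check carefully against the definition of \(A\).
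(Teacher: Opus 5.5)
Your proposal is correct and follows essentially the same route as the paper. The paper's key observation is exactly your one-way property: an edge between an observational node and a relabelled node is either \(i\rightarrow j^{\textnormal{do}(C)}\) or bidirected. From there it likewise reduces the same-layer cases to \(\mathcal{G}\) and \(\textnormal{do}_C(\mathcal{G})\) via Lemma \ref{do is chain}, and handles cross bidirected edges by transferring the semi-directed path back to \(\mathcal{G}\). Your version is more explicit than the paper's, since it treats all three rows of Table \ref{easyalg} and spells out the closure properties for chain-connectedness, but the argument is the same.
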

\begin{proof}[Proof of Proposition \ref{easyantoutput}]
 
 Note that nodes in \(\phi(\mathcal{G}; C)\) are either nodes from \(\mathcal{G}\) or from \(\textnormal{do}_C(\mathcal{G})\), which we will differentiate with a superscript \(\cdot^{\textnormal{do}(C)}\) (with \(i^{\textnormal{do}(C)}\) being a node from \(\textnormal{do}_C(\mathcal{G})\) that corresponds with the node  \(i\) from \(\mathcal{G}\)). 

 We have the following observations.
\begin{enumerate}
\item 
\label{pp-obvs1}
By Lemma \ref{do is chain}, both \(\mathcal{G}\) and \(\textnormal{do}_C(\mathcal{G})\) are chain-connected anterial graphs.
\item 
\label{pp-obvs2}
For nodes \(i,j\not\in C\), an edge between \(i\) and \(j\) in \(\mathcal{G}\) correspond to the same edge between  \(i^{\textnormal{do}(C)}\) and \(j^{\textnormal{do}(C)}\) in \(\textnormal{do}_C(\mathcal{G})\).
\item 
\label{pp-obvs3}
From Step 2, edges between nodes \(i\) and \(j^{\textnormal{do}(C)}\) in \(\phi(\mathcal{G}; C)\) are either \(i\rightarrow j^{\textnormal{do}(C)}\) or \(i\leftrightarrow j^{\textnormal{do}(C)}\). See, for example, the middle of Figure \ref{newswagfig}. This is because \(i, j\not\in C\) and \(j\not\in \textnormal{ant}(i)\), otherwise \(j\in \textnormal{ant}(i)\) and \(C\cap \textnormal{ant}(j)\neq \emptyset\) (by the condition in Step 2) causing \(C\cap \textnormal{ant}(i)\neq \emptyset\), contradicting the condition in Step 2. 
\item 
\label{pp-obvs4}
An edge between nodes \(i\) and \(j\) in \(\phi(\mathcal{G}; C)\) corresponds to the same edge in \(\mathcal{G}\). If \(i^{\textnormal{do}(C)}\) and \(j^{\textnormal{do}(C)}\) are nodes in \(\phi(\mathcal{G}; C)\),
then
this correspondence of edges between \(\phi(\mathcal{G}; C)\) and \(\textnormal{do}_C(\mathcal{G})\) also holds.

\end{enumerate}

We will show that there are no semi-directed cycles in \(\phi(\mathcal{G}; C)\). By Observations \eqref{pp-obvs1} and \eqref{pp-obvs4}, we only have to show semi-directed cycles consisting of both nodes of the form \(i\) and \(j^{\textnormal{do}(C)}\) 
do not exist in 
\(\phi(\mathcal{G}; C)\). This is implied by \eqref{pp-obvs3}.

We will show that there are no semi-directed paths connecting endpoints of a bidirected edge in \(\phi(\mathcal{G}; C)\). Consider edges of the form \(i\leftrightarrow j\) or \(i^{\textnormal{do}(C)}\leftrightarrow j^{\textnormal{do}(C)}\) in \(\phi(\mathcal{G}; C)\). By 
\eqref{pp-obvs3},
a semi-directed path between \(i\) and \(j\) or \(i^{\textnormal{do}(C)}\) and \(j^{\textnormal{do}(C)}\) in \(\phi(\mathcal{G}; C)\) must only consists of nodes of the form \(k\) or \(k^{\textnormal{do}(C)}\) respectively. 
By 
\eqref{pp-obvs4},
such a semi-directed path connects the endpoints of \(i\leftrightarrow j\) or \(i^{\textnormal{do}(C)}\leftrightarrow j^{\textnormal{do}(C)}\) in \(\mathcal{G}\) or \(\textnormal{do}_C(\mathcal{G})\), respectively, and thus cannot exist by 
\eqref{pp-obvs1}.
For the edge \(i\leftrightarrow j^{\textnormal{do}(C)}\), by 
\eqref{pp-obvs3},
a semi-directed path from \(i\) to \(j^{\textnormal{do}(C)}\) cannot contain nodes of the form \(k^{\textnormal{do}(C)}\) where \(k\in C\) (since there are no undirected or directed edges pointing into \(k^{\textnormal{do}(C)}\) in \(\textnormal{do}_C(\mathcal{G})\) and by 
\eqref{pp-obvs4}
in \(\phi(\mathcal{G}; C)\) as well), 
thus by 
\eqref{pp-obvs2}
would imply a semi-directed path from endpoints of \(i\leftrightarrow j\) in \(\mathcal{G}\).
    
Chain-connectedness of \(\phi(\mathcal{G}; C)\) follows trivially from item 3 of Table \ref{easyalg}, along with \(\mathcal{G}\) and \(\textnormal{do}_C(\mathcal{G})\) being chain-connected 
by \eqref{pp-obvs1}.
\end{proof}

We will show Theorem \ref{correctnewswag} using Theorem \ref{correctgenant} and Proposition \ref{intrel}.

\begin{proof}[Proof of Theorem \ref{correctnewswag}]
By Proposition \ref{swigrel}, we have that \(\phi(\mathcal{G}; C)\) is the corresponding graph of the intervened structural equilibrium model \eqref{swigSCM}. By Proposition \ref{easyantoutput}, we have that \(\phi(\mathcal{G}; C)\) is a chain-connected anterial graph. An application  of  Theorem \ref{correctgenant}  completes the proof.
\end{proof}

\begin{proof}[Proof of Proposition \ref{propswigcg}]
    By Theorem \ref{correctnewswag}, since \(P^*\) is Markovian to \(\phi(\mathcal{G}; C)\), the marginal \(P^*_{\textnormal{sub}}\) is Markovian to \(\phi'(G;C)=\alpha_{\textnormal{m}}(\phi(\mathcal{G}; C); \textnormal{po}(C))\). 
\end{proof}

\subsection{Proof of Theorem \ref{CCSworks}}

To 
prove
Theorem \ref{CCSworks}, we will use the following results from \citet[Theorem 1]{margincond} which shows how \(\alpha_{\textnormal{m}}\) composes, as a function,  with itself.
\begin{proposition}[\citep{margincond}]\label{compose}
    Consider the graph \(\mathcal{G}\) over the nodes \(V\). Given disjoint subsets \(M_1, M_2\subseteq V\), we have that \(\alpha_{\textnormal{m}}(\alpha_{\textnormal{m}}(\mathcal{G}; M_1); M_2)=\alpha_{\textnormal{m}}(\mathcal{G}; M_1\cup M_2) \).
\end{proposition}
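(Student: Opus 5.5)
The plan is to prove the identity by giving an explicit, path-based description of the edges of \(\alpha_{\textnormal{m}}(\mathcal{G};M)\), and then checking that this description composes. Both sides are graphs over the same node set \(V\backslash (M_1\cup M_2)\), so it suffices to show that for every pair \(i,j\) of remaining nodes, and every edge type (undirected, \(i\rightarrow j\), \(j\rightarrow i\), bidirected), the edge is present on the left-hand side if and only if it is present on the right-hand side.

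The first step is a characterisation lemma, which I would extract from the generating rules defining \(\alpha_{\textnormal{m}}\) in \cite{margincond}. Those rules iteratively add edges across a marginalised node according to the edge types at its two sides, and then delete the marginalised nodes. The claim is that \(i\) and \(j\) are adjacent in \(\alpha_{\textnormal{m}}(\mathcal{G};M)\) if and only if \(\mathcal{G}\) contains a path \(\langle i,q_1,\ldots,q_n,j\rangle\) satisfying two conditions.
\begin{itemize}
\item All inner nodes \(q_k\) lie in \(M\).
\item Every inner node is a non-collider on the path, i.e.\ it is not of the form \(\cdot \rightarrow q_k \leftarrow\cdot\) or with arrowheads meeting from bidirected edges.
\end{itemize}
The type of the added edge is then determined only by the marks at the two ends of the path: there is an arrowhead at \(j\) exactly when the last edge has an arrowhead at \(j\) and no tail-propagating segment cancels it. I would prove this lemma by induction on the number of applications of the generating rules, in both directions. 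One direction observes that each rule application shortens such a path by one inner node while preserving the endpoint marks. The other direction checks that every newly added edge arises from such a path.

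With the lemma in hand, composition is a concatenation argument. Suppose an edge \(i\,\ast\!\!-\!\!\ast\, j\) lies in \(\alpha_{\textnormal{m}}(\alpha_{\textnormal{m}}(\mathcal{G};M_1);M_2)\). Then there is a witnessing path through \(M_2\) in \(\alpha_{\textnormal{m}}(\mathcal{G};M_1)\). Replacing each of its edges by a witnessing path through \(M_1\) in \(\mathcal{G}\) gives a walk through \(M_1\cup M_2\) with the same end marks. The junction nodes are non-colliders because the lemma preserves end marks. A walk can then be shortened to a path without creating colliders, using the no semi-directed cycle property. Conversely, a witnessing path in \(\mathcal{G}\) through \(M_1\cup M_2\) splits at its \(M_2\)-nodes into maximal segments through \(M_1\). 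Each segment becomes a single edge of \(\alpha_{\textnormal{m}}(\mathcal{G};M_1)\) with the same marks, and these edges form a witnessing path through \(M_2\). Equality of the edge types follows because both sides depend only on the same endpoint marks.

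The main obstacle I expect is the first step, in particular getting the endpoint-mark rule exactly right in the presence of undirected edges. Anterial marginalisation treats \(-\) and \(\rightarrow\) asymmetrically, so ``non-collider'' must be read as in the anterial separation criterion of \cite{lkayvan}. The walk-to-path reduction in the concatenation step is the other delicate point, and I would handle it using anteriality (Definition \ref{def: ant}). If the direct characterisation proves unwieldy, my fallback is a weaker route. For singletons, the order of marginalising two nodes does not matter, which can be checked by a finite case analysis of the generating rules on configurations \(i\ast q_1\ast q_2\ast j\). Proposition \ref{compose} would then follow by induction on \(|M_1\cup M_2|\).
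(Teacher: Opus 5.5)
The paper does not prove this statement; it imports it from \citet[Theorem~1]{margincond}. Your plan is a path characterisation of the edges of \(\alpha_{\textnormal{m}}(\mathcal{G};M)\) followed by concatenation. That is reasonable in outline, and you rightly flag the mark rule as the delicate point. But the characterisation as you state it is false for anterial graphs, and the concatenation step relies on exactly its false parts.

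\emph{Adjacency.} Adjacency is not governed by a node-wise non-collider condition. Take \(i\rightarrow m_1 - m_2\leftarrow j\) with \(M=\{m_1,m_2\}\). Each inner node is a non-collider in your sense, yet \(\{m_1,m_2\}\) is a collider section. So \(i\perp_{\mathcal{G}} j\), and by \eqref{propmarg} \(i\) and \(j\) must be non-adjacent in the marginal.

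\emph{Marks.} More seriously, the marks of a generated edge are not the end marks of the witnessing path. Take \(i\rightarrow m - j\leftarrow k\) with \(M=\{m\}\). Here \(i\perp_{\mathcal{G}} k\), because \(\{m,j\}\) is a collider section. Neither \(i,j\) nor \(j,k\) can be separated by any subset of the retained nodes. By \eqref{propmarg}, \(\alpha_{\textnormal{m}}(\mathcal{G};m)\) must therefore contain an edge between \(i\) and \(j\) and an edge between \(j\) and \(k\), with \(j\) a collider on \(\langle i,j,k\rangle\). So the new edge has an arrowhead at \(j\), even though \(\langle i,m,j\rangle\) ends with the tail of \(m - j\). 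Your rule yields \(i - j\), which destroys \(i\perp k\). Arrowheads propagate along undirected edges. What the marks record is anteriority (a tail at an endpoint when the path is semi-directed from it to the other endpoint), not the marks of the first and last edges.

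\emph{Consequences for concatenation.} Neither justification in your concatenation step survives. A junction node \(q\in M_2\) may acquire an arrowhead in \(\alpha_{\textnormal{m}}(\mathcal{G};M_1)\) that it did not have in \(\mathcal{G}\), exactly as \(j\) does above. So its collider status cannot be inferred from ``preserved end marks'', and edge types cannot be compared via endpoint marks. A repaired argument would need to:
\begin{enumerate}
\item state adjacency in terms of paths through \(M\) with no collider \emph{sections};
\item state the mark rule in terms of semi-directedness;
\item prove that the first-stage marks at each junction reproduce the section-collider structure of the concatenated path;
\item reduce the comparison of edge types to the fact that \(\alpha_{\textnormal{m}}\) preserves anterior relations among retained nodes, which is the invariant that actually composes.
\end{enumerate}

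\emph{The fallback.} Your fallback also has a gap. Commuting two single-node eliminations only shows that successive elimination is order-independent. But, as you describe it, \(\alpha_{\textnormal{m}}(\mathcal{G};M)\) closes the generating rules over all of \(M\) before deleting \(M\). To run the induction you need \(\alpha_{\textnormal{m}}(\mathcal{G};M)=\alpha_{\textnormal{m}}(\alpha_{\textnormal{m}}(\mathcal{G};m);M\backslash m)\). That is the case \(M_1=\{m\}\) of the proposition itself, and once you have it, the pairwise commutation is unnecessary.
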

%
%
%
\begin{lemma}\label{conmar}
    Consider the graph \(\mathcal{G}\) over \(V\) and subsets \(L,U\subseteq V\) such that \(L\subseteq U\). The  node \(i\) not being adjacent to the node \(j\) in \(\text{max}(\alpha_{\textnormal{m}}(\alpha_{\textnormal{c}}(\mathcal{G};L); V\backslash U))\) is equivalent to the existence of a set \(S\subseteq V\) such that \(i\perp_{\mathcal{G}} j\cd S\) and \(L\subseteq S \subseteq U\). 
\end{lemma}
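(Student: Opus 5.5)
Write \(\mathcal{H}=\alpha_{\textnormal{m}}(\alpha_{\textnormal{c}}(\mathcal{G};L); V\backslash U)\). This graph is over the node set \(U\backslash L\); implicitly \(i,j\in U\backslash L\). The plan is to turn non-adjacency in \(\textnormal{max}(\mathcal{H})\) into a separation statement in \(\mathcal{H}\), and then pull that separation back to \(\mathcal{G}\) using the defining properties \eqref{propmarg} and \eqref{propcond}. Two properties of \(\textnormal{max}\) listed in Section \ref{bg} are needed. First, \(\textnormal{max}(\mathcal{H})\) is maximal. Second, it is Markov equivalent to \(\mathcal{H}\). Consequently, \(i\) and \(j\) are non-adjacent in \(\textnormal{max}(\mathcal{H})\) if and only if \(i\perp_{\mathcal{H}} j\cd S'\) for some \(S'\subseteq (U\backslash L)\backslash\{i,j\}\).

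For the forward direction, maximality (Definition \ref{maxim}) applied to \(\textnormal{max}(\mathcal{H})\) produces such an \(S'\) with \(i\perp_{\textnormal{max}(\mathcal{H})} j\cd S'\). Markov equivalence then transfers this to \(i\perp_{\mathcal{H}} j\cd S'\).

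For the reverse direction, suppose \(i\perp_{\mathcal{H}} j\cd S'\). By Markov equivalence, \(i\perp_{\textnormal{max}(\mathcal{H})} j\cd S'\). Adjacent nodes are never separated, so \(i\) and \(j\) cannot be adjacent in \(\textnormal{max}(\mathcal{H})\). Nothing about maximality is needed here beyond this standard fact.

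Next I would unwind \(\mathcal{H}\). Since \(S'\cup\{i,j\}\subseteq V\backslash (L\cup (V\backslash U))\), property \eqref{propmarg} applied to the graph \(\alpha_{\textnormal{c}}(\mathcal{G};L)\) with \(M=V\backslash U\) gives
\[
i\perp_{\mathcal{H}} j\cd S' \iff i\perp_{\alpha_{\textnormal{c}}(\mathcal{G};L)} j\cd S' .
\]
Applying \eqref{propcond} with conditioning set \(L\) then gives
\[
i\perp_{\alpha_{\textnormal{c}}(\mathcal{G};L)} j\cd S' \iff i\perp_{\mathcal{G}} j\cd S'\cup L .
\]

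Setting \(S=S'\cup L\) turns this into exactly the statement of the lemma. The correspondence \(S'\mapsto S'\cup L\) is a bijection between subsets \(S'\subseteq (U\backslash L)\backslash\{i,j\}\) and sets \(S\) with \(L\subseteq S\subseteq U\backslash\{i,j\}\). Its inverse is \(S\mapsto S\backslash L\), and disjointness of \(L\) from \(S'\) is automatic.

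The main point needing care is the bookkeeping of node sets. I must check that the marginalisation is legitimately applied to \(\alpha_{\textnormal{c}}(\mathcal{G};L)\), whose node set is \(V\backslash L\), with \(M=V\backslash U\subseteq V\backslash L\); this holds because \(L\subseteq U\). I must also check that the sets involved are disjoint, as \eqref{propmarg} and \eqref{propcond} require. Finally, \(S\subseteq U\) must be read as allowing \(S\) to exclude \(i\) and \(j\), since separation sets are disjoint from \(\{i,j\}\) by definition.
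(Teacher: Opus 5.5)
Your proposal is correct and follows the paper's proof essentially step for step. Both use maximality together with Markov equivalence of \(\textnormal{max}(\cdot)\) to turn non-adjacency into a separation in \(\alpha_{\textnormal{m}}(\alpha_{\textnormal{c}}(\mathcal{G};L);V\backslash U)\), then pull it back to \(\mathcal{G}\) via \eqref{propmarg} and \eqref{propcond}, taking \(S=S'\cup L\). Your explicit treatment of the converse (adjacent nodes are never separated) and of the node-set bookkeeping fills in details the paper leaves implicit.
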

\begin{proof}
    We have the following equivalence. Let \(\mathcal{G}_0\) denote the graph \(\alpha_{\textnormal{m}}(\alpha_{\textnormal{c}}(\mathcal{G};L);V\backslash U)\).
    \begin{align*}
     i \textnormal{ and } j \textnormal{ not adjacent} \text{ in max}(\mathcal{G}_0)  \quad &\iff \quad i\perp_{\text{max}(\mathcal{G}_0)}j\cd A \textnormal{ for some } A\subseteq U\backslash L \nonumber\\
     &\iff \quad i\perp_{\alpha_{\textnormal{m}}(\alpha_{\textnormal{c}}(\mathcal{G};L); V\backslash U)} j\cd A\nonumber\\
     &\iff \quad i\perp_{\alpha_{\textnormal{c}}(\mathcal{G};L)} j\cd A\\
     &\iff \quad i\perp_{\mathcal{G}} j\cd A\cup L, 
     \end{align*}
     where the first equivalence follows by Definition \ref{maxim} of maximal graphs, the second equivalence follows by Markov equivalence of \(\textnormal{max}(\mathcal{G}')\) and \(\mathcal{G}'\) for any graph \(\mathcal{G}'\) and the last two equivalences follow by the definitions of operations \(\alpha_{\textnormal{m}}\) in \eqref{propmarg} (since \(i,j, A \subseteq (V\backslash L)\big\backslash (V\backslash U)\)) and \(\alpha_{\textnormal{c}}\) in \eqref{propcond} (since \(i,j, A\subseteq V\backslash L\)). Taking \(S=A\cup L\), we have \(i\perp_{\mathcal{G}} j\cd S\) and \(L \subseteq S \subseteq U\).
\end{proof}

\begin{proof}[Proof of Theorem \ref{CCSworks}]
    Let the joint distribution of \(X^{\textnormal{do}(C)}_O, X_C, X_U\) and potentially other variables be Markovian to some anterial graph \(\mathcal{G}^*\) (over nodes \(V^*\)) and denote the nodes representing \(X^{\textnormal{do}(C)}_O\) and \(X_C\) as \(O^{\textnormal{do}(C)}\) and \(C\) respectively.
    To select an adjustment set \(S\), 
    it suffices to select a set \(S\) such that \(O^{\textnormal{do}(C)}\perp_{\mathcal{G}^*} C\cd S\).

    Consider the graph \(\mathcal{G}^*_0= \alpha_{\textnormal{m}}(\alpha_{\textnormal{c}}(\mathcal{G}^*;L);V^*\backslash (U\cup C\cup O^{\textnormal{do}(C)}))\) in Algorithm \ref{ccsalg}. Lemma \ref{conmar} guarantees that non-adjacency of \(O^{\textnormal{do}(C)}\) and \(C\) in \(\textnormal{max}(\mathcal{G}^*_0)\) guarantees the existence of an adjustment set \(S\) such that \(L\subseteq S\subseteq U\). 

    Let \(S_n\) denote the candidate output set \(S\) at step \(n\) of Algorithm \ref{ccsalg}. Given the node \(i_n\in S_n\backslash L\) at step \(n\), if \(O^{\textnormal{do}(C)}\) is not adjacent to \(C\) in \(\text{max}(\alpha_{\textnormal{m}}(\mathcal{G}^*_n;i_n))\), then by Proposition \ref{compose}, we have \(\alpha_{\textnormal{m}}(\mathcal{G}^*_n;i_n)=\alpha_{\textnormal{m}}(\alpha_{\textnormal{c}}(\mathcal{G}^*; L);(V^*\backslash (S_n\cup C\cup O^{\textnormal{do}(C)}))\cup i_n)\). Applying Lemma \ref{conmar}, there exists a set \(S\) such that \(O^{\textnormal{do}(C)}\perp _{\mathcal{G}^*}C\cd S\) and \(L\subseteq S\subseteq S_{n}\backslash i_n\subset U\). Thus, at step \(n\)  when some node \(i_n\) can be excluded from \(S_n\), Algorithm \ref{ccsalg} guarantees the existence of an adjustment set \(S\) that satisfies \(L \subseteq S\subseteq U\) and decreases the cardinality of \(S_n\) by one. Consider the first step \(n\) such that there are no nodes that can be excluded from \(S_n\), then by Proposition \ref{compose} and Lemma \ref{conmar}, it holds that \(O^{\textnormal{do}(C)}\not\perp_{\mathcal{G}^*}C \cd S\) for every set \(S\) such that \(L\subseteq S\subseteq S_n\backslash i_n \) for any node \(i_n\in S_n\). However, the existence of a set \(S\subseteq S_n\) satisfying \(O^{\textnormal{do}(C)}\perp_{\mathcal{G}^*}C \cd S\) is ensured in step \(n-1\), thus \(S_n\) must be an adjustment set.


    Let the joint distribution of \(X^{\textnormal{do}(C)}_O, X_C, X_U\) and potentially other variables be faithful to the anterial graph \(\mathcal{G}^*\). Consider again the first step \(n\) such that no nodes can be excluded from \(S_n\), then \(O^{\textnormal{do}(C)}\not\perp_{\mathcal{G}^*}C \cd S\) for every set \(S\) such that \(L\subseteq S\subseteq S_n\backslash i_n \) for any node \(i_n\in S_n\). By faithfulness, every set \(S\) such that \(L\subseteq S\subset S_n\) is not an adjustment set. However the existence of an adjustment set \(S\subseteq S_n\) is guaranteed from step \(n-1\), thus \(S_n\) must be a minimal adjustment set.
    
    If \(O^{\textnormal{do}(C)}\) is  adjacent to \(C\) in \(\text{max}(\mathcal{G}_0^*)\), then by Lemma \ref{conmar}, we have \(O^{\textnormal{do}(C)}\not\perp_{\mathcal{G}^*}C\cd S\) and by faithfulness, \(O^{\textnormal{do}(C)}\cancel{\ci} C\cd S\),  for every set \(S\) such that \(L\subseteq S\subseteq U\), thus an adjustment set \(S\) satisfying \(L\subseteq S \subseteq U\) does not exist.
    %
    %
    %
    %
\end{proof}

\subsection{The anterial assumption and causal interpretability}\label{sec:anterial}
In general, the corresponding graph of a structural equilibrium model is a chain mixed graph (see Definition \ref{def cmg}), to which the induced joint distribution \(P\) may not be Markovian. We can generalise Algorithm \ref{obsalg} to Algorithm \ref{obsalgant}, which, in general, returns a Markovian chain mixed graph that \(P\) is Markovian to.
\begin{algorithm}[h]
   \caption{Modification of Algorithm \ref{obsalg} for chain mixed graphs}\label{obsalgant}
   {\bfseries Input:} Structural equilibrium model inducing a joint distribution \(P\).
   
 {\bfseries Output:} Chain-connected chain mixed graph \(\mathcal{G}\).
\begin{algorithmic}[1]

\item Perform Step 1 of Algorithm \ref{obsalg} with \(P_{\tau}(\cdot \cd x_{\textnormal{ant}(\tau)})\) in place of \(J^{\tau}(\cdot\cd x_{\text{pa}(\tau})\).
    \item Perform Step 2 of Algorithm \ref{obsalg} with \(P_{i}(\cdot \cd x_{(\tau\backslash i)\cup\textnormal{ant}(\tau)})\) in place of \(J^{\tau}_i(\cdot\cd x_{\tau\backslash i};x_{\textnormal{pa}(\tau)})\).
    \item Perform Step 3 of Algorithm \ref{obsalg}.

\item Call the resulting graph \(\mathcal{G}'\). In \(\mathcal{G}'\), for nodes \(i\) and \(j\) such that
\begin{enumerate}
    \item \(i\in \textnormal{ant}(j)\backslash \textnormal{pa}(j)\), and 
    \item there exists a primitive inducing path \(\pi=\langle i', q_1,\ldots
    , q_n, j \rangle\) between nodes \(i'\in \tau(i)\) and \(j\) such that \(i'\rightarrow q_1\).
\end{enumerate}
 if \(i\cancel{\ci} j\cd \textnormal{ant}(i,j)\) holds for \(P\), then create the edge \(i\rightarrow q_1\).
\end{algorithmic}
\end{algorithm}

When the corresponding graph is anterial, by equality \eqref{argue} in the proof of Theorem \ref{correctgenant}, the following equality holds for part \(\tau\) of the structural equilibrium model,
\begin{align*}
    P_{\tau}(\cdot \cd x_{\textnormal{ant}(\tau)})= J^{\tau}(\cdot\cd x_{\text{pa}(\tau)}),
\end{align*}
which also implies that \(i\ci j\cd \textnormal{ant}(i,j)\) holds for nodes \(i\) and \(j\) in Step 4 of Algorithm \ref{obsalgant}. Thus, when the corresponding graph is anterial, Algorithm \ref{obsalgant} reduces to Algorithm \ref{obsalg}. 
\begin{remark}The additional Step 4 in Algorithm \ref{obsalgant} allows the argument in the proof of Theorem \ref{correctgenant} using \(\mathcal{G}_\textnormal{col}\) to be applied to the case when \(i\in \textnormal{ant}(j)\). The rest of the proof of Theorem \ref{correctgenant} follows. \erk
\end{remark}

The additional edges created in Step 4 of Algorithm \ref{obsalgant} depend on the joint distribution \(P\) which depends on the actual coupling, and not just the dependencies of the error variables. Thus, the edges in Step 4 of Algorithm \ref{obsalgant} lack mechanistic causal interpretability.

Proposition \ref{intrel} implies that Algorithm \ref{intalg} obtains the corresponding graph \(\mathcal{G}_{\textnormal{do}}\) of the intervened structural equilibrium model from the observational graph \(\mathcal{G}\), \emph{without} referring to the interventional distribution \(P^{\textnormal{do}(C)}\). This is possible since in the case of anterial graphs, Algorithm \ref{obsalg} constructs the corresponding \(\textnormal{G}_{\textnormal{do}}\) by only referring to the law of functional assignments \(J^{\tau, \textnormal{do}(C)}(\cdot\cd x_{\textnormal{pa}(\tau)})\), which behaves as described in \eqref{keyeqint}, as accounted for by Algorithm \ref{intalg} in a local and mechanistic manner. However, in the case of chain mixed graphs, Algorithm \ref{obsalgant} constructs \(\mathcal{G}_{\textnormal{do}}\) \emph{with} reference to the interventional distribution \(P^{\textnormal{do}(C)}\). It is unclear how to graphically relate \(\mathcal{G}\) and \(\mathcal{G}_\textnormal{do}\), since we would have to account for how conditional independencies in \(P^{\textnormal{do}(C)}\) relate to conditional independencies in \(P\), and this relationship depends on the actual coupling, and not just dependencies of the error variables as in the case of anterial graphs. Contrast the illustration below with Figure \ref{goodantfig}.


\begin{center}
     \begin{tikzpicture}[]

\node (e1) at (0,-1.5) {Markovian chain mixed graph \(\mathcal{G}\)};
\node (e2) at (0,0) {Observational model, \(P\)};
\node (e3) at (7,0) {Intervened model, \(P^{\textnormal{do}(C)}\)};
\node (e4) at (7,-1.5) {Markovian chain mixed graph \(\mathcal{G}_\textnormal{do}\) };

\draw[<-] (e1) -- (e2) node[midway, left] {Algorithm \ref{obsalgant}};
\draw[->] (e2) -- (e3) node[midway, above] { intervene };
\draw[->] (e3) -- (e4) node[midway, right] {Algorithm \ref{obsalgant} };
\draw[-, draw=none] (e1) -- (e4) node[midway] {\(\sim\)} node[midway, above=0.3cm] {?} node[midway, below=0.3cm] {\scriptsize depends on actual coupling of errors};
    \end{tikzpicture}
\end{center}



\subsection{Including all nodes in SWIG using parallel-worlds graphs}
\label{subappsec}

Consider the case of DAGs.
Modifying the parallel-worlds graph approach \citep{multinetwork}, 
we present an extension of the construction in Algorithm \ref{newswagalg} that includes all the nodes in SWIGs, complementing the comparison between SWIGs and counterfactual graphs in \cite{swig}.

Given a DAG \(\mathcal{G}\), it is possible to order all the nodes \(j\) and \(k\) in the treatment set \(C\) such that \(k>j\) if there does not exist a directed path from \(j\) to \(k\) in \(\mathcal{G}\). With respect to such an ordering of \(C=\{x_1, \ldots, x_n\}\), we can construct all the parallel-worlds graphs \(\mathcal{G}, \textnormal{do}_{[1]}(\mathcal{G}), \ldots,  \textnormal{do}_{[n]}(\mathcal{G})\), where \(\textnormal{do}_{[i]}(\mathcal{G})=\textnormal{do}_{\{x_1, \ldots, x_i\}}(\mathcal{G})\) and by convention \(\textnormal{do}_{[0]}(\mathcal{G})=\mathcal{G}\). In each graph \(\textnormal{do}_{[i]}(\mathcal{G})\), nodes that correspond to node \(j\) in \(\mathcal{G}\) are relabelled as \(j^{[i]}\) (with \(j^{[0]}=j\)). 

The nodes in the parallel-worlds graphs are then merged and removed as follows:
\begin{enumerate}
    \item Consider \(\textnormal{de}_i\), the set of descendants of \(x^{[i]}_i\), including \(x^{[i]}_i\), that are not descendants of \(\{x^{[i]}_{i+1}, \ldots, x^{[i]}_n\}\) in \(\textnormal{do}_{[i]}(\mathcal{G})\). Let \(\textnormal{de}_0\) be the set of nodes that are not descendants of \(\{x_{1}, \ldots, x_n\}\) in \(\mathcal{G}\).
    \item Repeatedly merge, for each \(i\in \{0,\ldots, n-1\}\), as follows. Consider the node \(j^{[i]}\in \textnormal{de}_i\). 
    \begin{itemize}
        \item If \(j\not\in C\), then merge nodes \(j^{[k]}\) of graphs \(\textnormal{do}_{[k]}(\mathcal{G})\) for all \(k\in \{i+1, \ldots,  n\}\) into node \(j^{[i]}\) as in Step 2 of Algorithm \ref{newswagalg}.
        \item If \(j=x_\ell \in C\), then merge nodes \(x_\ell^{[k]}\) of graphs \(\textnormal{do}_{[k]}(\mathcal{G})\) for all \(k\in \{i+1, \ldots,  \ell -1\}\) into node \(j^{[i]}\) as in Step 2 of Algorithm \ref{newswagalg}.
    \end{itemize}
    \item Take the subgraph of the nodes in \(\bigcup_{i=0}^{n-1}\textnormal{de}_i\) and the remaining nodes in \(\textnormal{do}_{[n]}(\mathcal{G})\).
\end{enumerate}
Call the resulting subgraph \(\mathcal{G}(C)\). See Figure \ref{fullswigext} for an illustration for this extension of Algorithm \ref{newswagalg} in the case of DAGs.

\begin{figure}[h]
\begin{center}
\centerline{\includegraphics[width=\columnwidth]{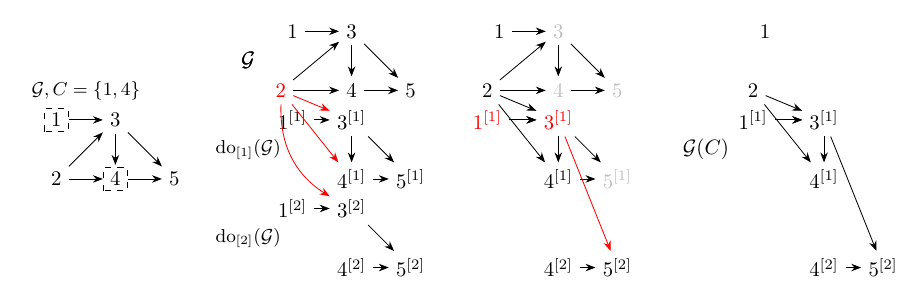}}
\caption{Left: DAG \(\mathcal{G}\) with \(C=\{1,4\}\). Middle 2 graphs: Sequentially merging nodes from different graphs, as in step 2 of Algorithm \ref{newswagalg}. With \(\textnormal{de}_0 =\{1, 2\}\) and \(\textnormal{de}_1 =\{1^{[1]}, 3^{[1]}, 4^{[1]}\}\), the nodes \(2^{[1]}, 2^{[2]}\) are merged into \(2\) and the nodes \(1^{[2]}, 3^{[2]}\) are merged into \(1^{[1]}, 3^{[1]}\) respectively. Right: Output \(\mathcal{G}(C)\) after taking the subgraph over nodes in \(\textnormal{de}_0\cup \textnormal{de}_1\) and the remaining nodes in \(\textnormal{do}_{[2]}(\mathcal{G})\).}
\label{fullswigext}
\end{center}
\end{figure}
We then have the following relating this construction to single-world intervention graphs \citep{swig}.

\begin{proposition}\label{margswif}
    Given a DAG \(\mathcal{G}\) and an ordered treatment set \(C=\{x_1,\ldots, x_n\}\), the output \(\mathcal{G}(C)\) is the single-world intervention graph.
\end{proposition}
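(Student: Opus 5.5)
We prove Proposition \ref{margswif} by showing that \(\mathcal{G}(C)\) and the SWIG of \cite{swig} have the same node set and the same edge set. We recall the SWIG construction for an ordered treatment set \(C=\{x_1,\ldots,x_n\}\). Each \(x_i\) is split into a random half \(x_i^{\textnormal{do}(\{x_1,\ldots,x_{i-1}\})}\), which keeps all incoming edges, and a fixed half \(x_i^{\textnormal{do}(\{x_1,\ldots,x_i\})}\), which keeps all outgoing edges. Every other node \(j\) is relabelled by the minimal intervention that affects it, namely \(j^{\textnormal{do}(\{x_1,\ldots,x_i\})}\), where \(i\) is the largest index such that \(j\) is a descendant of \(x_i\) (and \(i=0\) if \(j\) is a descendant of no treatment). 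The key identification is that \(j^{[i]}\) in our construction corresponds to the SWIG label \(j^{\textnormal{do}(\{x_1,\ldots,x_i\})}\).

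\textbf{Nodes.} The first step is to show that the sets \(\textnormal{de}_0,\ldots,\textnormal{de}_{n-1}\), together with the nodes retained from \(\textnormal{do}_{[n]}(\mathcal{G})\), partition the node set in exactly the way the SWIG labels it. Take \(j\notin C\). By the ordering of \(C\), the graph \(\textnormal{do}_{[i]}(\mathcal{G})\) contains the directed paths of \(\mathcal{G}\) from \(x_i\) that avoid \(x_1,\ldots,x_{i-1}\), so \(j^{[i]}\in\textnormal{de}_i\) exactly when \(i\) is the last treatment having \(j\) as a descendant. The merging step identifies \(j^{[k]}\) for \(k>i\) with \(j^{[i]}\), and this is justified because later interventions cannot affect \(j\). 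The same argument as for Step 2 of Algorithm \ref{newswagalg} gives \(X_j^{[k]}=X_j^{[i]}\). For \(x_\ell\in C\), the copies \(x_\ell^{[k]}\) with \(k<\ell\) collapse onto a single random node, and this node is the SWIG random half \(x_\ell^{\textnormal{do}(\{x_1,\ldots,x_{\ell-1}\})}\). The copy \(x_\ell^{[\ell]}\), and every later copy, has no parents and sits at the fixed value \(a_\ell\); it plays the role of the fixed half. A short induction on \(i\) over the merging sweep establishes this bijection, and the ordering of \(C\) is what makes each \(\textnormal{de}_i\) well defined.

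\textbf{Edges.} Next we check edges. By the merging rule of Step 2 of Algorithm \ref{newswagalg}, every edge \(k\to j\) of \(\mathcal{G}\) is carried to exactly one edge of \(\mathcal{G}(C)\). Its head is the copy of \(j\) that carries the correct label, and its tail is the copy of \(k\) in the same world, or the merged representative of that copy. When \(k=x_\ell\), the tail is the fixed copy \(x_\ell^{[\ell]}\) if \(j\) lies in a world of index at least \(\ell\), and the random copy otherwise. This matches the SWIG rule that outgoing edges leave the fixed half. Bidirected edges do not occur, because \(\mathcal{G}\) is a DAG and Algorithm \ref{newswagalg} in the DAG case marginalises the common errors. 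Taking the subgraph in Step 3 then deletes the redundant copies, which are copies of nodes no longer reachable in their world, and leaves precisely the SWIG edges.

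\textbf{Main obstacle.} The hardest part is the bookkeeping for treatment nodes, in particular showing that no spurious edge survives between worlds. The dangerous case is an edge \(x_\ell^{[k]}\to j^{[k]}\) with \(k<\ell\) and \(j\) a descendant of \(x_\ell\). We would handle this case first. Since \(j\) is a descendant of \(x_\ell\) and \(\ell>k\), the node \(j^{[k]}\) does not lie in \(\textnormal{de}_k\), so it is either removed by the subgraph step or merged downward into a later world. Following the merge, the edge ends at \(j^{[\ell]}\) with tail at the fixed \(x_\ell^{[\ell]}\), as the SWIG requires.
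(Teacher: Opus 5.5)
Your plan matches the paper's proof. You first match the retained copies \(\textnormal{de}_0,\ldots,\textnormal{de}_{n-1}\) and the surviving nodes of \(\textnormal{do}_{[n]}(\mathcal{G})\) with the relabelled SWIG nodes. You then argue that the parents of a retained copy \(j^{[i]}\) are its world-\(i\) parents \(k^{[i]}\), each identified with its retained representative \(k^{[m]}\), \(m\le i\). The step that fails is the one you flag as the main obstacle. The merging rule only identifies copies of \(j\) in worlds \emph{later} than the world \(i\) of its retained copy with \(j^{[i]}\). A copy \(j^{[k]}\) with \(k<i\) is a strict descendant of \(x_i\) in \(\textnormal{do}_{[k]}(\mathcal{G})\), so it lies in no \(\textnormal{de}_k\). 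It is never merged, and it is deleted in Step 3 together with every edge incident to it. So ``merged downward into a later world'' never happens. The claim that ``following the merge, the edge ends at \(j^{[\ell]}\) with tail at the fixed \(x_\ell^{[\ell]}\)'' also has no mechanism behind it, because merging never moves a tail from a random copy to a fixed one. Had \(j^{[k]}\) been merged, the tail \(x_\ell^{[k]}\) with \(k<\ell\) would have landed on the random half of \(x_\ell\), producing exactly the spurious edge you want to exclude. The same error sits in ``the copies \(x_\ell^{[k]}\) with \(k<\ell\) collapse onto a single random node''. Only copies from the random half's own world up to \(\ell-1\) are merged; earlier ones are deleted. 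Taken literally, your version breaks your own edge count. If \(x_1\to x_2\) in \(\mathcal{G}\), collapsing \(x_2^{[0]}\) onto the random half \(x_2^{[1]}\) turns the world-\(0\) edge into an edge \(x_1\to x_2^{[1]}\) out of the random half of \(x_1\). That edge would sit alongside the correct edge \(x_1^{[1]}\to x_2^{[1]}\).

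The repair is short. The random copy of every child \(j\) of \(x_\ell\) is retained in a world \(i\ge\ell\). So the ``random copy otherwise'' case in your edge paragraph is empty, and you should say so: this is precisely why no edge leaves a random half. The SWIG edge is produced in world \(i\) itself, as \(x_\ell^{[i]}\to j^{[i]}\), where \(x_\ell^{[i]}\) is already a fixed copy identified with \(x_\ell^{[\ell]}\). The head is therefore \(j^{[i]}\), which need not be \(j^{[\ell]}\). Copies of this edge in worlds after \(i\) coincide with it after merging, and copies in worlds before \(i\) disappear with \(j^{[k]}\). One more point: you rely on identifying the later fixed copies \(x_\ell^{[k]}\), \(k>\ell\), with \(x_\ell^{[\ell]}\), as do the paper's proof and Figure~\ref{fullswigext}. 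The second bullet of the merging rule as written stops at \(\ell-1\) and does not cover this, so state it explicitly.
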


\begin{proof}

    Observe that in a single-world intervention graph, after splitting the node \(x_i\) in \(C\), nodes \(j\) that are descendants of \(x_i\), including \(x_i\), but are not descendants of \(\{x_{i+1}, \ldots, x_n\}\) are relabelled as \(j^{[i]}\). These coincide with the nodes in \(\textnormal{de}_i\). 
    While nodes that are not relabelled coincide 
    with the nodes in \(\textnormal{de}_0\) from \(\mathcal{G}\). Thus, since \(\mathcal{G}(C)\) is a subgraph over \(\bigcup_{i=0}^{n-1}\textnormal{de}_i\) and the remaining nodes in \(\textnormal{do}_{[n]}(\mathcal{G})\), the graph \(\mathcal{G}(C)\) contains exactly all the nodes in the single-world intervention graph.

    In the single-world intervention graph, the parent \(k^{[m]}\) (for \(m\leq  i\)) of the node \(j^{[i]}\) corresponds to the parents \(k^{[i]}\) of node \(j^{[i]}\) in \(\textnormal{do}_{[i]}(\mathcal{G})\). Since \(k^{[i]}\) is merged with \(k^{[m]}\) in \(\mathcal{G}(C)\),  the parental relationships in \(\mathcal{G}(C)\) are the same with that in the single-world intervention graph.\qedhere
    
\end{proof}

Repeating the same argument for Proposition \ref{propswigcg}, we see that taking the subgraph over \(\bigcup_{i=0}^{n-1}\textnormal{de}_i\) and the remaining nodes in \(\textnormal{do}_{[n]}(\mathcal{G})\) amounts to marginalising the merged graph. Thus, Propopostion \ref{margswif} formally relates counterfactual graphs \citep{cfgraph} and single-world intervention graphs \citep{swig} by marginalisation.

\newpage

\subsection{Simulation Results}\label{sec:simres}
\begin{figure}[h]
\centerline{\includegraphics[width=0.85\linewidth]{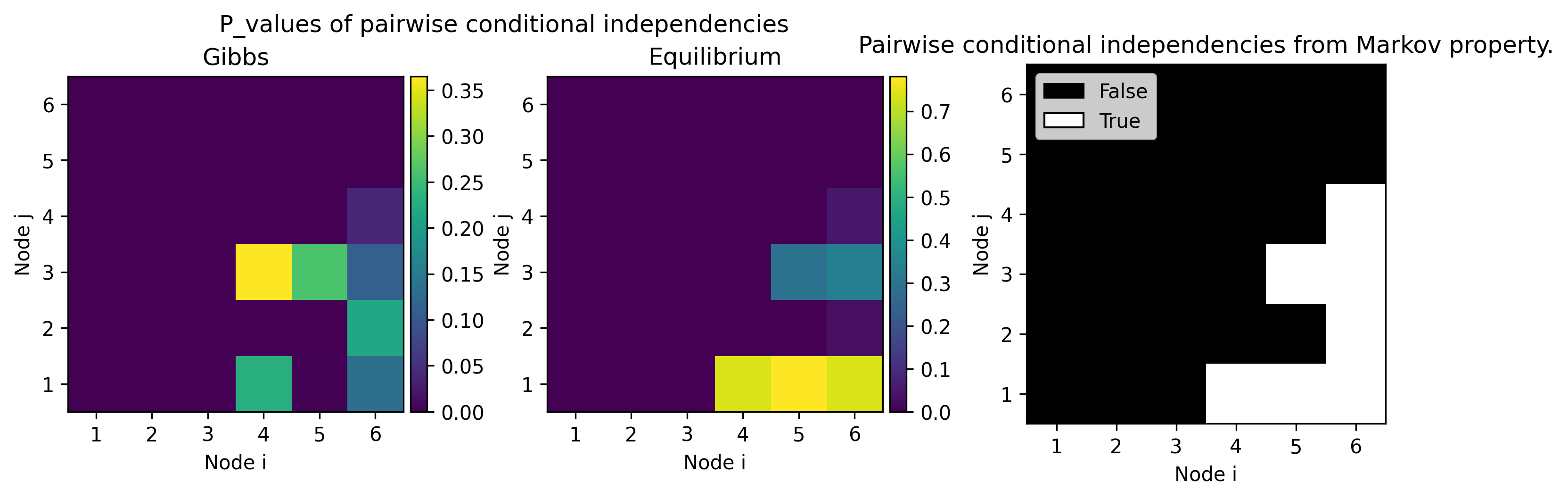}}
\caption{P-values from hypothesis testing of pairwise conditional independencies of the form \(i\ci j\cd \textnormal{ant}(i,j)\). The data is sampled from the structural equilibrium model in Figure \ref{simsetup1} using, Left: Gibbs sampling. Middle: the equilibrium distributions \(J^\tau(\cdot \cd x_{\textnormal{pa}(\tau)})\). Right: Pairwise conditional independencies of the form \(i\ci j\cd \textnormal{ant}(i,j)\) if Markov property w.r.t.\ the corresponding \(\mathcal{G}_1\) holds. }
\label{g1res}
\end{figure}
 \begin{figure}[h]
\centerline{\includegraphics[width=0.85\linewidth]{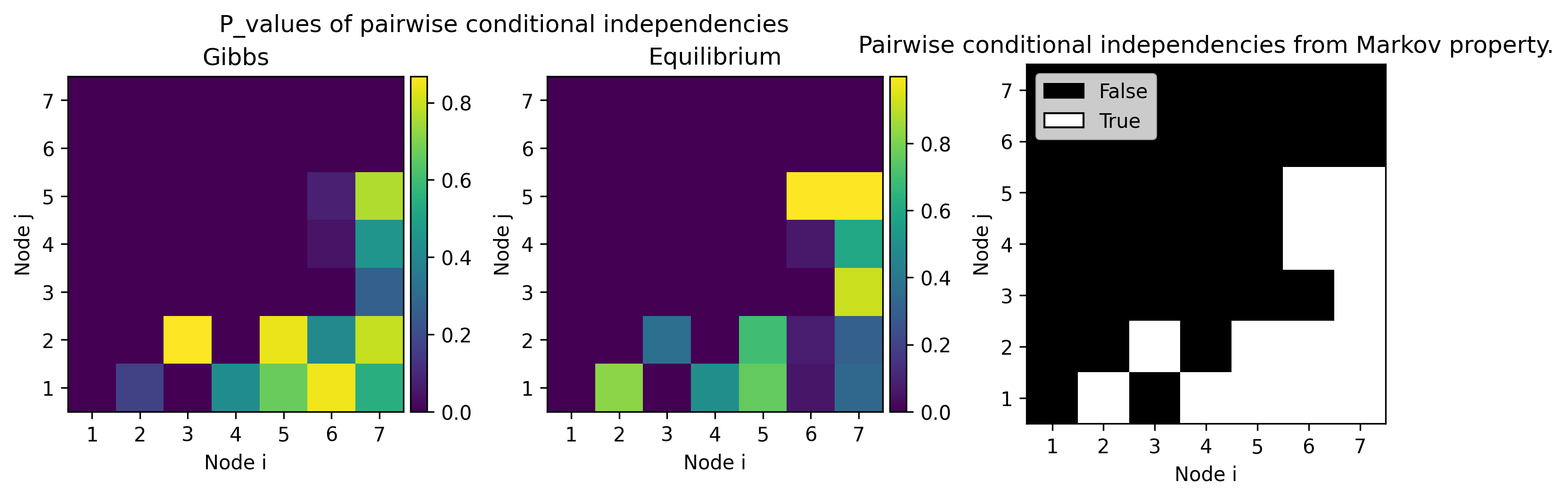}}
\caption{P-values from hypothesis testing of pairwise conditional independencies of the form \(i\ci j\cd \textnormal{ant}(i,j)\). The data is sampled from the structural equilibrium model in Figure \ref{simsetup2} using, Left: Gibbs sampling. Middle: the equilibrium distributions \(J^\tau(\cdot \cd x_{\textnormal{pa}(\tau)})\). Right: Pairwise conditional independencies of the form \(i\ci j\cd \textnormal{ant}(i,j)\) if Markov property w.r.t.\ the corresponding \(\mathcal{G}_2\) holds. }
\label{g2res}
\end{figure}
 \begin{figure}[H]
\centerline{\includegraphics[width=0.85\linewidth]{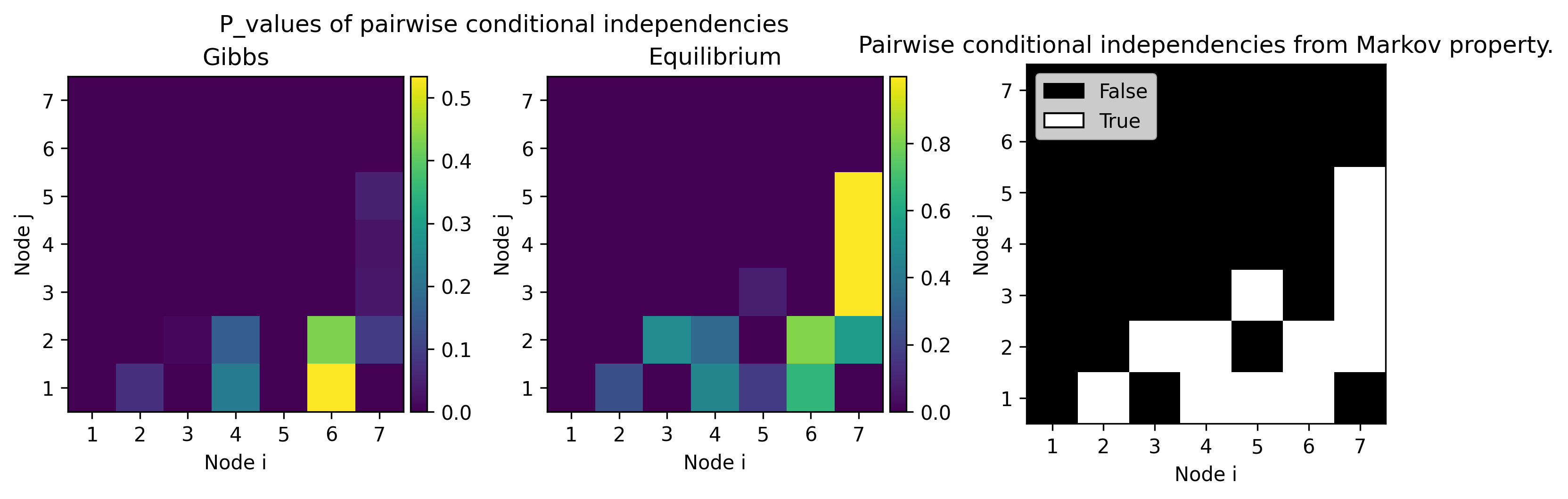}}
\caption{P-values from hypothesis testing of pairwise conditional independencies of the form \(i\ci j\cd \textnormal{ant}(i,j)\). The data is sampled from the structural equilibrium model in Figure \ref{simsetup3} using, Left: Gibbs sampling. Middle: the equilibrium distributions \(J^\tau(\cdot \cd x_{\textnormal{pa}(\tau)})\). Right: Pairwise conditional independencies of the form \(i\ci j\cd \textnormal{ant}(i,j)\) if Markov property w.r.t. the corresponding \(\mathcal{G}_3\) holds. }
\label{g3res}
\end{figure}
\end{appendices}

\bibliographystyle{abbrvnat}
\bibliography{oup-authoring-template}

\end{document}